\newcommand\swap{\mathtt{SWAP}}
\newcommand\cnot{\mathtt{CNOT}}
\newcommand\PauliX{Pauli-$\mathtt X$ }
\newcommand\PauliY{Pauli-$\mathtt Y$ }
\newcommand\PauliZ{Pauli-$\mathtt Z$ }
\newcommand\cx{\mathtt{CX}}
\newcommand\cz{\mathtt{CZ}}
\newcommand\XX{\mathtt{X}} 
\newcommand\YY{\mathtt{Y}} 
\newcommand\ZZ{\mathtt{Z}} 
\newcommand\UU{\mathtt{U}} 
\newcommand\II{\mathtt{I}} 
\newcommand\HH{\mathtt{H}} 
\newcommand\PP{\mathtt{P}} 
\renewcommand\phi{\varphi}
\renewcommand\epsilon{\varepsilon}
\renewcommand\geq{\geqslant}
\renewcommand\leq{\leqslant}
\newcommand\Z{\mathbb{Z}}
\newcommand\C{\mathbb{C}}
\newcommand\F{\mathbb{F}_{2}}
\newcommand\ee{\mathrm e}
\newcommand\ii{\mathrm i}
\newcommand\HS{\mathcal{H}} 
\newcommand\T{t} 
\newcommand\h{\Omega}
\newcommand\symg[1][n]{\mathfrak{S}_{#1}}  
\newcommand\PG[1][n]{\mathcal{E}_{#1}} 
\newcommand\cnotg[1][n]{{\left\langle\cnot\right\rangle}_{#1}} 
\newcommand\czg[1][n]{{\left\langle\cz\right\rangle}_{#1}} 
\newcommand\BG[1][n]{\mathcal{B}_{#1}} 
\newcommand\czxpg[1][n]{\left\langle\PP,\cz,\cnot\right\rangle_{#1}} 
\newcommand\czpg[1][n]{\left\langle\PP,\cz\right\rangle_{#1}} 
\newcommand\SL[1][n]{\mathrm{SL}_{#1}(\mathbb{F}_2)} 
\newcommand\GL[1][n]{\mathrm{GL}_{#1}(\mathbb{F}_2)} 
\newcommand\UG[1][2n]{\mathcal{U}_{2^n}} 
\newcommand\ctopzx{\mathtt{C\text{-}to\text{-}PZX}}
\newcommand\ctogpzx{\mathtt{C\text{-}to\text{-}GenPZX}}
\newcommand\BtoB{\mathtt{B\text{-}to\text{-}B_{\mathrm{red}}}}
\newcommand\AtoX{\mathtt{A\text{-}to\text{-}\cnot}}
\newcommand\red[1]{\textcolor{red}{#1}}
\newcommand\blue[1]{\textcolor{blue}{#1}}
\newcommand\vct[1]{\mathbf{#1}}
\newcommand\mat[1]{\mathbf{#1}}
\newcommand\intnf{intermediate form}
\newcommand\nf{normal form{ }}
\newcommand\pzx{PZX form{ }}
\newcommand\gpzx{GenPZX form{ }}
\newtheorem{example}{Example}
\newtheorem{theo}[example]{Theorem}
\newtheorem{prop}[example]{Proposition}
\newtheorem{lem}[example]{Lemma}
\newtheorem{defi}[example]{Definition}
\newtheorem{rem}[example]{Remark}
\begin{document}
\setlength\parindent{0mm}

\overfullrule=0mm
\floatstyle{boxed} 
\restylefloat{figure}

\title{Reduced quantum circuits for stabilizer states and graph states}

\author{Marc Bataille \\ marc.bataille1@univ-rouen.fr \\
  \\ LITIS laboratory, Universit\'e Rouen-Normandie \thanks{685 Avenue de l'Universit\'e, 76800 Saint-\'Etienne-du-Rouvray. France.}}

\date{}

\maketitle

\begin{abstract}
  We start by studying  the subgroup structures  underlying stabilizer circuits and we use our results to propose a new \nf  for stabilizer circuits. This \nf is computed by induction using simple conjugation rules in the Clifford group. It has shape  CX-CZ-P-H-CZ-P-H, where CX (resp. CZ) denotes a layer of $\cnot$ (resp. $\cz$) gates, P a layer of phase gates and H a layer of Hadamard gates. Then we consider a normal form for stabilizer states and we show how to reduce the two-qubit gate count in circuits implementing graph states. Finally we carry out a few numerical tests on classical and quantum computers in order to show the practical utility of our methods. All the algorithms described in the paper are implemented in the C language as a Linux command available on GitHub.
\end{abstract}

\section{Introduction}
In Quantum Computation, any unitary operation can be approximated to arbitrary accuracy using $\cnot$ gates together with Hadamard, Phase, and  $\pi/8$  gates (see Figure \ref{univers} for a definition of these gates and \cite[Section 4.5.3]{2011NC} for a proof of this result). Therefore, this set of gates is often called the standard set of universal gates. When we restrict this set to Hadamard, Phase and $\cnot$ gates, we obtain the set of Clifford gates. The Pauli group $\PG$ is the group generated by the Pauli gates acting on $n$ qubits (see Figure \ref{Pauli}) and the normalizer of the Pauli group in the unitary group $\UG$ is called the Clifford group.
In his PhD thesis \cite[Section 5.8]{1997G}, Gottesman gave a constructive proof of the fact that any element of the Clifford group can be expressed, up to a global phase factor, as a product of Clifford gates. He also introduced the stabilizer formalism \cite[Section 10.5.1]{2011NC}, which turned out to be is a very efficient tool to analyze quantum error-correction codes \cite{1997G} and, more generally,  to describe unitary dynamics \cite[Section 10.5.2]{2011NC}. Indeed, the Gottesman-Knill theorem asserts that a stabilizer circuit (\textit{i.e.} a quantum circuit consisting only of Clifford gates) can be simulated efficiently on a classical computer (see \cite[Section 10.5.4]{2011NC} and \cite[p. 52]{1997G}).\smallskip

In the context of quantum stabilizer circuits, the usual denomination \textit{\nf} or \textit{canonical form} just means that any stabilizer circuit is equivalent to a circuit written in this form and that  this equivalent circuit is composed of a bounded  number of Clifford gates.  Generally this equivalent circuit has the shape of a layered decomposition, each layer consisting  in a subcircuit composed of a unique type of quantum gate (\emph{e.g.} only $\cnot$ gates, only phase gates). Of course, one tries to find the shortest and simplest decomposition.  For simplicity and consistency with the previous works on this topic, we continue using the habitual expression \textit{normal form}, although a more meaningful term would be probably better suited.
Due to the importance of the Clifford gates in many fields of Quantum Computation, several normal forms for stabilizer circuits were proposed over the last two decades, with the aim of reducing the gate count in this type of circuits. Indeed, in the experimental quantum computers, the noise in the gate as well as the decoherence time are currently the main causes of their unreliability and it is  therefore imperative to minimize the number of gates in quantum circuits.
The first \nf proposed by Aaronson and Gottesman \cite{2004AG} was successively improved by Maslov and Roetteler \cite{2018MR}, Bravyi and Maslov \cite{2020BM} and
Duncan \textit{et al.} \cite{2020DKPV}. These authors used decomposition methods in the symplectic group over $\F$ in dimension $2n$ \cite{2004AG,2018MR,2020BM} or
ZX-calculus \cite{2020DKPV} in order to compute normal forms. In this paper we provide a new \nf for stabilizer circuits. This form is similar to the most recent ones \cite{2020BM,2020DKPV} but it is slightly simpler and we compute it through an original induction process based on conjugation rules in the Clifford group.\smallskip

Our result is applied to the case of stabilizer states and graph states : we propose a \nf for stabilizer states as well as a new proof of a result due to Van den Nest \textit{et al.} that asserts the local Clifford equivalence of stabilizer states and graph states \cite[theorem 1]{2004VDN}.
Graph states form an important class of stabilizer states that plays a central role in Quantum Information Theory. They are of great use in many fields such as Quantum Computing based on measurements, Quantum Error Correction, or the study of multipartite entanglement (see the numerous references given in the rich introduction of \cite{2006HD}).
We show that it is possible to reduce the two-qubit gate count in a circuit implementing a graph state by using an algorithm proposed in 2004 by Patel \textit{et al.} \cite{2004PMH} together with some conjugation rules in the Clifford group.\smallskip

This article is structured  as follows. Section \ref{background} is a background section on quantum circuits and Clifford gates that will guide the non-specialist reader through the rest of the paper. In Section \ref{groups}, we investigate some remarkable subgroups of the Clifford Group and deduce thereby a first normal form for a particular case of stabilizer circuits. In  Section \ref{NF}, we generalize this form to any stabilizer circuits.
Finally, in Section \ref{gs}, we apply this \nf to
stabilizer states and we propose an original implementation of graph states. We also provide a few simple statistics to evaluate the practical utility of our method
and we consider the case of an implementation of graph states in the publicly available IBM quantum computers.

\section{Quantum circuits and Clifford gates\label{background}}
In this background section we recall classical notions about quantum circuits and Clifford gates.  We also introduce the main notations used in the paper.
\smallskip

Let $n\geq 1$ be the number of qubit of the considered quantum register. We label each qubit from 0 to $n-1$ thus following the usual convention. For coherence we also number the lines and columns of a $n\times n$ matrix from 0 to $n-1$ and we consider that a permutation of the symmetric group $\symg$ is a bijection of $\{0,\cdots,n-1\}$. 
Bold lowercase letters 
denote a bit vector of dimension $n$, \textit{e.g.} $\vct a =[a_0,\dots,a_{n-1}]^t$, where $a_i\in\F$. In particular, the null vector of $\F^n$ is denoted by $\mathbf{0}$.  A bit matrix of dimension $n\times n$ is represented by a bold capital letter (\textit{e. g.} $\mat I$, the identity matrix, $\mat A, \mat B,\dots$). The $\oplus$ symbol denotes the addition in $\F$ (the bitwise XOR) or the symmetric difference between two sets (their union minus their intersection). The $\otimes$ symbol denotes as usual the Kronecker product of matrices or the tensor product of vector spaces. The $\odot$ symbol stands for the Hadamard product of two vectors,
\textit{i.e.} $\vct a \odot \vct b = \sum_{i=0}^{n-1}a_ib_i\vct{e}_i$, where $(\vct{e}_i)_{i=0\dots n-1} $ is the canonical basis of $\F^n$.
Unitary matrices of dimension $2^n\times 2^n$ are represented by italic capital letters (\textit{e.g.} $I$, the identity), generally labelled by one or two integers (\textit{e.g.} $X_i, Z_i, X_{ij}, Z_{ij}$), by a vector (\textit{e.g.} $X_{\vct u}, Z_{\vct v}$) or by a matrix (\textit{e.g.} $Z_{\mat B}, X_{\mat A}$). The complex number equal to $\sqrt{-1}$ is denoted by a roman $\ii$ ($\ii^2=-1$),  while the labels $i,j,k \dots$ (integers) are in italic. Classical unitary operators in dimension 2 or 4 (Figure \ref{Pauli} and \ref{univers}) are represented by typewriter uppercase letters (\textit{e.g.}
$\II, \XX, \YY, \ZZ$ in dimension 2,  $\cnot, \cz, \swap$ in dimension 4).
\smallskip

In Quantum Information Theory, a qubit is a quantum state that represents the basic information
storage unit. This state is described by a ket vector in the Dirac notation $\ket{\psi} = a_0 \ket{0} + a_1\ket{1}$ where $a_0$ and $a_1$ are complex numbers
such that $|a_0|^2 + |a_1|^2= 1$. The value of $|a_i|^2$ represents the probability that measurement produces
the value $i$. The states $\ket{0}$ and $\ket{1}$ form a basis of the Hilbert space $\HS\simeq \C^2$ where a one qubit quantum system evolves.
Operations on qubits must preserve the norm and are therefore described by unitary operators $U$ in the unitary group $\UG$.
In quantum computation, these operations are represented by quantum gates and a quantum circuit is a conventional representation of the sequence  of quantum gates applied to the qubit register over time. In Figure \ref{Pauli}, we recall the definition of the Pauli gates mentioned in the introduction. Notice that the states $\ket{0}$
and $\ket{1}$ are eigenvectors of the \PauliZ operator respectively associated to the eigenvalues 1 and -1, so the standard computational basis $(\ket{0},\ket{1})$ is also called the $\ZZ$-basis. Let $x\in\F$ be a bit. Notice that $\XX\ket{0}=\ket{1}$ and $\XX\ket{1}=\ket{0}$ (\textit{i.e.} $\XX\ket{x}=\ket{1\oplus x}$), hence the \PauliX gate is called the $\mathtt{NOT}$ gate. The phase gate $\PP$ (see Figure \ref{univers}) is defined by $\PP\ket{x}=\ii^x\ket{x}$ and the Hadamard gate $\HH$ creates superposition since $\HH\ket{x}=\frac{1}{\sqrt2}(\ket{0}+(-1)^x\ket{1})$. The following identities are used frequently in the paper. They are obtained by direct computation.
\begin{align}
  & \HH^2=\XX^2=\YY^2=\ZZ^2=\II\label{involutions}\\
  &\XX\ZZ=-\ZZ\XX\label{anticom}\\
  &\YY=\ii\XX\ZZ\label{yixz}\\
  & \HH\ZZ \HH=\XX\label{conj-z-h}\\
  &\PP^2=\ZZ\\
  &\PP\XX\PP^{-1}=\YY\label{conj-x-p}
\end{align}
The Pauli group for one qubit is the group generated by the set $\{\XX,\YY,\ZZ\}$. Any element of this group can be written uniquely in the form $\ii^{\lambda}\XX^a\ZZ^b$,
where $\lambda\in\Z_4$ and $a,b\in\F$.
\begin{figure}[h]
	\begin{center}
	\begin{tikzpicture}[scale=1.500000,x=1pt,y=1pt]
\filldraw[color=white] (0.000000, -7.500000) rectangle (24.000000, 7.500000);
\draw[color=black] (0.000000,0.000000) -- (24.000000,0.000000);
\draw[color=black] (0.000000,0.000000) node[left] {\PauliX\ \ };
\draw (55.00, 0.00) node {$\XX=\left[\begin{array}{cc}0&1\\1&0\end{array}\right]$};
\begin{scope}
\draw[fill=white] (12.000000, -0.000000) +(-45.000000:8.485281pt and 8.485281pt) -- +(45.000000:8.485281pt and 8.485281pt) -- +(135.000000:8.485281pt and 8.485281pt) -- +(225.000000:8.485281pt and 8.485281pt) -- cycle;
\clip (12.000000, -0.000000) +(-45.000000:8.485281pt and 8.485281pt) -- +(45.000000:8.485281pt and 8.485281pt) -- +(135.000000:8.485281pt and 8.485281pt) -- +(225.000000:8.485281pt and 8.485281pt) -- cycle;
\draw (12.000000, -0.000000) node {$X$};
\end{scope}

\end{tikzpicture}\qquad\qquad
\begin{tikzpicture}[scale=1.500000,x=1pt,y=1pt]
\filldraw[color=white] (0.000000, -7.500000) rectangle (24.000000, 7.500000);
\draw[color=black] (0.000000,0.000000) -- (24.000000,0.000000);
\draw[color=black] (0.000000,0.000000) node[left] {\PauliY\ \ };
\draw (55.00, 0.00) node {$\YY=\left[\begin{array}{cc}0&-i\\i&0\end{array}\right]$};
\begin{scope}
\draw[fill=white] (12.000000, -0.000000) +(-45.000000:8.485281pt and 8.485281pt) -- +(45.000000:8.485281pt and 8.485281pt) -- +(135.000000:8.485281pt and 8.485281pt) -- +(225.000000:8.485281pt and 8.485281pt) -- cycle;
\clip (12.000000, -0.000000) +(-45.000000:8.485281pt and 8.485281pt) -- +(45.000000:8.485281pt and 8.485281pt) -- +(135.000000:8.485281pt and 8.485281pt) -- +(225.000000:8.485281pt and 8.485281pt) -- cycle;
\draw (12.000000, -0.000000) node {$Y$};
\end{scope}
\end{tikzpicture}
\begin{tikzpicture}[scale=1.500000,x=1pt,y=1pt]
\filldraw[color=white] (0.000000, -7.500000) rectangle (24.000000, 7.500000);
\draw[color=black] (0.000000,0.000000) -- (24.000000,0.000000);
\draw[color=black] (0.000000,0.000000) node[left] {\PauliZ\ \ };
\draw (55.00, 0.00) node {$\ZZ=\left[\begin{array}{cc}1&0\\0&-1\end{array}\right]$};
\begin{scope}
\draw[fill=white] (12.000000, -0.000000) +(-45.000000:8.485281pt and 8.485281pt) -- +(45.000000:8.485281pt and 8.485281pt) -- +(135.000000:8.485281pt and 8.485281pt) -- +(225.000000:8.485281pt and 8.485281pt) -- cycle;
\clip (12.000000, -0.000000) +(-45.000000:8.485281pt and 8.485281pt) -- +(45.000000:8.485281pt and 8.485281pt) -- +(135.000000:8.485281pt and 8.485281pt) -- +(225.000000:8.485281pt and 8.485281pt) -- cycle;
\draw (12.000000, -0.000000) node {$Z$};
\end{scope}

\end{tikzpicture}\vspace{-4mm}

{ \caption{ The Pauli gates : names, circuit symbols and matrices \label{Pauli}}}
\end{center}
\end{figure}
\smallskip

A quantum system of two qubits $A$ and $B$ (also called  a two-qubit register) lives in a 4-dimensional Hilbert space $\HS_A\otimes\HS_B$ and the computational basis of this space is $(\ket{00}=\ket{0}_A\otimes\ket{0}_B,\ket{01}=\ket{0}_A\otimes\ket{1}_B,\ket{10}=\ket{1}_A\otimes\ket{0}_B,\ket{11}=\ket{1}_A\otimes\ket{1}_B)$.
If $\UU$ is any unitary operator acting on one qubit, a controlled-$\UU$ gate acts on the Hilbert space $\HS_{A}\otimes\HS_{B}$ as follows.
One of the two qubits (say qubit $A$) is the control qubit whereas the other qubit is the target qubit. If the control qubit $A$ is in the state $\ket 1$ then $\UU$ is applied on the target qubit $B$ but when qubit $A$ is in the state $\ket{0}$ nothing is done on qubit $B$.
The $\cnot$ gate (or $\cx$ gate) is the controlled-$\XX$ gate with control on qubit $A$ and target on qubit $B$, so the action of $\cnot$ on a two-qubit register is described by :
$\cnot\ket{00}=\ket{00}, \cnot\ket{01}=\ket{01}, \cnot\ket{10}=\ket{11}, \cnot\ket{11}=\ket{10}$ (the corresponding matrix is given in Figure \ref{univers}).
Note that this action can be sum up by the simple formula   $\cnot\ket{xy}=\ket{x,x \oplus y}$
where $\oplus$ denotes the XOR operation between two bits $x$ and $y$, which is also the addition in $\F$. In the same way, the reader can check that the controlled-$\ZZ$ operator  acts on a a basis vector as $\cz\ket{xy}=(-1)^{xy}\ket{xy}$.
Notice that this action is invariant by switching the control and the target. 
The last two-qubit gate we need is the $\swap$ gate defined by $\swap\ket{xy}=\ket{yx}$.
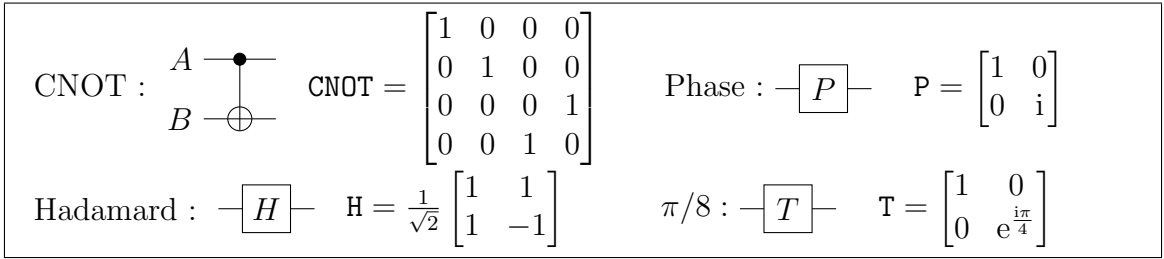
\begin{figure}[h]
\ \ CNOT :\raisebox{-7mm}{
 \begin{tikzpicture}[scale=1.500000,x=1pt,y=1pt]
\filldraw[color=white] (0.000000, -7.500000) rectangle (18.000000, 22.500000);
\draw[color=black] (0.000000,15.000000) -- (18.000000,15.000000);
\draw[color=black] (0.000000,15.000000) node[left] {$A$};
\draw[color=black] (0.000000,0.000000) -- (18.000000,0.000000);
\draw[color=black] (0.000000,0.000000) node[left] {$B$};
\draw (9.000000,15.000000) -- (9.000000,0.000000);
\begin{scope}
\draw[fill=white] (9.000000, 0.000000) circle(3.000000pt);
\clip (9.000000, 0.000000) circle(3.000000pt);
\draw (6.000000, 0.000000) -- (12.000000, 0.000000);
\draw (9.000000, -3.000000) -- (9.000000, 3.000000);
\end{scope}
\filldraw (9.000000, 15.000000) circle(1.500000pt);
\end{tikzpicture}
} 
\ $\cnot=\begin{bmatrix}
  1&0&0&0\\
  0&1&0&0\\
  0&0&0&1\\
  0&0&1&0  
\end{bmatrix}$\qquad
Phase :\raisebox{-3mm}{
\begin{tikzpicture}[scale=1.500000,x=1pt,y=1pt]
\filldraw[color=white] (0.000000, -7.500000) rectangle (24.000000, 7.500000);
\draw[color=black] (0.000000,0.000000) -- (24.000000,0.000000);
\begin{scope}
\draw[fill=white] (12.000000, -0.000000) +(-45.000000:8.485281pt and 8.485281pt) -- +(45.000000:8.485281pt and 8.485281pt) -- +(135.000000:8.485281pt and 8.485281pt) -- +(225.000000:8.485281pt and 8.485281pt) -- cycle;
\clip (12.000000, -0.000000) +(-45.000000:8.485281pt and 8.485281pt) -- +(45.000000:8.485281pt and 8.485281pt) -- +(135.000000:8.485281pt and 8.485281pt) -- +(225.000000:8.485281pt and 8.485281pt) -- cycle;
\draw (12.000000, -0.000000) node {$P$};
\end{scope}
\end{tikzpicture}
}\quad 
$\PP=\begin{bmatrix}1&0\\0&\ii\end{bmatrix}$

\raisebox{-3mm}{
\begin{tikzpicture}[scale=1.500000,x=1pt,y=1pt]
\filldraw[color=white] (0.000000, -7.500000) rectangle (24.000000, 7.500000);
\draw[color=black] (0.000000,0.000000) -- (24.000000,0.000000);
\draw[color=black] (0.000000,0.000000) node[left] {Hadamard\ :\ \ };
\begin{scope}
\draw[fill=white] (12.000000, -0.000000) +(-45.000000:8.485281pt and 8.485281pt) -- +(45.000000:8.485281pt and 8.485281pt) -- +(135.000000:8.485281pt and 8.485281pt) -- +(225.000000:8.485281pt and 8.485281pt) -- cycle;
\clip (12.000000, -0.000000) +(-45.000000:8.485281pt and 8.485281pt) -- +(45.000000:8.485281pt and 8.485281pt) -- +(135.000000:8.485281pt and 8.485281pt) -- +(225.000000:8.485281pt and 8.485281pt) -- cycle;
\draw (12.000000, -0.000000) node {$H$};
\end{scope}
\end{tikzpicture}
}\ \ 
$\HH=\frac{1}{\sqrt2}\begin{bmatrix}1&1\\1&-1\end{bmatrix}$
\hspace{10mm}
$\pi/8 : $\raisebox{-3mm}{
\begin{tikzpicture}[scale=1.500000,x=1pt,y=1pt]
\filldraw[color=white] (0.000000, -7.500000) rectangle (24.000000, 7.500000);
\draw[color=black] (0.000000,0.000000) -- (24.000000,0.000000);
\begin{scope}
\draw[fill=white] (12.000000, -0.000000) +(-45.000000:8.485281pt and 8.485281pt) -- +(45.000000:8.485281pt and 8.485281pt) -- +(135.000000:8.485281pt and 8.485281pt) -- +(225.000000:8.485281pt and 8.485281pt) -- cycle;
\clip (12.000000, -0.000000) +(-45.000000:8.485281pt and 8.485281pt) -- +(45.000000:8.485281pt and 8.485281pt) -- +(135.000000:8.485281pt and 8.485281pt) -- +(225.000000:8.485281pt and 8.485281pt) -- cycle;
\draw (12.000000, -0.000000) node {$T$};
\end{scope}
\end{tikzpicture}
}\quad
$\mathtt{T}=\begin{bmatrix}1&0\\0&\ee^{\frac{\ii\pi}{4}}\end{bmatrix}$

{ \caption{ The standard set of universal gates : names, circuit symbols and matrices. \label{univers}}}
\end{figure}
\medskip

 A $n$-qubit register evolves over time in the Hilbert space $\HS^{\otimes n}=\HS_0\otimes \HS_1\otimes\cdots\otimes \HS_{n-1}$ where $\HS_i$ is the $2$-dimensional Hilbert space of qubit $i$.
So the vector space  of an $n$-qubit system has dimension $2^n$ and
a state vector of the standard computational basis is the tensor $\ket{x_0}\otimes\dots\otimes\ket{x_{n-1}}$, where $x_i\in\{0,1\}$. This tensor  is classically denoted by
$\ket{x}$ (ket $x$), where $x$ is the binary label $x_0x_1\cdots x_{n-1}$.
Sometimes it is convenient to identify the binary label  $x=x_0x_1\cdots x_{n-1}$ of $\ket{x}$
with the column vector  $\vct x=[x_0,\dots,x_{n-1}]^{\T}=\sum_ix_i\vct{e}_i$ of the vector space $\F^n$, so one can label the vectors of the standard basis with $x$ or with $\vct x$
(\textit{i.e.} $\ket{x}=\ket{\vct x}$).
\smallskip

When we apply locally a single qubit gate $\UU$ to the qubit $i$ of a $n$-qubit register, the corresponding action on the $n$-qubit system is that of the operator
$U_i=\II\otimes\cdots\otimes\II\otimes \UU\otimes\II\otimes\cdots\otimes\II=\II^{\otimes i}\otimes \UU\otimes\II^{\otimes n-i-1}$.
As an example, if $n=4$, $ H_1=\II\otimes \HH\otimes\II\otimes\II$ and $ H_0 H_3= \HH\otimes\II\otimes \II\otimes \HH$. We also use  vectors of $\F^n$ as labels for this type of tensor. We write for example $ H_0 H_3= H_{[1,0,0,1]^{\T}}$ and more generally $H_{\vct a}=\prod_iH_i^{a_i}$. Observe that, with this notation, one has $U_i=U_{\vct{e}_i}$, $U_i^x=U_{x\vct{e}_i}$ ($x\in\{0,1\}$) and $U_{\mathbf{0}}=I$. When $\UU$ is an involution (\textit{i.e.} $\UU^2=\II$), the group generated by the $U_i$'s is isomorphic to $\F^n$, since it is an abelian 2-group. This is the case when $\UU\in\{\XX,\YY,\ZZ,\HH\}$ but not when $\UU=\PP$.
For instance
$ H_{[1,0,0,1]^{\T}} H_{[0,0,1,1]^{\T}}= H_{[1,0,0,1]^{\T}\oplus[0,0,1,1]^{\T}}= H_{[1,0,1,0]^{\T}}= H_0 H_2$. Note that the action of $ Z_i$
on $\ket{x}=\ket{x_0\cdots x_{n-1}}$ is described by
$ Z_i\ket{x}=(-1)^{x_i}\ket{x}\label{zi}$. Hence, if $\vct{v}=[v_0,\dots,v_{n-1}]^{\T}\in\F^n$, one has
\begin{equation}
 Z_{\vct v}\ket{x}=(-1)^{\vct v \cdot \vct x}\ket{x},\label{zu}
\end{equation}
where $\vct v\cdot \vct x=\sum_iv_ix_i$. In the same way, $ P_i\ket{x}=\ii^{x_i}\ket{x}$, hence
\begin{equation}
 P_{\vct v}\ket{x}=\ii^{\vct v\cdot\vct x}\ket{x}.\label{pu}
\end{equation} 
A $\cnot$ gate with target on qubit $i$ and control on qubit $j$ will be denoted $X_{ij}$ (not to be confused with $X_i$ which denotes a \PauliX gate). The reader will pay attention to the fact that our convention is the opposite of that generally used, where $\cnot_{ij}$ denotes a $\cnot$ gate with control on qubit $i$ and target on qubit $j$. The reason for this change will appear later in the proof of Theorem \ref{X-GL} (next section). So, if $i<j$, the action of $X_{ij}$ and $X_{ji}$ on a basis vector $\ket{x}$ is given by
\begin{align}
&  X_{ij}\ket{x}=X_{ij}\ket{x_0\cdots x_i\cdots x_j\cdots x_{n-1}}=\ket{x_0\cdots x_i\oplus x_j\cdots x_j\cdots x_{n-1}}\ ,\label{xij}\\
&  X_{ji}\ket{x}=X_{ji}\ket{x_0\cdots x_i\cdots x_j\cdots x_{n-1}}=\ket{x_0\cdots x_i\cdots x_j\oplus x_i\cdots x_{n-1}}.\label{xji}
\end{align}
The $\cz$  gate between qubits $i$ and $j$ is denoted by $Z_{ij}$ (not to be confused with $Z_i$ which denotes a \PauliZ gate). A $\swap$ gate between qubits $i$ and $j$ is denoted by $S_{ij}$. Notice that $Z_{ij}=Z_{ji}$ and $S_{ij}=S_{ji}$. These gates are defined by
\begin{align}
  &Z_{ij}\ket{x_0\cdots x_{n-1}}=(-1)^{x_ix_j}\ket{x}\ ,\label{zij}\\
  &S_{ij}\ket{x_0\cdots x_i\cdots x_j\cdots x_{n-1}}=\ket{x_0\cdots x_j\cdots x_i\cdots x_{n-1}}.\label{sij}
\end{align}
Observe that the $X_{ij}$, $S_{ij}$ and $ X_i$ gates are permutation matrices while the $Z_{ij}$ and $ Z_i$  gates are diagonal matrices with all diagonal entries equal to $1$ or $-1$. All these matrices are involutions. The $P_i$ gates are  also diagonal matrices but are not involutions since $ P_i^2= Z_i$.\smallskip

The three classical identities below will be of great use in the paper. They correspond to the circuit equivalences in Figure \ref{equi}. Each identity can be proved by checking that the actions of its left hand side and its right hand side on any basis vector $\ket{x}$ are the same.
\begin{align}
  X_{ij}&= H_i H_jX_{ji} H_i H_j\label{conj-xij-h}\\
  Z_{ij}&= H_iX_{ij} H_i= H_jX_{ji} H_j\label{zijxij}\\
  S_{ij}&=X_{ij}X_{ji}X_{ij}=X_{ji}X_{ij}X_{ji}\label{sijxij}
\end{align}

The Pauli group for $n$ qubits is the group generated by the set $\{ X_i, Y_i, Z_i\mid i=0\dots n-1\}$. Since
Identities \eqref{involutions}, \eqref{anticom} and \eqref{yixz} hold,  any element of this group can be written  uniquely in the form $\ii^{\lambda} X_{\vct u} Z_{\vct v}$, where $\lambda\in\Z_4$ and $\vct u,\vct v\in\F^n$. So, using \eqref{anticom}, the multiplication rule in the Pauli group is given by  
\begin{equation}\label{pauli-mult}
  \ii^{\lambda}X_{\vct u}Z_{\vct v}\ii^{\lambda'}X_{\vct{u'}}Z_{\vct{v'}}=\ii^{\lambda+\lambda'}(-1)^{\vct {u'}\cdot\vct v} X_{\vct u\oplus\vct{u'}} Z_{\vct{v}\oplus\vct{v'}}.
\end{equation}
The unitary matrix corresponding to a stabilizer circuit is an element of the group generated by the set $\{ P_i, H_i, X_{ij}\mid 0\leq i,j\leq n-1 \}$. This group contains the $S_{ij}$ and $Z_{ij}$ gates because of Identities \eqref{zijxij} and \eqref{sijxij}. It also contains the Pauli group, since $ Z_i= P_i^2$, $ X_i= H_i P_i^2 H_i$ and $ Y_i= P_i X_i P_i^{-1}= P_i H_i P_i^2 H_i P_i^3$. 
In a stabilizer circuit, changes of the overall phase by a multiple of $\frac{\pi}{4}$ are possible since
\begin{equation}
( H_i P_i)^3=( P_i H_i)^3=\ee^{\ii\frac{\pi}{4}} I.\label{hp3}
\end{equation}
This last equation can be proved by a direct computation.
\begin{figure}[h]
$\cnot$ : \ \raisebox{-5mm}{
\begin{tikzpicture}[scale=1.200000,x=1pt,y=1pt]
\filldraw[color=white] (0.000000, -7.500000) rectangle (111.000000, 22.500000);
\draw[color=black] (0.000000,15.000000) -- (111.000000,15.000000);
\draw[color=black] (0.000000,15.000000) node[left] {$A$};
\draw[color=black] (0.000000,0.000000) -- (111.000000,0.000000);
\draw[color=black] (0.000000,0.000000) node[left] {$B$};
\draw (9.000000,15.000000) -- (9.000000,0.000000);
\begin{scope}
\draw[fill=white] (9.000000, 0.000000) circle(3.000000pt);
\clip (9.000000, 0.000000) circle(3.000000pt);
\draw (6.000000, 0.000000) -- (12.000000, 0.000000);
\draw (9.000000, -3.000000) -- (9.000000, 3.000000);
\end{scope}
\filldraw (9.000000, 15.000000) circle(1.500000pt);
\draw[fill=white,color=white] (24.000000, -6.000000) rectangle (39.000000, 21.000000);
\draw (31.500000, 7.500000) node {$\sim$};
\begin{scope}
\draw[fill=white] (57.000000, 15.000000) +(-45.000000:8.485281pt and 8.485281pt) -- +(45.000000:8.485281pt and 8.485281pt) -- +(135.000000:8.485281pt and 8.485281pt) -- +(225.000000:8.485281pt and 8.485281pt) -- cycle;
\clip (57.000000, 15.000000) +(-45.000000:8.485281pt and 8.485281pt) -- +(45.000000:8.485281pt and 8.485281pt) -- +(135.000000:8.485281pt and 8.485281pt) -- +(225.000000:8.485281pt and 8.485281pt) -- cycle;
\draw (57.000000, 15.000000) node {$H$};
\end{scope}
\begin{scope}
\draw[fill=white] (57.000000, -0.000000) +(-45.000000:8.485281pt and 8.485281pt) -- +(45.000000:8.485281pt and 8.485281pt) -- +(135.000000:8.485281pt and 8.485281pt) -- +(225.000000:8.485281pt and 8.485281pt) -- cycle;
\clip (57.000000, -0.000000) +(-45.000000:8.485281pt and 8.485281pt) -- +(45.000000:8.485281pt and 8.485281pt) -- +(135.000000:8.485281pt and 8.485281pt) -- +(225.000000:8.485281pt and 8.485281pt) -- cycle;
\draw (57.000000, -0.000000) node {$H$};
\end{scope}
\draw (78.000000,15.000000) -- (78.000000,0.000000);
\begin{scope}
\draw[fill=white] (78.000000, 15.000000) circle(3.000000pt);
\clip (78.000000, 15.000000) circle(3.000000pt);
\draw (75.000000, 15.000000) -- (81.000000, 15.000000);
\draw (78.000000, 12.000000) -- (78.000000, 18.000000);
\end{scope}
\filldraw (78.000000, 0.000000) circle(1.500000pt);
\begin{scope}
\draw[fill=white] (99.000000, 15.000000) +(-45.000000:8.485281pt and 8.485281pt) -- +(45.000000:8.485281pt and 8.485281pt) -- +(135.000000:8.485281pt and 8.485281pt) -- +(225.000000:8.485281pt and 8.485281pt) -- cycle;
\clip (99.000000, 15.000000) +(-45.000000:8.485281pt and 8.485281pt) -- +(45.000000:8.485281pt and 8.485281pt) -- +(135.000000:8.485281pt and 8.485281pt) -- +(225.000000:8.485281pt and 8.485281pt) -- cycle;
\draw (99.000000, 15.000000) node {$H$};
\end{scope}
\begin{scope}
\draw[fill=white] (99.000000, -0.000000) +(-45.000000:8.485281pt and 8.485281pt) -- +(45.000000:8.485281pt and 8.485281pt) -- +(135.000000:8.485281pt and 8.485281pt) -- +(225.000000:8.485281pt and 8.485281pt) -- cycle;
\clip (99.000000, -0.000000) +(-45.000000:8.485281pt and 8.485281pt) -- +(45.000000:8.485281pt and 8.485281pt) -- +(135.000000:8.485281pt and 8.485281pt) -- +(225.000000:8.485281pt and 8.485281pt) -- cycle;
\draw (99.000000, -0.000000) node {$H$};
\end{scope}
\end{tikzpicture}
}

$\cz$ :\ \raisebox{-5mm}{
\begin{tikzpicture}[scale=1.200000,x=1pt,y=1pt]
\filldraw[color=white] (0.000000, -7.500000) rectangle (204.000000, 22.500000);
\draw[color=black] (0.000000,15.000000) -- (204.000000,15.000000);
\draw[color=black] (0.000000,15.000000) node[left] {$A$};
\draw[color=black] (0.000000,0.000000) -- (204.000000,0.000000);
\draw[color=black] (0.000000,0.000000) node[left] {$B$};
\draw (9.000000,15.000000) -- (9.000000,0.000000);
\filldraw (9.000000, 15.000000) circle(1.500000pt);
\filldraw (9.000000, 0.000000) circle(1.500000pt);
\draw[fill=white,color=white] (24.000000, -6.000000) rectangle (39.000000, 21.000000);
\draw (31.500000, 7.500000) node {$\sim$};
\begin{scope}
\draw[fill=white] (57.000000, 15.000000) +(-45.000000:8.485281pt and 8.485281pt) -- +(45.000000:8.485281pt and 8.485281pt) -- +(135.000000:8.485281pt and 8.485281pt) -- +(225.000000:8.485281pt and 8.485281pt) -- cycle;
\clip (57.000000, 15.000000) +(-45.000000:8.485281pt and 8.485281pt) -- +(45.000000:8.485281pt and 8.485281pt) -- +(135.000000:8.485281pt and 8.485281pt) -- +(225.000000:8.485281pt and 8.485281pt) -- cycle;
\draw (57.000000, 15.000000) node {$H$};
\end{scope}
\draw (78.000000,15.000000) -- (78.000000,0.000000);
\begin{scope}
\draw[fill=white] (78.000000, 15.000000) circle(3.000000pt);
\clip (78.000000, 15.000000) circle(3.000000pt);
\draw (75.000000, 15.000000) -- (81.000000, 15.000000);
\draw (78.000000, 12.000000) -- (78.000000, 18.000000);
\end{scope}
\filldraw (78.000000, 0.000000) circle(1.500000pt);
\begin{scope}
\draw[fill=white] (99.000000, 15.000000) +(-45.000000:8.485281pt and 8.485281pt) -- +(45.000000:8.485281pt and 8.485281pt) -- +(135.000000:8.485281pt and 8.485281pt) -- +(225.000000:8.485281pt and 8.485281pt) -- cycle;
\clip (99.000000, 15.000000) +(-45.000000:8.485281pt and 8.485281pt) -- +(45.000000:8.485281pt and 8.485281pt) -- +(135.000000:8.485281pt and 8.485281pt) -- +(225.000000:8.485281pt and 8.485281pt) -- cycle;
\draw (99.000000, 15.000000) node {$H$};
\end{scope}
\draw[fill=white,color=white] (117.000000, -6.000000) rectangle (132.000000, 21.000000);
\draw (124.500000, 7.500000) node {$\sim$};
\begin{scope}
\draw[fill=white] (150.000000, -0.000000) +(-45.000000:8.485281pt and 8.485281pt) -- +(45.000000:8.485281pt and 8.485281pt) -- +(135.000000:8.485281pt and 8.485281pt) -- +(225.000000:8.485281pt and 8.485281pt) -- cycle;
\clip (150.000000, -0.000000) +(-45.000000:8.485281pt and 8.485281pt) -- +(45.000000:8.485281pt and 8.485281pt) -- +(135.000000:8.485281pt and 8.485281pt) -- +(225.000000:8.485281pt and 8.485281pt) -- cycle;
\draw (150.000000, -0.000000) node {$H$};
\end{scope}
\draw (171.000000,15.000000) -- (171.000000,0.000000);
\begin{scope}
\draw[fill=white] (171.000000, 0.000000) circle(3.000000pt);
\clip (171.000000, 0.000000) circle(3.000000pt);
\draw (168.000000, 0.000000) -- (174.000000, 0.000000);
\draw (171.000000, -3.000000) -- (171.000000, 3.000000);
\end{scope}
\filldraw (171.000000, 15.000000) circle(1.500000pt);
\begin{scope}
\draw[fill=white] (192.000000, -0.000000) +(-45.000000:8.485281pt and 8.485281pt) -- +(45.000000:8.485281pt and 8.485281pt) -- +(135.000000:8.485281pt and 8.485281pt) -- +(225.000000:8.485281pt and 8.485281pt) -- cycle;
\clip (192.000000, -0.000000) +(-45.000000:8.485281pt and 8.485281pt) -- +(45.000000:8.485281pt and 8.485281pt) -- +(135.000000:8.485281pt and 8.485281pt) -- +(225.000000:8.485281pt and 8.485281pt) -- cycle;
\draw (192.000000, -0.000000) node {$H$};
\end{scope}
\end{tikzpicture}
}

$\swap$ : \ \raisebox{-5mm}{
      \begin{tikzpicture}[scale=1.200000,x=1pt,y=1pt]
\filldraw[color=white] (0.000000, -7.500000) rectangle (180.000000, 22.500000);
\draw[color=black] (0.000000,15.000000) -- (180.000000,15.000000);
\draw[color=black] (0.000000,15.000000) node[left] {$A$};
\draw[color=black] (0.000000,0.000000) -- (180.000000,0.000000);
\draw[color=black] (0.000000,0.000000) node[left] {$B$};
\draw (9.000000,15.000000) -- (9.000000,0.000000);
\begin{scope}
\draw (6.878680, 12.878680) -- (11.121320, 17.121320);
\draw (6.878680, 17.121320) -- (11.121320, 12.878680);
\end{scope}
\begin{scope}
\draw (6.878680, -2.121320) -- (11.121320, 2.121320);
\draw (6.878680, 2.121320) -- (11.121320, -2.121320);
\end{scope}
\draw[fill=white,color=white] (24.000000, -6.000000) rectangle (39.000000, 21.000000);
\draw (31.500000, 7.500000) node {$\sim$};
\draw (54.000000,15.000000) -- (54.000000,0.000000);
\begin{scope}
\draw[fill=white] (54.000000, 0.000000) circle(3.000000pt);
\clip (54.000000, 0.000000) circle(3.000000pt);
\draw (51.000000, 0.000000) -- (57.000000, 0.000000);
\draw (54.000000, -3.000000) -- (54.000000, 3.000000);
\end{scope}
\filldraw (54.000000, 15.000000) circle(1.500000pt);
\draw (72.000000,15.000000) -- (72.000000,0.000000);
\begin{scope}
\draw[fill=white] (72.000000, 15.000000) circle(3.000000pt);
\clip (72.000000, 15.000000) circle(3.000000pt);
\draw (69.000000, 15.000000) -- (75.000000, 15.000000);
\draw (72.000000, 12.000000) -- (72.000000, 18.000000);
\end{scope}
\filldraw (72.000000, 0.000000) circle(1.500000pt);
\draw (90.000000,15.000000) -- (90.000000,0.000000);
\begin{scope}
\draw[fill=white] (90.000000, 0.000000) circle(3.000000pt);
\clip (90.000000, 0.000000) circle(3.000000pt);
\draw (87.000000, 0.000000) -- (93.000000, 0.000000);
\draw (90.000000, -3.000000) -- (90.000000, 3.000000);
\end{scope}
\filldraw (90.000000, 15.000000) circle(1.500000pt);
\draw[fill=white,color=white] (105.000000, -6.000000) rectangle (120.000000, 21.000000);
\draw (112.500000, 7.500000) node {$\sim$};
\draw (135.000000,15.000000) -- (135.000000,0.000000);
\begin{scope}
\draw[fill=white] (135.000000, 15.000000) circle(3.000000pt);
\clip (135.000000, 15.000000) circle(3.000000pt);
\draw (132.000000, 15.000000) -- (138.000000, 15.000000);
\draw (135.000000, 12.000000) -- (135.000000, 18.000000);
\end{scope}
\filldraw (135.000000, 0.000000) circle(1.500000pt);
\draw (153.000000,15.000000) -- (153.000000,0.000000);
\begin{scope}
\draw[fill=white] (153.000000, 0.000000) circle(3.000000pt);
\clip (153.000000, 0.000000) circle(3.000000pt);
\draw (150.000000, 0.000000) -- (156.000000, 0.000000);
\draw (153.000000, -3.000000) -- (153.000000, 3.000000);
\end{scope}
\filldraw (153.000000, 15.000000) circle(1.500000pt);
\draw (171.000000,15.000000) -- (171.000000,0.000000);
\begin{scope}
\draw[fill=white] (171.000000, 15.000000) circle(3.000000pt);
\clip (171.000000, 15.000000) circle(3.000000pt);
\draw (168.000000, 15.000000) -- (174.000000, 15.000000);
\draw (171.000000, 12.000000) -- (171.000000, 18.000000);
\end{scope}
\filldraw (171.000000, 0.000000) circle(1.500000pt);
\end{tikzpicture}
}
{ \caption{ Classical equivalences of circuits involving $\cnot$ and Hadamard gates.\label{equi}}}
\end{figure}
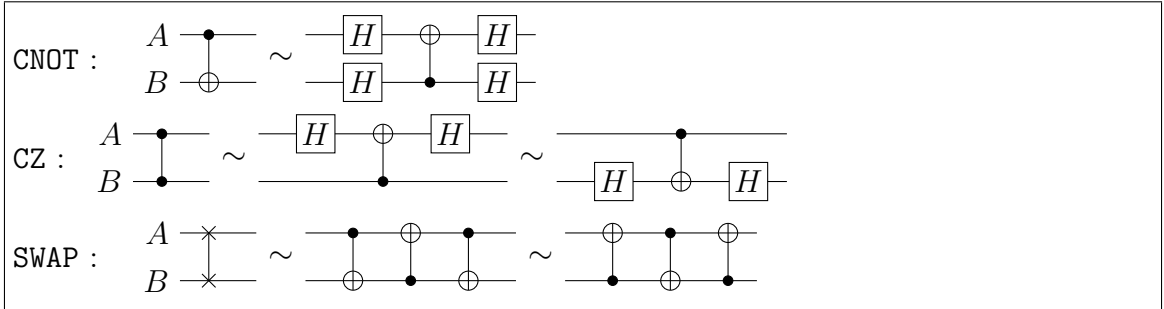

\section{Subgroup structures underlying stabilizer circuits\label{groups}}
\subsection{Quantum circuits of $\cz$ and $\cnot$ gates}
We start by describing the group $\czg$ which is the group
generated by the $Z_{ij}$ gates acting on $n$ qubits. Let us denote by $\BG$ the power set of $\{\{i,j\}\mid 0 \leq i<j\leq n-1\}$. 
As noticed in Section \ref{background}, the  matrices $Z_{ij}$ are involutions. Besides they commute with each other because they are diagonal matrices.
So $\czg$ is isomorphic to the abelian 2-group $(\BG,\oplus)$, where $\oplus$ denotes the symmetric difference of two sets. As a consequence, the order of $\czg$ is $2^{\frac{n(n-1)}{2}}$. 
For any $\mat B$ in $\BG$, we denote by $Z_{\mat B}$ the unitary operator of $\czg$ corresponding to the matrix $\mat B$, that is $Z_{\mat B}=\prod_{\{i,j\}\in \mat B}Z_{ij}$. So
the gate $Z_{ij}$ can also be denoted by $Z_{\{\{i,j\}\}}$ (we often use the notation $Z_{\{i,j\}}$ for convenience). Pay attention to the fact that $Z_{\mat B}$ denotes a product of $\cz$ gates while $Z_{\vct v}$ denotes the product of \PauliZ gates defined by the vector $\vct v$.
Using this notation, Identity \eqref{zij} can be generalized as
\begin{equation}
Z_{\mat B}\ket{x}=(-1)^{\sum_{\{i,j\}\in\mat B}x_ix_j}\ket{x}.\label{zB}
\end{equation}
To any $\mat B$ in $\BG$, we associate a  $\F$ matrix of dimension $n\times n$,  whose entry $(i,j)$
is 1 when $\{i,j\}$ is in $\mat B$ and 0 otherwise. These matrices are symmetric with only zeros on the diagonal and they form an additive group isomorphic to $(\BG,\oplus)$. So, in practice, one can identify the elements of $\BG$ to matrices. For example $\{\{i,j\}\}$ also denotes the matrix whose entries are all 0
but entries $(i,j)$ and $(j,i)$ that are equal to 1.
Let $q_{\mat B}$ be the quadratic form defined on $\F^n$ by
\begin{equation}
  q_{\mat B}(\vct x)=\sum_{\{i,j\}\in \mat B}x_ix_j=\sum_{i<j}b_{ij}x_ix_j,
\end{equation}
where $b_{ij}$ is the entry $(i,j)$ of matrix $\mat B$. Then Identity \eqref{zB} can be rewritten as
\begin{equation}
  Z_{\mat B}\ket{x}=(-1)^{q_{\mat B}(\vct x)}\ket{x}.
\end{equation}
Note that $\mat B$ can be viewed as the matrix of the alternating (and symmetric) bilinear form associated to the quadratic form $q_{\mat B}$.\medskip

In a previous work \cite{2020B}, we described  the group $\cnotg$ generated by the $X_{ij}$ gates acting on $n$ qubits. We recall now some results from this work. The special linear group on any field $K$ is generated by the set of transvection matrices. In the special case of $K=\F$, this set is reduced to the $n(n-1)$ matrices $\mat I \oplus \mat{E}_{ij}$, where
$\mat{E}_{ij}$ is the matrix with all entries 0 except the entry $(i,j)$ that is equal to 1.  Let us denote by $[ij]$
the transvection matrix $\mat{I} \oplus \mat{E}_{ij}$. The general linear group $\GL$ is equal to $\SL$, the special linear group on $\F$, and is consequently generated by the matrices $[ij]$. The following simple property of the matrices $[ij]$ will be frequently used in the rest of the article.
    \begin{prop}\label{tij-mult}
Multiplying to the left (resp. the right) any matrix $M$ by a 
transvection matrix $[ij]$ is equivalent to add the row $j$ (resp. column $i$) to the row $i$ (resp. column $j$) in $M$.
\end{prop}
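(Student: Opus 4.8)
The plan is to prove both assertions by a direct computation of matrix entries, using the explicit form $[ij] = \mat{I} \oplus \mat{E}_{ij}$. Since $i \neq j$, the $(k,\ell)$ entry of $[ij]$ is simply $\delta_{k\ell} \oplus \delta_{ki}\delta_{\ell j}$ (the two Kronecker deltas are never both equal to $1$, so the $\oplus$ behaves like an ordinary sum here as well).

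First I would treat left multiplication. Expanding over $\F$,
\begin{equation*}
  ([ij]\,M)_{k\ell} = \sum_{m} [ij]_{km}\,M_{m\ell} = \sum_{m}\bigl(\delta_{km}\oplus\delta_{ki}\delta_{mj}\bigr)M_{m\ell} = M_{k\ell} \oplus \delta_{ki}\,M_{j\ell},
\end{equation*}
which, read row by row, says that for $k \neq i$ the $k$-th row of $[ij]M$ equals the $k$-th row of $M$, while the $i$-th row becomes the $\oplus$-sum of the rows $i$ and $j$ of $M$; this is exactly the operation \emph{add row $j$ to row $i$}. The mirror computation for right multiplication gives
\begin{equation*}
  (M\,[ij])_{k\ell} = \sum_{m} M_{km}\,[ij]_{m\ell} = \sum_{m} M_{km}\bigl(\delta_{m\ell}\oplus\delta_{mi}\delta_{\ell j}\bigr) = M_{k\ell} \oplus M_{ki}\,\delta_{\ell j},
\end{equation*}
so every column $\ell \neq j$ is unchanged while column $j$ becomes the $\oplus$-sum of columns $j$ and $i$, i.e. \emph{add column $i$ to column $j$}. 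Alternatively, the right-multiplication claim follows from the left one by transposition, since $[ij]^{t} = [ji]$ and $(M[ij])^{t} = [ji]\,M^{t}$.

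There is no genuine obstacle here; the only point needing care is keeping straight the index pattern of $\mat{E}_{ij}$ — which of $i$ and $j$ indexes the line that receives the sum on each side — together with the fact, already used throughout, that over $\F$ addition coincides with the symbol $\oplus$, so that no signs intervene.
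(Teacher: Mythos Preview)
Your proof is correct. The paper itself does not prove this proposition: it is stated as a standard linear-algebra fact about elementary transvection matrices and used immediately afterwards without justification. Your direct entrywise computation is exactly the routine verification one would supply, and your remark that the right-multiplication case follows from the left one via $[ij]^{t}=[ji]$ is a nice shortcut.
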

Applying Proposition \ref{tij-mult} to the column vector $\mat x\in\F^n$ corresponding to the binary label $x$ of the basis vector $\ket{x}$, we can rewrite Relation \eqref{xij} in a cleaner way as
\begin{equation}
 X_{ij}\ket{x}=\ket{[ij]\mat x}.\label{xijtij}
 \end{equation}
 The above considerations lead quite naturally to the following theorem.

\begin{theo}\label{X-GL}
       
      The group $\cnotg$ generated by the $\cnot$ gates acting on $n$ qubits is isomorphic to $\GL$.
      The morphism $\Phi$ sending each gate $X_{ij}$ to the transvection matrix $[ij]$ is an explicit isomorphism.
      The order of $\cnotg$ is $2^{\frac{n(n-1)}{2}}\prod_{i=1}^n(2^i-1)$.
    \end{theo}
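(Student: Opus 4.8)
The plan is to construct the isomorphism $\Phi$ directly from the action of $\cnot$-circuits on the computational basis, rather than by checking that a presentation of $\GL$ lifts to $\cnotg$. By Relation~\eqref{xijtij} each generator acts by $X_{ij}\ket{x}=\ket{[ij]\mat x}$, so composing gates and applying \eqref{xij} repeatedly gives $X_{i_1j_1}\cdots X_{i_kj_k}\ket{x}=\ket{[i_1j_1]\cdots[i_kj_k]\,\mat x}$, with the matrix product taken in the \emph{same} order as the gates. Since every element of $\cnotg$ is by definition a product of generators — and inverses add nothing, because $X_{ij}^2=I$ and likewise $[ij]^2=\mat I$ over $\F$ — every $U\in\cnotg$ acts on the standard basis as $\ket{x}\mapsto\ket{M\mat x}$ where $M$ is a product of transvection matrices, hence $M\in\GL$ (recall $\GL=\SL$). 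Moreover $M$ is uniquely determined by $U$, since $M\mat x=M'\mat x$ for all $\mat x\in\F^n$ forces $M=M'$. So $\Phi\colon U\mapsto M$ is a well-defined map $\cnotg\to\GL$, and it sends $X_{ij}$ to $[ij]$, so it is indeed the map of the statement.

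It then remains to check the three group-theoretic properties. For the homomorphism property, if $\Phi(U)=M$ and $\Phi(V)=N$ then $(UV)\ket{x}=U\ket{N\mat x}=\ket{MN\mat x}$, so $\Phi(UV)=MN=\Phi(U)\Phi(V)$. Surjectivity is exactly the point where the recalled fact that $\GL=\SL$ is generated by the transvections $[ij]$ is used: writing an arbitrary $M\in\GL$ as $M=[i_1j_1]\cdots[i_kj_k]$, the circuit $X_{i_1j_1}\cdots X_{i_kj_k}$ lies in $\cnotg$ and is sent to $M$. For injectivity, if $\Phi(U)=\mat I$ then $U\ket{x}=\ket{x}$ for every basis vector $\ket{x}$, and a linear operator fixing a basis is the identity, so $\ker\Phi=\{I\}$. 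Hence $\Phi$ is an isomorphism.

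Finally, the order of $\cnotg$ equals $|\GL|=\prod_{i=0}^{n-1}(2^n-2^i)$, the number of ordered bases of $\F^n$. Writing $2^n-2^i=2^i(2^{n-i}-1)$ and collecting the powers of two gives $|\GL|=2^{0+1+\cdots+(n-1)}\prod_{j=1}^{n}(2^j-1)=2^{n(n-1)/2}\prod_{i=1}^{n}(2^i-1)$, which is the claimed value.

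I do not expect a genuine obstacle: once \eqref{xijtij} and the generation of $\GL$ by the $[ij]$ are available, the argument is essentially bookkeeping. The two points that need care are keeping the order of matrix multiplication consistent with the temporal order of the gates — this is precisely why the paper adopts the convention that $X_{ij}$ has target $i$ and control $j$, so that \emph{left} multiplication by $[ij]=\mat I\oplus\mat E_{ij}$ reproduces \eqref{xij} — and justifying that ``the morphism sending $X_{ij}$ to $[ij]$'' makes sense, which here is immediate because we exhibit $\Phi$ concretely through its action on the basis rather than through a presentation.
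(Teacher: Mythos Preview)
Your proof is correct and follows essentially the same approach as the paper: both rely on Identity~\eqref{xijtij} to see that every $U\in\cnotg$ acts as $\ket{x}\mapsto\ket{M\mat x}$ for some $M\in\GL$, use the generation of $\GL$ by transvections for surjectivity, the uniqueness of a linear map on a basis for injectivity, and the count of ordered bases of $\F^n$ for the order formula. Your version is somewhat more explicit about well-definedness and the homomorphism property, but the substance is the same.
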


    \begin{proof}
      As the matrices $[ij]$ generate $\GL$, it is clear that  $\Phi$ is surjective. Since Identity \eqref{xijtij} holds, a preimage $U$ under $\Phi$ of any matrix $\mat A$ in $\GL$ must satisfy  the relations $U\ket{x}=\ket{\mat{Ax}}$ for any basis vector $\ket{x}$. As these relations define a unique matrix  $U$, $\Phi$ is injective. The order of $\GL$ is classically obtained by counting the number of basis of the vector space $\F^n$.
\end{proof}

For any $\mat A$ in $\GL$, let $X_{\mat A}=\Phi^{-1}(\mat A)$, where $\Phi$ is the morphism defined in Theorem \ref{X-GL}. The unitary operator $X_{\mat A}$ thus corresponds to any circuit composed of the $\cnot$ gates $X_{i_1 j_1}\dots X_{i_\ell j_\ell}$ such that $\mat A=\prod_{k=1}^{\ell}[i_kj_k]$ and the gate $X_{ij}$ can also be denoted by $X_{[ij]}$. Pay attention to the fact that $X_{\mat A}$ denotes a product of $\cnot$ gates while $X_{\vct u}$ denotes the product of \PauliX gates defined by the vector $\vct u$. As $([ij][jk])^2=[ik]$, a straightforward consequence of the isomorphism between $\cnotg$ and $\GL$ is the following conjugation rule between the $\cnot$ gates.
\begin{equation}
  X_{[ij]}X_{[jk]}X_{[ij]}=X_{[ik]}X_{[jk]}=X_{[jk]}X_{[ik]}\quad (i,j,k \text{ distinct})\label{conj-xij-xjk}
\end{equation}

\subsection{The \pzx for quantum circuits of phase, $\cz$ and $\cnot$ gates}
Let $\czpg$ be the group generated by the set $\{ P_i,Z_{ij}\mid 0\leq i,j\leq n-1\}$. 
Any element of the group generated by the $ P_i$ gates can be written uniquely in the form $ Z_{\vct v} P_{\vct b}$ where $\vct v,\vct b\in\F^n$.
This group is isomorphic to $(\Z_4^n,+)$, one possible isomorphism associating $ Z_{\vct v} P_{\vct b}$ to $2\vct v+ \vct b$.
As the generators of the group $\czpg$ commute between each other, the group $\czpg$ is isomorphic to the direct product $\Z_4^n\times\BG$. Any element in $\czpg$ can be written uniquely in the form
$ Z_{\vct v} P_{\vct b}Z_{\mat B}$ and 

\begin{equation}
   Z_{\vct{v}} P_{\vct{b}}Z_{\mat B} Z_{\vct{v'}} P_{\vct{b'}}Z_{\mat{B'}}= Z_{\vct{v}\oplus\vct{v'}\oplus\vct{b}\odot\vct{b'}} P_{\vct{b}\oplus\vct{b'}}Z_{\mat B\oplus \mat{B'}}.\label{czpg-mult}
\end{equation}
The conjugation by the $X_{[ij]}$ gates in $\czpg$ obey to the seven rules below.
Each equality can be proved by checking, thanks to Identities \eqref{zu} to \eqref{zij}, that the actions of its left hand side and its right hand side on any basis vector $\ket{x}$ are the same.
\begin{align}
  &X_{[ij]}Z_{\{i,j\}}X_{[ij]} = Z_{\{i,j\}} Z_j\label{conj-Zij}\\
  &X_{[ij]}Z_{\{i,k\}}X_{[ij]} = Z_{\{i,k\}}Z_{\{j,k\}}\quad (i,j,k\ \text{distinct})\label{conj-Zik}\\
  &X_{[ij]}Z_{\{p,q\}}X_{[ij]} = Z_{\{p,q\}}\quad (p,q\neq i)\label{conj-Zpq}\\
  &X_{[ij]}  Z_{i}X_{[ij]}=  Z_i Z_j\label{conj-Zi}\\
  &X_{[ij]}  Z_{j}X_{[ij]}=  Z_j\label{conj-Zj}\\
  &X_{[ij]}  P_{i}X_{[ij]}=  P_i P_jZ_{\{i,j\}}\label{conj-Pi}\\
  &X_{[ij]}  P_{j}X_{[ij]}=  P_j\label{conj-Pj}
\end{align}

Let us denote by $\czxpg$ the group generated by the set $\{ P_i,Z_{ij},X_{ij}\mid 0\leq i,j\leq n-1\}$. As described in the following proposition, we can extend relations \eqref{conj-Zij} to \eqref{conj-Pj} to the unitary matrices $ Z_{\vct{v}}, P_{\vct{b}}$ and $Z_{\mat{B}}$.

\begin{prop}\label{normal}
  The group $\czpg$ is a normal subgroup of $\czxpg$. The conjugation of any element of $\czpg$ by a $\cnot$ gate is described by the relations
\begin{align}
  & X_{[ij]}  Z_{\vct v}X_{[ij]}=  Z_{[ji]\vct v}\ ,\label{conj-Za-xij}\\
  &X_{[ij]}  P_{\vct b}X_{[ij]}=Z_{b_i b_j \vct{e}_j} P_{[ji]\vct b}Z_{\{i,j\}}^{b_i}= Z_{b_i b_j \vct{e}_j} P_{[ji]\vct b}Z_{b_i\{\{i,j\}\}}\ ,\label{conj-Pb-xij}\\
  &X_{[ij]}Z_{\mat B}X_{[ij]} =  Z_{b_{ij}\vct{e}_j}Z_{[ji] \mat{B} [ij]}\ ,\label{conj-ZB-xij}
 \end{align}
\end{prop}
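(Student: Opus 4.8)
The plan is to prove the three displayed identities by a single uniform computation on the computational basis, and then to deduce normality as an essentially formal consequence. The key observation is that $Z_{\vct v}$, $P_{\vct b}$ and $Z_{\mat B}$ are all diagonal: if $A\ket x=\omega(\vct x)\ket x$, then, using Identity~\eqref{xijtij} and the fact that $[ij]^2=\mat I$ over $\F$,
\[
X_{[ij]}\,A\,X_{[ij]}\ket x = X_{[ij]}\,A\ket{[ij]\vct x}=\omega([ij]\vct x)\,X_{[ij]}\ket{[ij]\vct x}=\omega([ij]\vct x)\ket x .
\]
So in each case it is enough to substitute $\vct x\mapsto[ij]\vct x$ (equivalently $x_i\mapsto x_i\oplus x_j$, all other coordinates fixed) in the relevant phase function and to recognise the result as the phase function of the announced product of diagonal operators.

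For~\eqref{conj-Za-xij} the phase of $Z_{\vct v}$ is $(-1)^{\vct v\cdot\vct x}$ with the inner product in $\F$, and $\vct v\cdot[ij]\vct x=([ij]^{\T}\vct v)\cdot\vct x=([ji]\vct v)\cdot\vct x$, which gives $X_{[ij]}Z_{\vct v}X_{[ij]}=Z_{[ji]\vct v}$ immediately. For~\eqref{conj-Pb-xij} the phase of $P_{\vct b}$ is $\ii^{\vct b\cdot\vct x}$, where now the inner product must be read in $\Z$ modulo $4$. Using the integer identity $x_i\oplus x_j=x_i+x_j-2x_ix_j$ together with $b_i+b_j=(b_i\oplus b_j)+2b_ib_j$ and $-2b_ix_ix_j\equiv 2b_ix_ix_j\pmod 4$, one rewrites $\vct b\cdot[ij]\vct x\equiv([ji]\vct b)\cdot\vct x+2b_ib_jx_j+2b_ix_ix_j\pmod 4$; reading the two $2(\cdot)$ exponents as signs $(-1)^{(\cdot)}$, this is exactly the phase of $Z_{b_ib_j\vct e_j}P_{[ji]\vct b}Z_{\{i,j\}}^{b_i}$ (the three diagonal factors commute, so their order is irrelevant, and $Z_{\{i,j\}}^{b_i}=Z_{b_i\{\{i,j\}\}}$).

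The substantial case is~\eqref{conj-ZB-xij}, and this is the step I expect to require the most care. Substituting $x_i\mapsto x_i\oplus x_j$ in $q_{\mat B}(\vct x)=\sum_{k<l}b_{kl}x_kx_l$ affects only the pairs containing $i$; using $(x_i\oplus x_j)x_l\equiv x_ix_l+x_jx_l\pmod 2$ for $l\neq j$ and, crucially, $(x_i\oplus x_j)x_j\equiv x_ix_j+x_j\pmod 2$ (here $x_j^2=x_j$ is what spawns a genuinely linear term), one obtains $q_{\mat B}([ij]\vct x)\equiv q_{\mat B}(\vct x)+x_j\sum_{l\neq i,j}b_{il}x_l+b_{ij}x_j\pmod 2$. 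On the other hand, by Proposition~\ref{tij-mult}, $\mat B':=[ji]\mat B[ij]$ is obtained from $\mat B$ by adding row $i$ to row $j$ and then column $i$ to column $j$; from this description one checks that $\mat B'$ is again symmetric with zero diagonal (its $(j,j)$ entry equals $2b_{ij}\equiv 0$, so that $Z_{\mat B'}\in\czg$ is legitimately defined) and that $q_{\mat B'}(\vct x)\equiv q_{\mat B}(\vct x)+x_j\sum_{l\neq i,j}b_{il}x_l\pmod 2$. The delicate point is precisely to verify that the bilinear correction $x_j\sum_{l\neq i,j}b_{il}x_l$ produced by the change of variable and the one produced by the matrix congruence coincide on the nose; granting this, subtracting the two congruences leaves $q_{\mat B}([ij]\vct x)\equiv q_{\mat B'}(\vct x)+b_{ij}x_j\pmod 2$, that is, $X_{[ij]}Z_{\mat B}X_{[ij]}=Z_{b_{ij}\vct e_j}Z_{[ji]\mat B[ij]}$.

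Finally, normality. The group $\czxpg$ is generated by the generators $P_i,Z_{ij}$ of $\czpg$ together with the gates $X_{ij}$, and each $X_{ij}$ is an involution; moreover the generators of $\czpg$ commute with one another, so $\czpg$ is normalised by each of them. Hence $\czpg$ is a normal subgroup of $\czxpg$ as soon as $X_{[ij]}\,\czpg\,X_{[ij]}\subseteq\czpg$ for all $i,j$, and this is exactly what the three formulas~\eqref{conj-Za-xij}--\eqref{conj-ZB-xij} (or, already at the level of generators, the special rules~\eqref{conj-Zij}--\eqref{conj-Pj}) provide. Combining the three formulas with the multiplication rule~\eqref{czpg-mult} and the unique factorisation $Z_{\vct v}P_{\vct b}Z_{\mat B}$ then gives the conjugate of an arbitrary element of $\czpg$ by an arbitrary $\cnot$ gate, completing the picture.
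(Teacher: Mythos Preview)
Your argument is correct. The difference from the paper's proof is the level at which you work: you compute directly on the computational basis, exploiting that all three operators are diagonal and that conjugation by $X_{[ij]}$ amounts to precomposing the phase function with $\vct x\mapsto[ij]\vct x$. The paper instead derives \eqref{conj-Za-xij} and \eqref{conj-Pb-xij} by assembling the single-gate rules \eqref{conj-Zi}--\eqref{conj-Pj} via Proposition~\ref{tij-mult}, and for \eqref{conj-ZB-xij} it splits $\mat B=b_{ij}\{\{i,j\}\}\oplus\mat B_i'\oplus\mat B_i^{c}$ (separating the pairs that contain $i$), computes $X_{[ij]}Z_{\mat B}X_{[ij]}$ from \eqref{conj-Zij}--\eqref{conj-Zpq}, computes $Z_{[ji]\mat B[ij]}$ from a piecewise matrix calculation, and compares the two expressions. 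Your route is more uniform and self-contained: it never appeals to the preliminary identities \eqref{conj-Zij}--\eqref{conj-Pj}, and the mod~4 bookkeeping for \eqref{conj-Pb-xij} is handled cleanly in one line. The paper's route has the complementary advantage that the extra factor $Z_{b_{ij}\vct e_j}$ in \eqref{conj-ZB-xij} appears transparently as the $Z_j$ defect produced by the special rule \eqref{conj-Zij}, and it reuses building blocks that recur throughout the rest of the paper.
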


\begin{proof}
  Identities \eqref{conj-Za-xij} and \eqref{conj-Pb-xij} are direct consequences of the conjugation relations \eqref{conj-Zi}, \eqref{conj-Zj}, \eqref{conj-Pi}, \eqref{conj-Pj} and Proposition \eqref{tij-mult} applied to the vectors $\vct v$ and $\vct b$.
  Let us prove Identity \eqref{conj-ZB-xij}.
  Let $\mat{B}_i=\{\{p,q\}\in \mat B\mid i\in\{p,q\}\}$, $\mat{B}_i^{c}=\mat{B}_i\oplus \mat B$ and $\mat{B}_i'=\mat{B}_i\oplus b_{ij}\{\{i,j\}\}$, then $\mat B=b_{ij}\{\{i,j\}\}\oplus \mat{B}_i'\oplus \mat{B}_i^c$.
  On one hand,   $[ji]\mat B[ij]=b_{ij}[ji]\{\{i, j\}\}[ij]\oplus[ji]\mat{B}_i'[ij]\oplus [ji]\mat{B}_i^c[ij]$.
  We check that $[ji]\{\{i, j\}\}[ij] = \{\{i, j\}\}$,
  $[ji]\{\{i, k\}\}[ij] = \{\{i, k\}, \{j, k\}\}$ when $k\neq j$ and $[ji]\{\{p, q\}\}[ij] = \{\{p, q\}\}$ when $p,q\neq i$ 
  (recall that $\{\{p,q\}\}$ denotes the matrix in $\BG$ whose entries are 0 but entries $(p,q)$ and $(q,p)$ that are 1).
  Hence
  \begin{equation}\label{one-hand}
    Z_{[ji]\mat B[ij]}=Z_{ij}^{b_{ij}}Z_{\mat{B}_i^c}\prod_{k\in \Lambda_{i}}Z_{ik}Z_{jk},
    \end{equation}
  where $\Lambda_{i}=\{k \mid \{i,k\}\in \mat{B}_i'\}$.
On the other hand, $X_{[ij]}Z_{\mat B}X_{[ij]}=X_{[ij]}Z_{ij}^{b_{ij}}Z_{\mat{B}_i'}Z_{\mat{B}_i^c}X_{[ij]}$, so using \eqref{conj-Zij}, \eqref{conj-Zik}
  and \eqref{conj-Zpq}, one has
  \begin{equation}\label{other-hand}
    X_{[ij]}Z_{\mat B}X_{[ij]}=Z_{ij}^{b_{ij}} Z_j^{b_{ij}}Z_{\mat{B}_i^c}\prod_{k\in \Lambda_i}Z_{ik}Z_{jk}.
  \end{equation}
  As $ Z_j^{b_{ij}}= Z_{b_{ij}\mat{e}_j}$, we conclude by comparing \eqref{one-hand} and \eqref{other-hand}.
  \end{proof}

  We can extend Identity \ref{conj-ZB-xij} to the case of any unitary matrix $X_{\mat A}$.
  
  \begin{prop}\label{ZB-XA-prop}
    For any matrix $\mat B$ in $\BG$ and any matrix $\mat A$ in $\GL$, one has 
 \begin{equation}
  X_{\mat{A}}Z_{\mat{B}}X_{\mat{A}}^{-1}= Z_{q_{\mat B}(\mat A^{-1})}Z_{\mat{A}^{-\T}\mat{B}\mat{A}^{-1}}\ ,\label{conj-ZB-XA}
 \end{equation}
 where $q_{\mat B}$ is the quadratic form defined by $\mat B$,
 $q_{\mat B}(\mat A)$ is a shorthand for the vector $[q_{\mat B}(\vct{c_0}),\dots,q_{\mat B}(\vct{c_{n-1}})]^{\T}$, 
 $\vct{c}_0,\dots, \vct{c}_{n-1}$ are the columns of matrix $\mat A$
 and $\mat{A}^{-\T}$ is a shorthand for $\left(\mat{A}^{\T}\right)^{-1}$.
\end{prop}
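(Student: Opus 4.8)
The plan is to prove the identity by evaluating both sides on an arbitrary computational basis vector $\ket{x}$ and invoking the fact that a unitary operator on $\HS^{\otimes n}$ is determined by its action on the computational basis. Since $\Phi$ is an isomorphism (Theorem \ref{X-GL}), one has $X_{\mat A}\ket{x}=\ket{\mat A\vct x}$ and $X_{\mat A}^{-1}=X_{\mat A^{-1}}$, so, using $Z_{\mat B}\ket{y}=(-1)^{q_{\mat B}(\vct y)}\ket{y}$,
\[
X_{\mat A}Z_{\mat B}X_{\mat A}^{-1}\ket{x}=X_{\mat A}Z_{\mat B}\ket{\mat A^{-1}\vct x}=(-1)^{q_{\mat B}(\mat A^{-1}\vct x)}\ket{x}.
\]
Thus the whole problem reduces to rewriting the pulled-back quadratic form $\vct x\mapsto q_{\mat B}(\mat A^{-1}\vct x)$ on $\F^n$ as the sum of a ``pure quadratic'' (off-diagonal) part and a ``linear'' (diagonal) part, using the identity $x_k^2=x_k$ valid over $\F$.

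To carry this out I would set $\mat M=\mat A^{-1}$ with columns $\vct m_0,\dots,\vct m_{n-1}$, expand
\[
q_{\mat B}(\mat M\vct x)=\sum_{i<j}b_{ij}(\mat M\vct x)_i(\mat M\vct x)_j=\sum_{k,l}\Bigl(\sum_{i<j}b_{ij}m_{ik}m_{jl}\Bigr)x_kx_l,
\]
and separate the terms with $k=l$ from those with $k\neq l$. For $k=l$, the coefficient of $x_k^2=x_k$ is exactly $q_{\mat B}(\vct m_k)$, i.e. the $k$-th component of the vector $q_{\mat B}(\mat A^{-1})$; this contributes the factor $(-1)^{q_{\mat B}(\mat A^{-1})\cdot\vct x}$, which by \eqref{zu} is $Z_{q_{\mat B}(\mat A^{-1})}$. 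For $k\neq l$, one checks $\sum_{i<j}b_{ij}(m_{ik}m_{jl}+m_{il}m_{jk})=(\mat M^{\T}\mat B\mat M)_{kl}$ by splitting $\sum_{i,j}b_{ij}m_{ik}m_{jl}$ over $i<j$ and $i>j$ and using $b_{ij}=b_{ji}$; since $\mat B$ has zero diagonal and $2=0$ in $\F$, the matrix $\mat M^{\T}\mat B\mat M=\mat A^{-\T}\mat B\mat A^{-1}$ is symmetric with zero diagonal, hence names an element of $\BG$, and the off-diagonal part of $q_{\mat B}(\mat M\vct x)$ is precisely $q_{\mat A^{-\T}\mat B\mat A^{-1}}(\vct x)$. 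Combining, $q_{\mat B}(\mat A^{-1}\vct x)=q_{\mat A^{-\T}\mat B\mat A^{-1}}(\vct x)+q_{\mat B}(\mat A^{-1})\cdot\vct x$, so
\[
X_{\mat A}Z_{\mat B}X_{\mat A}^{-1}\ket{x}=(-1)^{q_{\mat A^{-\T}\mat B\mat A^{-1}}(\vct x)}(-1)^{q_{\mat B}(\mat A^{-1})\cdot\vct x}\ket{x}=Z_{q_{\mat B}(\mat A^{-1})}Z_{\mat A^{-\T}\mat B\mat A^{-1}}\ket{x},
\]
and since the two sides agree on every basis vector they are equal.

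An alternative would be an induction on the length $\ell$ of a decomposition $\mat A=[i_1j_1]\cdots[i_\ell j_\ell]$, with base case \eqref{conj-ZB-xij} from Proposition \ref{normal} and one transvection propagated at a time; this is feasible but bookkeeping-heavy, since the linear correction vectors accumulate and themselves get conjugated at each step, so I would only use it as a consistency check on the single-transvection case.

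The main obstacle is not conceptual but is precisely the careful separation of the diagonal and off-diagonal contributions of the pulled-back form over $\F$: because $2=0$ one cannot use the naive symmetric representative $\tfrac12\vct y^{\T}\mat B\vct y$ and must keep track of the off-diagonal (upper-triangular) representative, verifying both that the resulting off-diagonal matrix is \emph{exactly} $\mat A^{-\T}\mat B\mat A^{-1}$ and that it genuinely has zero diagonal so that $Z_{\mat A^{-\T}\mat B\mat A^{-1}}$ is well defined. It is worth double-checking the specialization $\mat A=[ij]$ (where $\mat A^{-1}=[ij]$, $\mat A^{-\T}=[ji]$ and $q_{\mat B}([ij])=b_{ij}\vct e_j$ since only the $j$-th column $\vct e_i\oplus\vct e_j$ contributes), which recovers \eqref{conj-ZB-xij} and explains why the exponent vector is written $q_{\mat B}(\mat A^{-1})$.
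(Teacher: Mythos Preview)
Your proof is correct. The key identity $q_{\mat B}(\mat A^{-1}\vct x)=q_{\mat A^{-\T}\mat B\mat A^{-1}}(\vct x)\oplus q_{\mat B}(\mat A^{-1})\cdot\vct x$ is established cleanly, and the verification that $\mat A^{-\T}\mat B\mat A^{-1}$ has zero diagonal (hence lies in $\BG$) is exactly the point that needs checking.

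Your route is genuinely different from the paper's. The paper does not expand $q_{\mat B}(\mat A^{-1}\vct x)$ at all. Instead it argues in two steps: first, iterating Identities \eqref{conj-ZB-xij} and \eqref{conj-Za-xij} along a transvection decomposition of $\mat A$ shows a priori that $X_{\mat A}Z_{\mat B}X_{\mat A}^{-1}=Z_{\vct v}Z_{\mat A^{-\T}\mat B\mat A^{-1}}$ for \emph{some} $\vct v\in\F^n$ (the $\BG$-part is forced because congruence by transvections composes to congruence by $\mat A$); second, it identifies $\vct v$ by evaluating $Z_{\vct v}=X_{\mat A}Z_{\mat B}X_{\mat A}^{-1}Z_{\mat A^{-\T}\mat B\mat A^{-1}}$ on the single vectors $\ket{\vct e_i}$, using that $q_{\mat B'}(\vct e_i)=0$ for any $\mat B'\in\BG$, which yields $v_i=q_{\mat B}(\vct c_i)$ directly. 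So the paper leverages the already-proved single-transvection case to avoid the index manipulation you carry out, at the price of a less self-contained argument. Your approach is more elementary (it needs only the action formulas $X_{\mat A}\ket{x}=\ket{\mat A\vct x}$ and $Z_{\mat B}\ket{x}=(-1)^{q_{\mat B}(\vct x)}\ket{x}$, not Proposition~\ref{normal}) and gives the full decomposition of the pulled-back quadratic form in one shot; the paper's approach is shorter once Proposition~\ref{normal} is in hand and sidesteps the $k\neq l$ bookkeeping entirely. Amusingly, the induction you mention as an alternative and set aside as ``bookkeeping-heavy'' is essentially the structure the paper uses---but only to get the \emph{shape} of the answer, not to track the accumulated correction vector explicitly.
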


\begin{proof}
    Since Identities \eqref{conj-ZB-xij} and \eqref{conj-Za-xij} hold, it is clear that $X_{\mat{A}}Z_{\mat{B}}X_{\mat{A}}^{-1}$ can be written in the form
  $ Z_{\vct v}Z_{\mat A^{-\T}\mat B \mat A^{-1}}$ for some $\vct v$ in $\F^n$. So we have to prove that $\vct v=q_{\mat B}(\mat A^{-1})$. We start from $ Z_{\mat v}=X_{\mat{A}}Z_{\mat{B}}X_{\mat{A}}^{-1}Z_{\mat{B'}}$, where
  $\mat{B'}=\mat A^{-\T}\mat B \mat A^{-1}$. Let $\ket{\psi}= Z_{\vct v}\ket{\vct{e_i}}$, then $\ket{\psi}=(-1)^{v_i}\ket{\vct{e_i}}$.
  On the other hand, $\ket{\psi}=X_{\mat{A}}Z_{\mat{B}}X_{\mat{A}}^{-1}\ket{\vct{e_i}}$ since $q_{\mat{B'}}(\vct{e_i})=0$ for any $\vct{B'}\in\BG$. Besides, $X_{\mat{A}}^{-1}\ket{\vct{e_i}}= \ket{\mat A^{-1}\vct{e_i}}=\ket{\vct{c_i}}$ where $\vct{c_i}$ is the  column $i$ of $\mat A^{-1}$, hence
  $\ket{\psi}=X_{\mat{A}}Z_{\mat{B}}\ket{\vct{c}_i}=(-1)^{q_{\mat B}(\vct{c}_i)}X_{\mat{A}}\ket{\vct{c}_i}=(-1)^{q_{\mat B}(\vct{c}_i)}\ket{\vct{e}_i}$.
  Finally we see that $v_i= q_{\mat B}(\vct{c}_i)$, thus $\vct v=q_{\mat B}(\mat A^{-1})$.
  \end{proof} 
  
  From Identity \eqref{sijxij}, the gate $S_{ij}$ is in $\cnotg$ and is therefore a  gate of type $X_{\mat{A}}$. Let $(ij)$ be the permutation matrix of $\GL$ associated to the transposition $\tau$ of $\symg$ that swaps  $i$ and $j$, then $(ij)=[ij][ji][ij]=[ji][ij][ji]$, hence
  $S_{ij}=X_{[ij]}X_{[ji]}X_{[ij]}=X_{[ij][ji][ij]}=X_{(ij)}$.
  The group generated by the $X_{(ij)}$ gates is a subgroup of $\cnotg$ that is isomorphic to $\symg$ and we denote by $X_{\mat \sigma}$ the unitary matrix associated to the permutation matrix $\mat \sigma$ in $\GL$. The conjugation by $X_{\sigma}$ is given by $X_{\sigma}Z_{\{p,q\}}X_{\sigma}=Z_{\{\sigma(p),\sigma(q)\}}$ and, in particular, one has  $X_{(ij)}Z_{\{p,q\}}X_{(ij)}=Z_{\{\tau(p),\tau(q)\}}$ (see \cite{2019BL} for further development on $\cz$ and $\swap$ gates). As a consequence of Propositions \ref{normal} and \ref{ZB-XA-prop}, the followings identities hold :
\begin{align}
  & X_{(ij)}  Z_{\vct v}X_{(ij)}=  Z_{(ij)\vct v}\ , \label{conj-Za-sij}\\
  &X_{(ij)}  P_{\vct b}X_{(ij)}=  P_{(ij)\vct b}\ ,\label{conj-Pb-sij}\\
  &X_{(ij)}Z_{\mat B}X_{(ij)} = Z_{(ij)\mat{B}(ij)}\ ,\label{conj-ZB-sij}\\
  &X_{\sigma}Z_{\mat{B}}X_{\sigma}^{-1}= Z_{\sigma \mat B\sigma^{-1}} \text{\ (for any permutation matrix } \sigma ). \label{conj-ZB-sigma}
\end{align}

Proposition \ref{normal} provides straightforwardly an algorithm to write in \nf any quantum circuit $C$ composed of
$\PP,\cz$ and $\cnot$ gates. This \nf is called the \pzx\ (Theorem \ref{decompo})  and the algorithm is called the $\ctopzx$ algorithm (Figure \ref{C-to-PZX}).

\begin{figure}[h]
  $\mathtt{ALGORITHM}$ : Compute the \pzx for a stabilizer circuit of $\PP$, $\cz$ and $\cnot$ gates.
  
    $\mathtt{INPUT} :  (C,F_{\text{in}})$, where
    
    $\phantom{\mathtt{INPUT} : }$  C is a circuit  given as a matrix product $C=\prod_{k=1}^{\ell}M_k$, of $\ell$ quantum gates
    $\phantom{\mathtt{INPUT} : }$  in the set $\{ P_i,Z_{\{i,j\}},X_{[ij]}\mid 0\leq i,j\leq n-1\}$,

    $\phantom{\mathtt{INPUT} : }$ $F_{\text{in}}$ is a circuit which is already in \pzx.

    $\mathtt{OUTPUT} : F_{\text{out}}$ is a circuit equivalent to the product $CF_{\mathrm{in}}$,
    
    $\phantom{\mathtt{OUTPUT} : }$ written in \pzx $Z_{\vct{v}} P_{\vct{b}}Z_{\mat{B}}X_{\mat{A}}$.

    $\mathtt{1}\quad\ $/* \emph{initialisation of the form} $F_{\text{out}}$ */ 

    $\mathtt{2}\quad\ F_{\text{out}}\leftarrow F_{\text{in}};$ 

    $\mathtt{3}\quad\ \mathtt{for}\ k=\ell\ \mathtt{to}\ 1\ \mathtt{do}$

    $\mathtt{4}\quad\quad\ $/* \emph{Case a : $M_k$ is a $\cz$ gate} */
    
    $\mathtt{5}\quad\quad\ \mathtt{if}\  M_k=Z_{\{i,j\}}\ \mathtt{then}$
    
    $\mathtt{6}\quad\quad\quad\  \mat B\leftarrow \mat B \oplus\{\{i,j\}\}\ ; $

    $\mathtt{7}\quad\quad\ $/* \emph{Case b : $M_k$ is a $\PP$ gate} */    

    $\mathtt{8}\quad\quad\ \mathtt{else\ if\ } M_k= P_i\ \mathtt{then}$
    
    $\mathtt{9}\quad\quad\quad\ \vct v\leftarrow \vct v\oplus b_i\vct{e}_i\ ; \ \vct{b}\leftarrow \vct{b} \oplus \vct{e}_i\ ; $

    $\mathtt{10}\quad\quad $/* \emph{Case c : $M_k$ is a $\cnot$ gate} */
    
    $\mathtt{11}\quad\quad\mathtt{else\ }\ $
    
    $\mathtt{12}\quad\quad\quad \vct v\leftarrow [ji]\vct v \oplus b_ib_j\vct{e}_j\oplus b_{ij}\vct{e}_j\ ;$

    $\mathtt{13}\quad\quad\quad \mat B\leftarrow [ji]\mat B[ij]\oplus b_i\{\{i,j\}\}\ ;$

    $\mathtt{14}\quad\quad\quad \vct{b}\leftarrow [ji]\vct{b}\ ;\ \mat A\leftarrow [ij]\mat A\ ;$
    
    $\mathtt{15}\quad\mathtt{return}\ F_{\text{out}};$
    
    \caption{ Algorithm  $\ctopzx$ : the time complexity of this algorithm is only  $O(n\ell)$ since we use row and column additions instead of matrix multiplication in Case c (thanks to Proposition \ref{tij-mult}). At the end of the algorithm, the matrix $\mat A$ is the product of all the transvections corresponding to the $\cnot$ gates that appear in the input circuit $C$, in the same order.\label{C-to-PZX}}
  \end{figure}

\begin{theo}[\textbf {The \pzx for a quantum circuit of $\PP,\cz$ and $\cnot$ gates}\label{decompo}]
  Any element of $\czxpg$ admits a unique decomposition in the form
  \begin{equation}
    Z_{\vct{v}} P_{\vct{b}}Z_{\mat{B}}X_{\mat{A}},\label{nf1}
  \end{equation}
  where $\vct v,\vct b\in\F^n$, $\mat B\in\BG$, $\mat A\in\GL$.
  
  The group $\czxpg$ is the semidirect product of its normal subgroup $\czpg$ with $\cnotg$, \textit{i.e.} $\czxpg=\czpg\rtimes\cnotg$.
  The order of $\czxpg$ is therefore $2^{n(n+1)}\prod_{i=1}^n(2^i-1)$. 
\end{theo}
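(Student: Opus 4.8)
The plan is to exhibit $\czxpg$ as an internal semidirect product; the unique decomposition and the order formula then both drop out.

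First I would settle existence. By Proposition \ref{normal}, $\czpg$ is normal in $\czxpg$ and $\cnotg$ is a subgroup, so the product set $\czpg\,\cnotg$ is a subgroup of $\czxpg$; since it contains every generator $P_i$, $Z_{ij}$ (lying in $\czpg$) and $X_{ij}$ (lying in $\cnotg$), it must equal $\czxpg$. Hence any $U\in\czxpg$ factors as $U=NX$ with $N\in\czpg$, $X\in\cnotg$, and writing $N$ in its unique internal form $Z_{\vct v}P_{\vct b}Z_{\mat B}$ inside $\czpg\cong\Z_4^n\times\BG$ and $X=X_{\mat A}$ through the isomorphism $\cnotg\cong\GL$ of Theorem \ref{X-GL} yields the decomposition \eqref{nf1}. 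Alternatively, the $\ctopzx$ algorithm of Figure \ref{C-to-PZX} gives a constructive existence proof: starting from any word in the generators one pushes the $\cnot$ gates to the right one at a time, each transposition being licensed by the conjugation identities of Proposition \ref{normal}, and arrives at a word of the form \eqref{nf1}.

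The heart of the argument is the intersection identity $\czpg\cap\cnotg=\{I\}$. Every element of $\czpg$ is a diagonal matrix, being a product of the diagonal matrices $P_i$ and $Z_{ij}$. Every element of $\cnotg$ is a permutation matrix with $0$--$1$ entries: indeed $X_{ij}\ket{x}=\ket{[ij]\mat x}$ merely permutes the computational basis, and a product of such operators still does. A matrix that is simultaneously diagonal and a $0$--$1$ permutation matrix is the identity, which proves the claim. Together with $\czxpg=\czpg\,\cnotg$ and the normality of $\czpg$, the standard internal criterion for a semidirect product gives $\czxpg=\czpg\rtimes\cnotg$. Uniqueness of \eqref{nf1} then follows at once: from an equality $Z_{\vct v}P_{\vct b}Z_{\mat B}X_{\mat A}=Z_{\vct{v'}}P_{\vct{b'}}Z_{\mat{B'}}X_{\mat{A'}}$ one gets $X_{\mat A}X_{\mat{A'}}^{-1}=(Z_{\vct v}P_{\vct b}Z_{\mat B})^{-1}Z_{\vct{v'}}P_{\vct{b'}}Z_{\mat{B'}}\in\czpg\cap\cnotg=\{I\}$, whence $\mat A=\mat{A'}$ by injectivity of $\Phi$, and then $\vct v=\vct{v'}$, $\vct b=\vct{b'}$, $\mat B=\mat{B'}$ by uniqueness of the internal form of $\czpg$.

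Finally the order is just a product: since the decomposition \eqref{nf1} is unique, $|\czxpg|=|\czpg|\cdot|\cnotg|$, with $|\czpg|=|\Z_4^n\times\BG|=4^n\cdot 2^{\frac{n(n-1)}{2}}$ and $|\cnotg|=|\GL|=2^{\frac{n(n-1)}{2}}\prod_{i=1}^n(2^i-1)$, so $|\czxpg|=2^{2n}\cdot 2^{n(n-1)}\prod_{i=1}^n(2^i-1)=2^{n(n+1)}\prod_{i=1}^n(2^i-1)$. As for where the difficulty lies: given the preceding results almost every step is routine bookkeeping, and the only place with genuine content is the diagonal-versus-permutation dichotomy that yields $\czpg\cap\cnotg=\{I\}$, together with the short but easily overlooked point that $\czpg\,\cnotg$ is genuinely a subgroup, so that existence of the form is not in doubt.
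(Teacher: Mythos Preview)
Your proof is correct, and it takes a cleaner, more structural route than the paper's. The paper argues in the opposite direction: it first establishes \emph{existence} constructively via the $\ctopzx$ algorithm, then proves \emph{uniqueness} by a direct computation on basis kets (if $\mat A\neq\mat A'$ pick $\vct x$ with $\mat A\vct x\neq\mat A'\vct x$ and get two distinct collinear basis vectors; then distinguish $\vct b$ from $\vct b'$ by applying both sides to $\ket{\vct e_i}$ and reading off a factor of $\ii$; finally separate $\vct v,\mat B$ from $\vct v',\mat B'$ by evaluating on $\ket{\vct e_i}$ and $\ket{\vct e_i\oplus\vct e_j}$), and only then concludes the semidirect product structure from existence and uniqueness. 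You instead establish the semidirect product \emph{first}, via the single observation that $\czpg$ consists of diagonal matrices while $\cnotg$ consists of $0$--$1$ permutation matrices, forcing $\czpg\cap\cnotg=\{I\}$; existence and uniqueness then follow for free from the internal semidirect product criterion together with the already-stated uniqueness of the form $Z_{\vct v}P_{\vct b}Z_{\mat B}$ inside $\czpg$. Your approach is shorter and more conceptual; the paper's has the advantage of being fully constructive and of not relying on the prior uniqueness statement for $\czpg$ (which the paper asserts but does not prove in detail).
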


\begin{proof}
  The existence of the decomposition can be proved by using the $\ctopzx$ algorithm described in Figure \ref{C-to-PZX} :
  let $\ell\geq 0$ be an integer and $C= \prod_{k=1}^{\ell}M_k$ be an element of $\czxpg$, where $M_{k}$ is a unitary in the gate set $\{ P_i,Z_{ij},X_{[ij]}\mid 0\leq i,j\leq n-1\}$. Then the form $ Z_{\vct{v}} P_{\vct{b}}Z_{\mat{B}}X_{\mat{A}}$ for $C$ is the result of
  Algorithm $\ctopzx$  applied to $C$ and $F_{\mathrm{in}}=I$,
  that is : $Z_{\vct{v}} P_{\vct{b}}Z_{\mat{B}}X_{\mat{A}}=\ctopzx(C, I)$.  
  Let us prove, by contradiction, the unicity of this decomposition.
  Suppose that $ Z_{\vct{v}} P_{\vct{b}}Z_{\mat{B}}X_{\mat{A}}= Z_{\vct{v}'} P_{\vct{b}'}Z_{\mat{B}'}X_{\mat{A}'}$.
  If $\mat A\neq \mat{A}'$, there exists $\vct x\in\F^n$ such that $\mat A\vct x\neq \mat A'\vct x$. But this leads to a contradiction
  because $ Z_{\vct{v}} P_{\vct{b}}Z_{\mat{B}}X_{\mat{A}}\ket{\vct x}= Z_{\vct v'} P_{\vct b'}Z_{\mat B'}X_{\mat A'}\ket{\vct x}$, so $ Z_{\vct{v}} P_{\vct{b}}Z_{\mat{B}}\ket{\mat A\vct x}= Z_{\vct v'} P_{\vct b'}Z_{\mat B'}\ket{\mat A'\vct x}$, hence $\ket{\mat A\vct x}$ and $\ket{\mat A'\vct x}$ are two different basis vector that are collinear, which is impossible. So $\mat A=\mat A'$ and $ Z_{\vct{v}} P_{\vct{b}}Z_{\mat{B}}= Z_{\vct v'} P_{\vct b'}Z_{\mat B'}$.
  If $\vct b\neq \vct b'$, we can suppose, without loss of generality, that there exists $i$ such that $b_i=1$ and $b'_i=0$. Then $ P_{\vct{b}}\ket{\vct e_i}=\ii\ket{\vct e_i}$   and $ P_{\vct b'}\ket{\vct e_i}=\ket{\vct e_i}$, so $\ii Z_{\vct{v}}Z_{\mat{B}}\ket{\vct e_i}= Z_{\vct v'}Z_{\mat B'}\ket{\vct e_i}$, hence $\ii\ket{\vct e_i}=\pm\ket{\vct e_i}$, which is not possible. Thus $\vct b=\vct b'$. Finally, if $ Z_{\vct{v}}Z_{\mat{B}}= Z_{\vct v'}Z_{\mat B'}$, we show that $\vct v=\vct v'$ and $\mat B=\mat B'$ by comparing their action on $\ket{\vct e_i}$ and $\ket{\vct e_i\oplus \vct e_j}$ for any $i,j$.

  Since $\czpg$ is a normal subgroup of $\czxpg$, the semidirect product structure is a consequence of the existence and uniqueness of the decomposition. The order of $\czxpg$ is computed using Theorem \ref{X-GL}.  
\end{proof}

\subsection{Toolbox of conjugation rules} 
The  $\ctopzx$ algorithm computes a \nf for particular stabilizer circuits, consisting only of $\PP, \cz$ and $\cnot$ gates.
  In the next section, we use the $\ctopzx$ algorithm as a subroutine called by the main algorithm that computes a \nf for any stabilizer circuit.
  In order to describe this algorithm,  we need some more conjugation rules. 
Let $\h=\prod_{i=0}^{n-1} H_i$, let $U^{\h}= \h U \h^{-1}= \h U \h$ for any $U\in\UG$. We use $U^{- \h}$ as a shorthand notation for $(U^{ \h})^{-1}$. The following identities hold.
\begin{align}
  & X_{\mat{A}}^{\h} =X_{\mat A^{-\T}}\label{conj-XA-h}\\
  & P_i^{ \h} P_i P_i^{- \h}=\ee^{\ii\frac{\pi}{4}} H_i X_i\label{conj-p-ph}\\
  & P_i^{ \h}Z_{\{i,k\}} P_i^{- \h}=Z_{\{i,k\}}X_{[ik]} P_k\label{conj-z-ph}\\
  &Z_{\{i,j\}}^{ \h} P_jZ_{\{i,j\}}^{ -\h}= P_i^{ \h}X_{[ij]} P_j\label{conj-pj-zijh}\\
  &Z_{\{i,j\}}^{ \h}Z_{\{i,j\}}Z_{\{i,j\}}^{ -\h}=Z_{\{i,j\}}Z_{\{i,j\}}^{ \h}Z_{\{i,j\}}= H_i H_jX_{(ij)}=X_{(ij)} H_i H_j\label{conj-zij-zijh}\\
  &Z_{\{i,j\}}^{ \h}Z_{\{i,k\}}Z_{\{i,j\}}^{ -\h}=X_{[jk]}Z_{\{i,k\}}=Z_{\{i,k\}}X_{[jk]}\quad (i,j,k\text{ distinct})\label{conj-zik-zijh}
 \end{align}

To write a stabilizer circuit in normal form, we also need the conjugation rules of a Pauli product of type $X_{\vct u} Z_{\vct v}$
by the gates $P_i$, $X_{[ij]}$, $Z_{\{i,j\}}$ and $\h$.
\begin{align}
  & P_i X_{\vct u} Z_{\vct v} P_i^{-1}=\ii^{u_i} X_{\vct u} Z_{\vct v\oplus u_i \vct e_i}\label{conj-xz-pi}\\
  &X_{[ij]} X_{\vct u} Z_{\vct v}X_{[ij]}= X_{[ij]\vct u} Z_{[ji]\vct v}\label{conj-xz-xij}\\
  &Z_{\{i,j\}} X_{\vct u} Z_{\vct v}Z_{\{i,j\}}=(-1)^{u_i u_j} X_{\vct u} Z_{\vct v\oplus u_j \vct e_i\oplus u_i \vct e_j}=(-1)^{u_i u_j} X_{\vct u} Z_{\vct v\oplus \{\{i,j\}\}\vct u}\label{conj-xz-zij}\\
  & \h X_{\vct{u}} Z_{\vct{v}} \h= X_{\vct{v}} Z_{\vct{u}}\label{conj-xz-h}
\end{align}

The proofs of Identities \eqref{conj-XA-h} to \eqref{conj-xz-h} are short and simple. They are based on the different conjugation rules already seen in
this section and on the observation that, to conjugate by $\h$, it is just necessary to take in account the indices concerned by the operation (for instance $Z_{\{i,j\}}^{ \h}=H_iH_jZ_{\{i,j\}}H_jH_i$).
We indicate in the tables below the formulas to be used to prove each identity. 
\begin{center}
\begin{tabular}{|c|c|}\hline
  Identity\dots& comes from\dots\\\hline
  \ref{conj-XA-h}& \ref{conj-xij-h}\\\hline
  \ref{conj-p-ph}&\ref{hp3}, \ref{conj-z-h}, \ref{conj-x-p}, \ref{yixz}\\\hline
  \ref{conj-z-ph}&\ref{conj-Pi}, \ref{hp3}, \ref{zijxij}\\\hline
  \ref{conj-pj-zijh}&\ref{zijxij}, \ref{conj-Pi}\\\hline
  \ref{conj-zij-zijh}& \ref{zijxij}, \ref{sijxij}\\\hline
  \end{tabular}\qquad
\begin{tabular}{|c|c|}\hline
  Identity\dots& comes from\dots\\\hline
  \ref{conj-zik-zijh} & \ref{zijxij}, \ref{conj-xij-xjk}\\\hline
  \ref{conj-xz-pi}&\ref{conj-x-p}, \ref{yixz}\\\hline
  \ref{conj-xz-xij}&\ref{conj-Za-xij}, \ref{conj-z-h}, \ref{conj-xij-h}\\\hline
  \ref{conj-xz-zij}&\ref{conj-xz-xij}, \ref{anticom}, \ref{conj-z-h}\\\hline
  \ref{conj-xz-h}&\ref{conj-z-h}\\\hline
\end{tabular}
\end{center}

Finally, we obtain more general conjugation rules by unitaries of type  $P_{\vct{b}}, X_{\mat A}$ or $Z_{\mat B}$ by iterating Identities \eqref{conj-xz-pi}, \eqref{conj-xz-xij} and \eqref{conj-xz-zij}.
\begin{align}
  & P_{\vct{b}} X_{\vct u} Z_{\vct v} P_{\vct{b}}^{-1}=\ii^{\sum u_i} X_{\vct u} Z_{\vct v\oplus\sum b_i u_i \vct e_i}=
    \ii^{\sum u_i} X_{\vct u} Z_{\vct v\oplus \vct b\odot \vct u}\label{conj-xz-pb}\\
  &X_{\mat A} X_{\vct u} Z_{\vct v}X_{\mat A}^{-1}= X_{\mat A\vct u} Z_{\mat A^{-\T}\vct v}\label{conj-xz-XA}\\
  &Z_{\mat B} X_{\vct u} Z_{\vct v}Z_{\mat B}=(-1)^{q_{\mat B}(\vct u)} X_{\vct u} Z_{\vct v\oplus \mat B\vct u}\label{conj-xz-ZB}
\end{align}

\section{The generalized \pzx for stabilizer circuits\label{NF}}
In this section we provide an algorithm to compute a \nf for stabilizer circuits. 
This form is a generalization of the \pzx  ($ Z_{\vct v} P_{\vct{b}}Z_{\mat{B}}X_{\mat{A}}$) introduced in the previous section \eqref{nf1}.
\begin{defi}
  A unitary matrix $C$ corresponding to a circuit of Clifford gates is in \gpzx  when it is written in the form 
\begin{equation}
  C=\ee^{\ii\phi}H_{\mathbf{r}} Z_{\mathbf{u}}P_{\mathbf{d}}Z_{\mathbf{D}} H_{\mathbf{s}} Z_{\mathbf{v}}P_{\mathbf b}Z_{\mathbf{B}}X_{\mathbf{A}},\label{nf-formula}
\end{equation}
where     $\vct r, \vct u,\vct d, \vct s, \vct v, \vct b$ are vectors in $\F^n$, $\mat D$ and $\mat B$ are matrices in $\BG$, $\mat A$ is an invertible matrix in $\GL$, and $\phi\in\{k\frac{\pi}{4},k\in\Z\}$.
\end{defi}
  We remark that, unlike the PZX form, the \gpzx is not unique. Indeed, from Identity \ref{conj-zij-zijh}, one has
  $ H_i H_jZ_{\{i,j\}} H_i H_jZ_{\{i,j\}}=Z_{\{i,j\}} H_i H_jX_{(ij)}$.
 
  \subsection{Stability properties of an intermediate form}

  We introduce a form for stabilizer circuits that we use as an intermediary technical step to compute the GenPZX form  and we prove three lemmas concerning this intermediate form.
  \begin{defi}
A unitary matrix $C$ corresponding to a circuit of Clifford gates is in \emph{\intnf\ } when it is written in the form 
\begin{equation}
C= H_{\vct a} P_{\vct{d}}Z_{\mat{D}} \h\ee^{\ii\phi} X_{\vct u} Z_{\vct v} P_{\vct{b}}Z_{\mat{B}}X_{\mat{A}},\label{pre-nf}
\end{equation}
where $\vct a, \vct d, \vct u, \vct v, \vct b$ are vectors in $\F^n$, $\mat D$ and $\mat B$ are matrices in $\BG$, $\mat A$ is an invertible matrix in $\GL$, and $\phi\in\{k\frac{\pi}{4},k\in\Z\}$ 
\end{defi}

The first lemma is quiet obvious and we write it just to keep our results consistent. The other two lemmas are more technical because we need to distinguish many cases and there are many variables. However, the calculations are simple, essentially based on the different conjugations rules of Section \ref{groups}. In order to make the reading easier, we use two colors : the \red{red} color to emphasize a part of an expression which is already in \intnf\ and the \blue{blue} color  to point out a part of an expression modified by the current computation or to indicate the next gates that we want to merge in the \intnf. We also use dots ($\cdot$) to separate blocks of unitary matrices.

\begin{lem}\label{stab-H}
  The \intnf\ \eqref{pre-nf} is stable by left multiplication by a Hadamard gate : if a unitary matrix $C$ is in  \intnf, then $H_iC$ can be written in  \intnf, for any $i=0\dots n-1$.
\end{lem}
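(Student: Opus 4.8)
The plan is to push the extra Hadamard gate into the leading layer $H_{\vct a}$ of the \intnf\ and to check that nothing else has to move. The only structural fact needed is the one already recalled in Section~\ref{background}: since the single-qubit gates $H_j$ are involutions (Identity~\eqref{involutions}) acting on distinct tensor factors, they commute pairwise, so $H_{\vct c}=\prod_j H_j^{c_j}$ depends only on the support vector $\vct c\in\F^n$ and $H_iH_{\vct c}=H_{\vct c\oplus\vct{e}_i}$ for every $\vct c$, irrespective of the value of $c_i$. In other words, left-multiplication by $H_i$ acts on a Hadamard layer exactly as the bit toggle $\vct c\mapsto\vct c\oplus\vct{e}_i$, and it does nothing to the gates placed to its right.

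Concretely, given $C=H_{\vct a} P_{\vct d}Z_{\mat D}\h\,\ee^{\ii\phi}X_{\vct u}Z_{\vct v}P_{\vct b}Z_{\mat B}X_{\mat A}$ in \intnf\ as in \eqref{pre-nf}, I would simply write
\[
H_iC=(H_iH_{\vct a})\, P_{\vct d}Z_{\mat D}\h\,\ee^{\ii\phi}X_{\vct u}Z_{\vct v}P_{\vct b}Z_{\mat B}X_{\mat A}=H_{\vct a\oplus\vct{e}_i}\, P_{\vct d}Z_{\mat D}\h\,\ee^{\ii\phi}X_{\vct u}Z_{\vct v}P_{\vct b}Z_{\mat B}X_{\mat A},
\]
which is again of the shape \eqref{pre-nf}, with $\vct a$ replaced by $\vct a\oplus\vct{e}_i$ and the remaining data $\vct d,\mat D,\phi,\vct u,\vct v,\vct b,\mat B,\mat A$ left unchanged. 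I do not expect any real obstacle here: unlike the two more technical stability lemmas that follow, this one involves no case distinction and invokes none of the conjugation rules of Section~\ref{groups} — it merely records that a Hadamard layer is parametrised by $\F^n$, which is exactly why it is "obvious" and is kept only for uniformity of the induction.
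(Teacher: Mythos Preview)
Your proposal is correct and matches the paper's own proof essentially verbatim: both simply absorb $H_i$ into the leading Hadamard layer via $H_iH_{\vct a}=H_{\vct a\oplus\vct e_i}$, leaving all other data unchanged. There is nothing to add.
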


\begin{proof}
\begin{align*}
H_iC&=\blue{H_{\vct e_i}}\red{H_{\vct a}P_{\vct{d}}Z_{\mat{D}}\ee^{\ii\phi} \h X_{\vct u} Z_{\vct v} P_{\vct{b}}Z_{\mat{B}}X_{\mat{A}}}\\
&=\red{H_{\vct a\oplus e_i} P_{\vct{d}}Z_{\mat{D}}\ee^{\ii\phi} \h X_{\vct u} Z_{\vct v} P_{\vct{b}}Z_{\mat{B}}X_{\mat{A}}}
\end{align*}
So $H_iC$ is in \intnf.
\end{proof}

\begin{lem}\label{stab-P}
  The \intnf\ \eqref{pre-nf} is stable by left multiplication by a phase gate : if a unitary matrix $C$ is in  \intnf, then $P_iC$ can be written in \intnf, for any $i=0\dots n-1$.
\end{lem}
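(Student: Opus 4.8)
The plan is to check directly that the shape \eqref{pre-nf} is preserved under left multiplication by $P_i$, arguing by cases on the $i$-th coordinate $a_i$ of the exponent vector $\vct a$ of $H_{\vct a}$.

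\textbf{Case $a_i=0$ (the easy case).} Then $H_{\vct a}$ contains no factor $H_i$, so $P_i$ commutes with $H_{\vct a}$ and we may slide it rightwards until it reaches $P_{\vct d}$, where it is absorbed: since $P_i^2=Z_i$ one has $P_iP_{\vct d}=Z_i^{d_i}P_{\vct d\oplus\vct e_i}$. The leftover Pauli factor $Z_i^{d_i}$ is diagonal, hence commutes with $P_{\vct d\oplus\vct e_i}$ and with $Z_{\mat D}$; carried further right it meets $\h$, and because conjugation by $\h$ only involves qubit $i$ we get $\h Z_i\h=X_i$, i.e. $Z_i^{d_i}\h=\h X_i^{d_i}$. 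The resulting Pauli-$X$ factor commutes through $\ee^{\ii\phi}$ and merges into $X_{\vct u}$, as $X_i^{d_i}X_{\vct u}=X_{\vct u\oplus d_i\vct e_i}$. Thus $P_iC$ is again in \intnf, with $\vct d$ replaced by $\vct d\oplus\vct e_i$, $\vct u$ by $\vct u\oplus d_i\vct e_i$, and all the other data unchanged.

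\textbf{Case $a_i=1$ (the substantive case).} Here $H_{\vct a}$ already contains an $H_i$, and $P_i$ cannot be moved past it without invoking the Clifford relations. First use $P_iH_{\vct a}=H_{\vct a}(H_iP_iH_i)$ (the two new $H_i$'s combine harmlessly with the one in $H_{\vct a}$). Then push $H_iP_iH_i$ to the right through the $P_{\vct d}$- and $Z_{\mat D}$-layers: the factors $P_j^{d_j}$ and $Z_{\{p,q\}}$ with indices distinct from $i$ are crossed freely, while the qubit-$i$ phase $P_i^{d_i}$ and the $\cz$-edges incident to $i$ are treated with Identity \eqref{hp3} ($(H_iP_i)^3=\ee^{\ii\frac{\pi}{4}}I$) and with $H_iZ_{\{i,k\}}H_i=X_{[ik]}$ (a consequence of \eqref{zijxij}), so that the surplus Hadamard on qubit $i$ is annihilated against the $H_i$ inside $\h$. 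This produces, to the right of $\h$, a bundle of corrections: a global phase in $\{k\frac{\pi}{4}\}$, Pauli-$X$ and Pauli-$Z$ factors on qubit $i$, a $\cz$ correction, and a handful of $\cnot$ gates $X_{[ik]}$. Finally, commute these corrections through the remaining layers $\ee^{\ii\phi}$, $X_{\vct u}Z_{\vct v}$, $P_{\vct b}$, $Z_{\mat B}$ using the toolbox of Section \ref{groups} --- Identities \eqref{conj-XA-h}, \eqref{conj-Pb-xij}, \eqref{conj-ZB-xij}, \eqref{conj-xz-pi}--\eqref{conj-xz-h} and their iterated forms \eqref{conj-xz-pb}--\eqref{conj-xz-ZB} --- absorbing the phase into $\ee^{\ii\phi}$ (which stays in $\{k\frac{\pi}{4}\}$), the Paulis into $\vct u,\vct v$, the $\cz$ part into $\mat B$, and the $\cnot$'s into $\mat A$; the outcome is again of the shape \eqref{pre-nf}.

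The main obstacle is precisely this last case: the $P$--$H$ non-commutation on qubit $i$ forces us into the Clifford identities, and once that crossing is untangled one must carry the $\cnot$, $\cz$, Pauli and phase corrections back through every subsequent layer and verify they each land in an admissible slot. This is where the bookkeeping lives, and it is natural to organise it by further sub-cases according to $d_i$ and to whether $i$ is incident to an edge of $\mat D$. Lemma \ref{stab-H} was immediate because Hadamard gates just enlarge $H_{\vct a}$; by contrast $P_i$ interferes with the Hadamard structure, which is what makes this lemma (and, shortly, the $\cnot$ and $\cz$ analogues) require real work.
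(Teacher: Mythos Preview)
Your approach coincides with the paper's: split on $a_i$, and in the nontrivial case conjugate $P_i$ through $H_{\vct a}$ to obtain $P_i^{\h}=H_iP_iH_i$, then push this block rightwards past $P_{\vct d}$ and $Z_{\mat D}$ until it meets $\h$ and becomes $\h P_i$. Case $a_i=0$ is handled correctly and matches the paper's Case~1 verbatim.

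Case $a_i=1$, however, is only an outline, and two points in it are inaccurate. First, the identities you invoke are conjugations by $H_i$, not by $P_i^{\h}$. The rule actually needed to cross a $\cz$ edge is \eqref{conj-z-ph}: $P_i^{\h}Z_{\{i,k\}}P_i^{-\h}=Z_{\{i,k\}}X_{[ik]}P_k$. This leaves the $Z_{\{i,k\}}$ on the \emph{left} (so $\mat D$ is updated, not $\mat B$) and produces an extra phase $P_k$ that feeds back into $P_{\vct d}$, in addition to the $\cnot$. Hence the corrections do not all land ``to the right of $\h$''; part of them update $\vct d$ and $\mat D$ on the left. Second, and more seriously, when $d_i=1$ the rule \eqref{conj-p-ph} gives $P_i^{\h}P_iP_i^{-\h}=\ee^{\ii\pi/4}H_iX_i$: a fresh Hadamard $H_i$ is created and must be sent \emph{leftwards} into $H_{\vct a}$, changing $\vct a$ to $\vct a\oplus\vct e_i$. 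Your description (everything ends up right of $\h$) is therefore wrong in this sub-case, and this is exactly why the paper splits Case~2 further on $d_i$. You acknowledge at the end that sub-cases on $d_i$ may be needed; the point is that they are not optional bookkeeping --- the $d_i=1$ branch genuinely alters the $H_{\vct a}$ layer and then has to be reduced back to the $d_i=0$ computation, which your sketch does not supply.
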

\begin{proof}
\begin{align*}
P_iC&=\blue{P_{i}}\red{H_{\vct a}P_{\vct{d}}Z_{\mat{D}}\ee^{\ii\phi} \h X_{\vct u} Z_{\vct v} P_{\vct{b}}Z_{\mat{B}}X_{\mat{A}}}\\
&= H_{\vct a}\cdot\blue{H_{\vct a} P_i H_{\vct a}}\cdot\red{P_{\vct{d}}Z_{\mat{D}} \h\ee^{i\phi} X_{\vct u} Z_{\vct v} P_{\vct{b}} Z_{\mat{B}}X_{\mat{A}}}
\end{align*}
We distinguish 2 cases, according to the possible values of $a_i$.\medskip

\texttt{Case 1} : $a_i=0$. In this case $H_{\vct a} P_i H_{\vct a}= P_i$, so
\begin{align*}
  P_iC&=H_{\vct a}\cdot\blue{P_i}\cdot\red{P_{\vct{d}}Z_{\mat{D}} \h\ee^{i\phi} X_{\vct u} Z_{\vct v} P_{\vct{b}} Z_{\mat{B}}X_{\mat{A}}}\\
                   &\stackrel{\eqref{czpg-mult}}{=} H_{\vct a} \blue{Z_{d_i \vct e_i}} \red{P_{\vct d\oplus \vct e_i}Z_{\mat{D}} \h\ee^{\ii\phi} X_{\vct u} Z_{\vct v} P_{\vct{b}}Z_{\mat{B}}X_{\mat{A}}}\\
                   &\stackrel{\eqref{conj-xz-h}}{=} \red{H_{\vct a} P_{\vct d\oplus \vct e_i}Z_{\mat{D}} \h\ee^{\ii\phi} X_{\vct u\oplus d_i \vct{e}_i} Z_{\vct v} P_{\vct{b}}Z_{\mat{B}}X_{\mat{A}}}
\end{align*}
and $ P_iC$ is in \intnf.\medskip

\texttt{Case 2} : $a_i=1$. In this case, $H_{\vct a} P_i H_{\vct a}= P_i^{\h}$, so
\begin{equation}
  P_iC= H_{\vct a}\cdot\blue{ P_i^{\h}}\cdot\red{P_{\vct{d}}Z_{\mat{D}} \h\ee^{i\phi} X_{\vct u} Z_{\vct v} P_{\vct{b}} Z_{\mat{B}}X_{\mat{A}}}\label{case-22}.
\end{equation}
We use many times conjugation by ${P_i^{ \h}}$ or $P_i$  in order to merge $P_i$ with $P_{\vct b}$ :
\begin{align*}
  P_iC&=H_{\vct a}\cdot \blue{P_i^{ \h}} P_{\vct d} \blue{P_i^{- \h}}\cdot \blue{P_i ^{\h}}Z_{\mat{D}} \blue{P_i^{-\h}}\cdot \h\cdot  \blue{P_i}\ee^{\ii\phi} X_{\vct u} Z_{\vct v} \blue{P_i^{-1}}\cdot  \blue{P_i} \red{P_{\vct{b}}Z_{\mat{B}}X_{\mat{A}}}\\
      &\stackrel{\eqref{czpg-mult}}{=} H_{\vct a}\cdot P_i^{\h} P_{\vct d} P_i^{-\h}\cdot P_i^{\h} Z_{\mat{D}} P_i^{-\h}\cdot
        \h\cdot P_i\ee^{\ii\phi} X_{\vct u} Z_{\vct v} P_i^{-1}  \cdot \blue{Z_{b_i \vct e_i}}\red{P_{\vct b\oplus \vct e_i}Z_{\mat{B}}X_{\mat{A}}}\\
      &\stackrel{\eqref{conj-xz-pi}}{=} H_{\vct a}\cdot P_i^{\h} P_{\vct d} P_i^{-\h}\cdot P_i^{\h} Z_{\mat{D}} P_i^{-\h}\cdot
        \h\cdot\blue{\ee^{\ii\phi}\ii^{u_i} X_{\vct u} Z_{\vct v\oplus u_i \vct e_i}\cdot Z_{b_i \vct e_i}}\red{P_{\vct b\oplus \vct e_i}Z_{\mat{B}}X_{\mat{A}}}
\end{align*}
Let $\phi'=\phi + u_i\frac\pi2$, $\vct u'=\vct u$, $\vct v'=\vct v\oplus u_i \vct e_i\oplus b_i \vct e_i$ and $\vct b'=\vct b\oplus \vct e_i$ then
\begin{equation*}
P_iC= H_{\vct a}\cdot P_i^{\h} P_{\vct d} P_i^{-\h}\cdot P_i^{\h} Z_{\mat{D}} P_i^{-\h}\cdot
\red{\h\ee^{\ii\phi'} X_{\vct u'} Z_{\vct v'}P_{\vct b'}Z_{\mat{B}}X_{\mat{A}}}.
\end{equation*}
We need to distinguish two subcases, according to the values of $d_i$.
\medskip

\texttt{Case 2.1} : $d_i=0$.  In this case, $P_i^{\h} P_{\vct d} P_i^{-\h}= P_{\vct d}$, so
  \begin{equation}
    P_iC= H_{\vct a}\blue{P_{\vct d}}\cdot P_i^{\h} Z_{\mat{D}} P_i^{-\h}\cdot
    \red{\h\ee^{\ii\phi'} X_{\vct u'} Z_{\vct v'}P_{\vct b'}Z_{\mat{B}}X_{\mat{A}}}.\label{case-221}
\end{equation}
Let $\mat D_i=\{\{p,q\}\in \mat D\mid i\in\{p,q\}\}$,
then $P_i^{\h} Z_{\mat{D}} P_i^{-\h}=Z_{\mat D\oplus \mat D_i}\cdot P_i^{\h}Z_{\mat{D}_i}P_i^{-\h}$.

Let $\Lambda_i=\{k\mid \{k,i\}\in D\}$, then

$P_i^{\h} Z_{\mat{D}} P_i^{-\h}=Z_{\mat D\oplus \mat D_i} \cdot \prod_{k\in \Lambda_{i}}P_i^{\h}Z_{\{i,k\}}P_i^{-\h}
\stackrel{\eqref{conj-z-ph}}{=}Z_{\mat D\oplus \mat D_i}\cdot \prod_{k\in \Lambda_{i}}Z_{\{i,k\}}X_{[ik]} P_k$, hence
\begin{equation*}
P_iC= H_{\vct a} P_{\vct d}\cdot \blue{Z_{\mat D\oplus \mat D_i}\prod_{k\in \Lambda_{i}}Z_{\{i,k\}}X_{[ik]} P_k}\cdot
\red{ \h\ee^{\ii\phi'} X_{\vct u'} Z_{\vct v'} P_{\vct b'}Z_{\mat{B}}X_{\mat{A}}}.
\end{equation*}
We apply the $\ctopzx$ algorithm with parameters $C=\prod_{k\in \Lambda_{i}}Z_{\{i,k\}}X_{[ik]} P_k$ and $F_{\text{in}}=I$ :
let $Z_{\vct w'} P_{\vct d'}Z_{\mat D'}X_{\mat A'}=\ctopzx(\prod_{k\in \Lambda_{i}}Z_{\{i,k\}}X_{[ik]} P_k, I)$, then $A'=\prod_{k\in \Lambda_{i}}[ik]$ (see Figure \ref{C-to-PZX}) and
\begin{align*}
  P_iC&= H_{\vct a}P_{\vct d}Z_{\mat D\oplus \mat D_i} \blue{Z_{\vct w'} P_{\vct d'}Z_{\mat D'}X_{\mat A'}}\red{ \h\ee^{\ii\phi'} X_{\vct u'} Z_{\vct v'} P_{\vct b'}Z_{\mat{B}}X_{\mat{A}}}\\
      &\stackrel{\eqref{czpg-mult}}{=} H_{\vct a}\cdot\blue{Z_{\vct w' \oplus \; \vct d \odot \vct d'}P_{\vct d\oplus \vct d'}Z_{\mat D\oplus \mat D_i\oplus \mat D'} }\cdot \blue{X_{\mat A'}}\red{ \h\ee^{\ii\phi'} X_{\vct u'} Z_{\vct v'} P_{\vct b'}Z_{\mat{B}}X_{\mat{A}}}\\
      &\stackrel{\eqref{conj-XA-h}}{=} H_{\vct a} Z_{\vct w'\oplus \; \vct d\odot\vct d'}P_{\vct d\oplus \vct d'}Z_{\mat D\oplus \mat D_i\oplus \mat D'}   \h\cdot\ee^{\ii\phi'}\blue{X_{\mat A'^{-\T}}} X_{\vct u'} Z_{\vct v'}\blue{X_{\mat A'^{-\T}}^{-1}}\cdot \blue{X_{\mat A'^{-\T}}} \red{P_{\vct b'}Z_{\mat{B}}X_{\mat{A}}},
\end{align*}
where $\mat A'^{-\T}=\prod_{k\in \Lambda_{i}}[ki]$. Using the decomposition of $\mat A'^{-\T}$ in transvections, we iterate Identity \eqref{conj-xz-xij} on the Pauli block 
$X_{\vct u'} Z_{\vct v'}$ and computes thereby two vectors $\vct u''$ and $\vct v''$ such that $ X_{\vct u''} Z_{\vct v''}=X_{\mat A'^{-\T}}X_{\vct u'} Z_{\vct v'}X_{\mat A'^{-\T}}^{-1}$. Then, we apply the $\ctopzx$ algorithm with parameters $C=X_{\mat A'^{-\T}}=\prod_{k\in \Lambda_{i}}X_{[ki]}$ and $F_{\text{in}}=P_{\vct b'}Z_{\mat{B}}X_{\mat{A}} : $
let $Z_{\vct w''} P_{\vct b''}Z_{\mat B''}X_{\mat A''}=\ctopzx\left(X_{\mat A'^{-\T}}, P_{\vct b'}Z_{\mat{B}}X_{\mat{A}}\right)$, we obtain
\begin{align*}
P_iC&= H_{\vct a}Z_{\vct w'\oplus \; \vct d\odot\vct d'} P_{\vct d\oplus \vct d'}Z_{\mat D\oplus \mat D_i\oplus \mat D'}\h\cdot\ee^{\ii\phi'}\blue{ X_{\vct u''} Z_{\vct v''}} \cdot \red{Z_{\vct w''} P_{\vct b''}Z_{\mat B''}X_{\mat A''}}\\
&\stackrel{\eqref{conj-xz-h}}{=} \red{H_{\vct a} P_{\vct d\oplus \vct d'}Z_{\mat D\oplus \mat D_i\oplus \mat D'}\h\ee^{\ii\phi'} X_{\vct u''\oplus \vct w'\oplus \; \vct d\odot\vct d'} Z_{\vct v''\oplus \vct w''} P_{\vct b''}Z_{\mat B''}X_{\mat A''}}, 
\end{align*}
and $ P_iC$ is in  \intnf.
\medskip

\texttt{Case 2.2} :  $d_i=1$.  In this case, $P_i^{\h} P_{\vct d} P_i^{-\h}= P_i^{\h} P_i P_i^{-\h}\cdot P_{\vct d\oplus \vct e_i}
\stackrel{\eqref{conj-p-ph}}{=}\ee^{\ii\frac{\pi}{4}} H_i X_i\cdot P_{\vct d\oplus \vct e_i}$, so
\begin{align*}
  P_iC&= H_{\vct a}\cdot\blue{\ee^{\ii\frac{\pi}{4}} H_i X_i P_{\vct d\oplus \vct e_i}}\cdot P_i^{\h} Z_{\mat{D}} P_i^{-\h}\cdot
\red{\h\ee^{\ii\phi'} X_{\vct u'} Z_{\vct v'}P_{\vct b'}Z_{\mat{B}}X_{\mat{A}}}\\
      &= \blue{H_{\vct a\oplus \vct e_i} P_{\vct d\oplus \vct e_i} X_i}\cdot P_i^{\h} Z_{\mat{D}} P_i^{-\h}\cdot \red{\h\ee^{\ii(\phi'+\frac{\pi}{4})} X_{\vct u'} Z_{\vct v'}P_{\vct b'}Z_{\mat{B}}X_{\mat{A}}}\\
      &= H_{\vct a\oplus \vct e_i} P_{\vct d\oplus \vct e_i}\cdot P_i^{\h} Z_{\mat{D}} P_i^{-\h}\cdot\blue{C'}\cdot\red{\h\ee^{\ii(\phi'+\frac{\pi}{4})} X_{\vct u'} Z_{\vct v'}P_{\vct b'}Z_{\mat{B}}X_{\mat{A}}},
\end{align*}
where $C'=( P_i^{\h} Z_{\mat{D}} P_i^{-\h})^{-1} X_i( P_i^{\h} Z_{\mat{D}} P_i^{-\h})=P_i^{\h}Z_{\mat{D}}P_i^{-\h}X_iP_i^{\h} Z_{\mat{D}} P_i^{-\h}$.

Let us reduce $C'$ : 
as $P_i^{-\h}=\h P_i^{-1}\h=\h P_i Z_i \h\stackrel{\eqref{conj-xz-h}}{=}P_i^{\h}X_i=X_iP_i^{\h}$, we obtain 

$C'=P_i^{\h} Z_{\mat{D}}P_i^{\h}P_i^{\h} Z_{\mat{D}} X_iP_i^{\h}
=P_i^{\h} Z_{\mat{D}}Z_i^{\h}Z_{\mat{D}} X_iP_i^{\h}
\stackrel{\eqref{conj-xz-h}}{=}P_i^{\h} Z_{\mat{D}}X_iZ_{\mat{D}} X_iP_i^{\h}$.

Applying Identity \eqref{conj-xz-ZB} where $\vct u=\vct e_i$ and $\vct v=\mathbf{0}$, we get
$Z_{\mat{D}}X_iZ_{\mat{D}} =X_iZ_{\mat D \vct e_i}$, hence $C'=P_i^{\h} X_iZ_{\mat D \vct e_i}X_iP_i^{\h}$. As $\mat D \vct e_i$ is the column $i$ of matrix $\mat D$, the $i$-th bit of the vector $\mat D \vct e_i$  is 0, so $Z_{\mat D \vct e_i}$ commutes with $X_i$ and 
$X_{\mat D \vct e_i}$ commutes with $P_i$. Hence $C'=P_i^{\h} Z_{\mat D \vct e_i}P_i^{\h}=\h P_i \h Z_{\mat D \vct e_i}\h P_i \h
\stackrel{\eqref{conj-xz-h}}{=}\h P_i X_{\mat D \vct e_i} P_i \h
=\h X_{\mat D \vct e_i}Z_i\h$. So
\begin{align*}
  P_iC&=H_{\vct a\oplus \vct e_i} P_{\vct d\oplus \vct e_i}\cdot P_i^{\h} Z_{\mat{D}} P_i^{-\h}\cdot \blue{\h X_{\mat D \vct e_i}Z_i\h}\cdot
      \red{\h\ee^{\ii(\phi'+\frac{\pi}{4})} X_{\vct u'} Z_{\vct v'}P_{\vct b'}Z_{\mat{B}}X_{\mat{A}}}\\
      &=H_{\vct a\oplus \vct e_i} P_{\vct d\oplus \vct e_i}\cdot P_i^{\h} Z_{\mat{D}} P_i^{-\h}\cdot \blue{\h X_{\mat D \vct e_i}Z_i}\red{\ee^{\ii(\phi'+\frac{\pi}{4})} X_{\vct u'} Z_{\vct v'}P_{\vct b'}Z_{\mat{B}}X_{\mat{A}}}
\end{align*}
We merge $X_{\mat D \vct e_i}Z_i$ into the Pauli block $\ee^{\ii(\phi'+\frac{\pi}{4})}X_{\vct u'} Z_{\vct v'}$ by using Identity \eqref{pauli-mult} and obtain thereby a phase $\phi''$ and two vectors $\vct u''$ and $\vct v''$ such that 
\begin{equation}
 P_iC=H_{\vct a\oplus \vct e_i} P_{\vct d\oplus \vct e_i}\cdot P_i^{\h} Z_{\mat{D}} P_i^{-\h} 
\cdot\red{\h\ee^{\ii\phi''} X_{\vct u''} Z_{\vct v''}P_{\vct b'}Z_{\mat{B}}X_{\mat{A}}}.\label{case-222}
\end{equation}

We observe that Equality \ref{case-222} has the same form as Equality \ref{case-221}. Therefore, to write $P_iC$ in  \intnf, one can proceed as in Case 2.1, starting from Equality \ref{case-221}.
\end{proof}

\begin{lem}\label{stab-X}
  The \intnf \eqref{pre-nf} is stable by left multiplication by a $\cnot$ gate : if a unitary matrix $C$ is in  \intnf, then $X_{[ij]}C$ can be written in \intnf, for any $i,j=0\dots n-1$, $i\neq j$.
\end{lem}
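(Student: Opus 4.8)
The plan is to push the gate $X_{[ij]}$ rightwards through the \intnf\ \eqref{pre-nf} of $C$, absorbing it layer by layer, in the spirit of the proofs of Lemmas \ref{stab-H} and \ref{stab-P}. The only subtle interaction is with the leading Hadamard block $H_{\vct a}$, so I would split into four cases according to $(a_i,a_j)$, using that $X_{[ij]}H_{\vct a}=H_{\vct a}\cdot(H_i^{a_i}H_j^{a_j}X_{[ij]}H_j^{a_j}H_i^{a_i})$ since every $H_k$ with $k\notin\{i,j\}$ commutes with $X_{[ij]}$. By \eqref{conj-xij-h} and \eqref{zijxij} the inner $2$-qubit gate equals $X_{[ij]}$ for $(0,0)$, $Z_{\{i,j\}}$ for $(1,0)$, $X_{[ji]}$ for $(1,1)$, and $Z_{\{i,j\}}^{\h}$ for $(0,1)$; this last identity $H_jX_{[ij]}H_j=Z_{\{i,j\}}^{\h}$ follows from $X_{[ij]}=H_iZ_{\{i,j\}}H_i$ together with the fact that $Z_{\{i,j\}}$ acts only on qubits $i$ and $j$.

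Case $(1,0)$ is immediate: $Z_{\{i,j\}}$ is diagonal, hence merges into the $\cz$-layer, giving $X_{[ij]}C=H_{\vct a}P_{\vct d}Z_{\mat D\oplus\{\{i,j\}\}}\h\cdots$. In cases $(0,0)$ and $(1,1)$ we are left with a single $\cnot$ gate $G$ just to the right of $H_{\vct a}$; I would push $G$ through $P_{\vct d}$ and $Z_{\mat D}$ by \eqref{conj-Pb-xij} and \eqref{conj-ZB-xij}, which update the $P$- and $\cz$-layers and leave a Pauli-$Z$ residue supported on the control qubit of $G$ (plus a $\cz$ residue that stays in the $\cz$-layer). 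That Pauli-$Z$ residue commutes rightwards up to $\h$ — it commutes with $G$ because $G$ leaves the control bit fixed — and crosses it via $Z_k\h=\h X_k$, merging into $X_{\vct u}$. Then $G\h=\h G'$ with $G'$ the $\cnot$ with control and target exchanged (by \eqref{conj-XA-h}), $G'$ conjugates the Pauli block through \eqref{conj-xz-xij}, and $G'\,P_{\vct b}Z_{\mat B}X_{\mat A}$ is rewritten in \pzx by $\ctopzx$; the leading $Z$ it produces commutes back into $Z_{\vct v}$. This puts $X_{[ij]}C$ in \intnf.

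The hard case is $(0,1)$, where $X_{[ij]}C=H_{\vct a}\,Z_{\{i,j\}}^{\h}\,P_{\vct d}Z_{\mat D}\,\h\,\ee^{\ii\phi}X_{\vct u}Z_{\vct v}P_{\vct b}Z_{\mat B}X_{\mat A}$ and one must move the ``$\cz$ in the $\XX$-basis'' $Z_{\{i,j\}}^{\h}$ past $P_{\vct d}Z_{\mat D}$. I would first dispose of a possible summand $\{i,j\}$ of $\mat D$: writing $Z_{\mat D}=Z_{\{i,j\}}^{d_{ij}}Z_{\mat D'}$ with $\{i,j\}\notin\mat D'$ and using that $P_{\vct d}$ commutes with $Z_{\{i,j\}}$, identity \eqref{conj-zij-zijh} gives $Z_{\{i,j\}}^{\h}Z_{\{i,j\}}=H_iH_jX_{(ij)}Z_{\{i,j\}}^{\h}$ when $d_{ij}=1$; the factor $H_iH_j$ is merged leftwards into $H_{\vct a}$ and the $\swap$ gate $X_{(ij)}$ is pushed all the way to the right through the form (via \eqref{conj-Pb-sij}, \eqref{conj-ZB-sij}, \eqref{conj-XA-h}, \eqref{conj-xz-XA} and $\ctopzx$), which incidentally clears $\{i,j\}$ from the $\cz$-layer. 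We are thus reduced to the case where the $\cz$-layer contains no pair $\{i,j\}$; then pushing $Z_{\{i,j\}}^{\h}$ through $P_{\vct d}Z_{\mat D}$ via \eqref{conj-pj-zijh}, \eqref{conj-zik-zijh} and their $i\leftrightarrow j$ symmetric forms produces only three harmless kinds of residue — ordinary diagonal gates (which stay in $P_{\vct d}Z_{\mat D}$), Hadamard-conjugated phases $P_i^{\h}$, and $\cnot$ gates — and crucially no new $H_iH_j$. Finally $Z_{\{i,j\}}^{\h}\h=\h Z_{\{i,j\}}$, $P_i^{\h}\h=\h P_i$ and $X_{[ab]}\h=\h X_{[ba]}$ carry the non-diagonal residues to the right of $\h$, where they are absorbed into the Pauli block and the \pzx tail exactly as in the earlier cases ($P_i$ via \eqref{conj-xz-pi} then \eqref{czpg-mult}, $\cz$'s via \eqref{conj-xz-zij}, $\cnot$'s via \eqref{conj-xz-xij} and $\ctopzx$); the phase only ever changes by multiples of $\frac\pi2$.

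The main obstacle is exactly this $H_iH_j$ residue arising in case $(0,1)$ when $\{i,j\}\in\mat D$: it cannot be absorbed into the full Hadamard layer $\h$ (that would spoil its ``full'' nature — indeed \eqref{conj-zij-zijh} is precisely what makes the \gpzx non-unique), so it must be routed into the partial layer $H_{\vct a}$ instead. This is legitimate because the sole thing that could block such a move — another $\cz$ gate $Z_{\{i,j\}}$ sitting between $H_{\vct a}$ and it — is the very one we peeled off beforehand, so that after this manoeuvre the $\{i,j\}$ entry of the $\cz$-layer is empty and the residue cannot recur.
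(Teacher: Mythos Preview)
Your proof is correct and follows essentially the same route as the paper: the same four-way split on $(a_i,a_j)$, the same identification of the hard case $(0,1)$ where $H_{\vct a}X_{[ij]}H_{\vct a}=Z_{\{i,j\}}^{\h}$, the same further subcase on $d_{ij}$ (peel off $Z_{\{i,j\}}$ via \eqref{conj-zij-zijh}, route the resulting $H_iH_j$ into $H_{\vct a}$ and the $\swap$ through the rest by \eqref{conj-Pb-sij}--\eqref{conj-ZB-sij}), and the same catalogue of residues from \eqref{conj-pj-zijh} and \eqref{conj-zik-zijh} once $d_{ij}=0$. The one organisational difference is that you carry the $P^{\h}$ residues \emph{rightwards} through $\h$ to become ordinary phase gates absorbed into the tail $P_{\vct b}Z_{\mat B}X_{\mat A}$, whereas the paper leaves them on the left of the \intnf\ and re-invokes Lemma~\ref{stab-P} (Case~2) and the earlier subcases of the present lemma; this makes the paper's bookkeeping of the four $(d_i,d_j)$ subcases more explicit, but the two approaches are equivalent.
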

\begin{proof}
\begin{align*}
X_{[ij]}C&= \blue{X_{[ij]}}\red{H_{\vct a}P_{\vct{d}}Z_{\mat{D}} \h\ee^{i\phi} X_{\vct u} Z_{\vct v} P_{\vct{b}}Z_{\mat{B}}X_{\mat{A}}}\\
&= H_{\vct a}\cdot\blue{ H_{\vct a}X_{[ij]} H_{\vct a}}\cdot\red{ P_{\vct{d}}Z_{\mat{D}} \h\ee^{i\phi} X_{\vct u} Z_{\vct v} P_{\vct{b}}Z_{\mat{B}}X_{\mat{A}}}.
\end{align*}
We need to distinguish 4 cases, according to the values of $(a_i,a_j)$.\medskip

\texttt{Case 1} : $(a_i,a_j)= (0,0)$. In this case, $ H_{\vct a}X_{[ij]} H_{\vct a}=X_{[ij]}$, so
\begin{equation}
X_{[ij]}C= H_{\vct a}\cdot \blue{X_{[ij]}} \cdot \red{P_{\vct{d}}Z_{\mat{D}}  \h\ee^{i\phi} X_{\vct u} Z_{\vct v} P_{\vct{b}}Z_{\mat{B}}X_{\mat{A}}}.\label{case-31}
\end{equation}

We apply the $\ctopzx$ algorithm with parameters $C=X_{[ij]}$ and $F_{\text{in}}=P_{\vct{d}}Z_{\mat{D}}$ :
let $ Z_{\vct v'} P_{\vct d'}Z_{\mat D'}X_{[ij]}=\ctopzx(X_{[ij]}, P_{\vct{d}}Z_{\mat{D}})$ , then
\begin{align*}
  X_{[ij]}C&= H_{\vct a}\cdot\blue{Z_{\vct v'} P_{\vct d'}Z_{\mat D'}X_{[ij]}}\cdot\red{ \h\ee^{i\phi} X_{\vct u} Z_{\vct v} P_{\vct{b}}Z_{\mat{B}}X_{\mat{A}}}\\
           &\stackrel{\eqref{conj-xij-h}}{=} H_{\vct a} Z_{\vct v'} P_{\vct d'}Z_{\mat D'}\h\cdot\blue{ X_{[ji]}}\ee^{i\phi} X_{\vct u} Z_{\vct v}\blue{X_{[ji]}}
             \cdot \blue{X_{[ji]}} \red{P_{\vct{b}}Z_{\mat{B}}X_{\mat{A}}}\\
           &\stackrel{\eqref{conj-xz-xij}}{=} H_{\vct a} Z_{\vct v'} P_{\vct d'}Z_{\mat D'} \h\cdot\blue{\ee^{i\phi} X_{[ji]\vct u} Z_{[ij]\vct v}\cdot X_{[ji]}}\red{ P_{\vct{b}}Z_{\mat{B}}X_{\mat{A}}}\\
           &\stackrel{\eqref{conj-xz-h}}{=} H_{\vct a} P_{\vct d'}Z_{\mat D'}\h\cdot\blue{\ee^{i\phi} X_{[ji]\vct u\oplus \vct v'} Z_{[ij]\vct v}\cdot X_{[ji]}}\red{ P_{\vct{b}}Z_{\mat{B}}X_{\mat{A}}}
 \end{align*}

 We apply the $\ctopzx$ algorithm with parameters $C=X_{[ji]}$ and $F_{\text{in}}=P_{\vct{b}}Z_{\mat{B}}X_{\mat{A}}$ :
 let $ Z_{\vct v''} P_{\vct b''}Z_{\mat B''}X_{\mat A''}=\ctopzx(X_{[ji]}, P_{\vct{b}}Z_{\mat{B}}X_{\mat{A}})$, then
 \begin{equation*}
   X_{[ij]}C= \red{H_{\vct a} P_{\vct d'}Z_{\mat D'} \h\ee^{i\phi} X_{[ji]\vct u\oplus \vct v'} Z_{[ij]\vct v\oplus \vct v''} P_{\vct b''}Z_{\mat B''}X_{\mat A''}},
\end{equation*}
and $X_{[ij]}C$ is in  \intnf.

\texttt{Case 2} : $(a_i,a_j)= (1,1)$. In this case, $ H_{\vct a}X_{[ij]} H_{\vct a}\stackrel{\eqref{conj-xij-h}}{=}X_{[ji]}$, so
we proceed as in Case 1, swapping $i$ and $j$.
\medskip

\texttt{Case 3} : $(a_i,a_j)= (1,0)$. In this case, $ H_{\vct a}X_{[ij]} H_{\vct a}\stackrel{\eqref{zijxij}}{=}Z_{\{i,j\}}$, so
\begin{align*}
  X_{[ij]}C&=H_{\vct a}\cdot\blue{ Z_{\{i,j\}}}\cdot\red{P_{\vct{d}}Z_{\mat D}  \h\ee^{i\phi} X_{\vct u} Z_{\vct v} P_{\vct{b}}Z_{\mat{B}}X_{\mat A}}\\
           &=\red{H_{\vct a} P_{\vct{d}}Z_{\mat D\oplus\{\{i,j\}\}}  \h\ee^{i\phi} X_{\vct u} Z_{\vct v} P_{\vct{b}}Z_{\mat{B}}X_{\mat A}},
\end{align*}
 and $X_{[ij]}C$ is in  \intnf.
\medskip

\texttt{Case 4} : $(a_i,a_j)= (0,1)$.

In this case, $ H_{\vct a}X_{[ij]} H_{\vct a}=H_jX_{[ij]}H_j\stackrel{\eqref{zijxij}}{=}H_jH_iZ_{\{i,j\}}H_iH_j=Z_{\{i,j\}}^{\h}$ and
\begin{equation*}
  X_{[ij]}C= H_{\vct a}\cdot\blue{Z_{\{i,j\}}^{\h}}\cdot\red{ P_{\vct{d}}Z_{\mat{D}} \h\ee^{i\phi} X_{\vct u} Z_{\vct v} P_{\vct{b}}Z_{\mat{B}}X_{\mat A}}.
  \end{equation*}

  Here the situation is more complicated because we need two distinguish different subcases, according to the possible values of
  $d_{ij}$ (the entry $(i,j)$ of matrix $\mat D$) and $(d_i,d_j)$ (the entries $i$ and $j$  of vector $\vct d$).\medskip
  
\texttt{Case 4.1}\label{case-41} : $d_{ij}=0$.
\begin{align*}
  X_{[ij]}C=& H_{\vct a}\cdot \blue{Z_{\{i,j\}}^{\h}} P_{\vct{d}}\blue{Z_{\{i,j\}}^{-\h}}\cdot \blue{Z_{\{i,j\}}^{\h}}Z_{\mat{D}}\blue{Z_{\{i,j\}}^{-\h}}\cdot\h \cdot \blue{Z_{\{i,j\}}} \ee^{i\phi}  X_{\vct u} Z_{\vct v}\blue{Z_{\{i,j\}}}\cdot \blue{Z_{\{i,j\}}} \red{P_{\vct{b}}Z_{\mat B}X_{\mat A}}\\
           \stackrel{\eqref{conj-xz-zij}}{=}& H_{\vct a}\cdot Z_{\{i,j\}}^{\h} P_{\vct{d}}Z_{\{i,j\}}^{\h}\cdot Z_{\{i,j\}}^{\h}Z_{\mat{D}}Z_{\{i,j\}}^{\h}\h \blue{ \ee^{i\phi}(-1)^{u_iu_j} X_{\vct u} Z_{\vct v\oplus\{\{i,j\}\}\vct u}}\red{ P_{\vct{b}}Z_{B\oplus \{\{i,j\}\}}X_{\mat A}}
  \end{align*}

Let $\phi'=\phi + u_iu_j\pi$, $\vct u'=\vct u$, $\vct v'=\vct v\oplus\{\{i,j\}\}\vct u$ and $\mat B'=\mat B\oplus \{\{i,j\}\}$, then
\begin{equation*}
X_{[ij]}C= H_{\vct a}\cdot Z_{\{i,j\}}^{\h} P_{\vct{d}}Z_{\{i,j\}}^{\h}\cdot Z_{\{i,j\}}^{\h}Z_{\mat{D}}Z_{\{i,j\}}^{\h}\cdot
 \red{\h \ee^{i\phi'} X_{\vct u'} Z_{\vct v'}P_{\vct{b}}Z_{\mat B'}X_{\mat A}}.
\end{equation*}
Let $\mat D_i=\{\{p,q\}\in \mat D\mid i\in\{p,q\}\}$  then $\mat D_i\cap \mat D_j=\emptyset$ (since $d_{ij}=0$) and 
$Z_{\{i,j\}}^{\h}Z_{\mat{D}}Z_{\{i,j\}}^{\h}=Z_{\mat D\oplus \mat D_i\oplus \mat D_j}\cdot Z_{\{i,j\}}^{\h}Z_{\mat D_i}Z_{\mat D_j}Z_{\{i,j\}}^{\h}$.\smallskip

Let $\Lambda_{i}=\{k\mid \{i,k\}\in \mat D\}$, then

$Z_{\{i,j\}}^{\h}Z_{\mat{D}}Z_{\{i,j\}}^{\h}=Z_{\mat D\oplus \mat D_i\oplus \mat D_j}\cdot Z_{\{i,j\}}^{\h}\cdot\prod_{k\in \Lambda_{i}}Z_{\{i,k\}}\cdot\prod_{k\in \Lambda_j}Z_{\{j,k\}}\cdot Z_{\{i,j\}}^{\h}$, so

$Z_{\{i,j\}}^{\h}Z_{\mat{D}}Z_{\{i,j\}}^{\h}\stackrel{\eqref{conj-zik-zijh}}{=}Z_{\mat D\oplus \mat D_i\oplus \mat D_j}\cdot\prod_{k\in \Lambda_{i}}Z_{\{i,k\}}X_{[jk]}\cdot\prod_{k\in \Lambda_j}Z_{\{j,k\}}X_{[ik]}$.\smallskip

Note that no phase gate is created when one applies the $\ctopzx$ algorithm to a sequence of $\cnot$ and $\cz$ gates as in the expression of $Z_{\{i,j\}}^{\h}Z_{\mat{D}}Z_{\{i,j\}}^{\h}$ above (see Figure \ref{C-to-PZX}) : let $ Z_{\vct w'}Z_{\vct D'}X_{\mat A'}=\ctopzx(Z_{\{i,j\}}^{\h}Z_{\mat{D}}Z_{\{i,j\}}^{\h}, I)$, then $\mat A'=\prod_{k\in \Lambda_{i}}[jk]\prod_{k\in \Lambda_j}[ik]$ and
\begin{align*}
X_{[ij]}C& = H_{\vct a}\cdot Z_{\{i,j\}}^{\h} P_{\vct{d}}Z_{\{i,j\}}^{\h}\cdot\blue{Z_{\vct w'}Z_{\mat D'}X_{\mat A'}} \cdot \red{\h\ee^{i\phi'} X_{\vct u'} Z_{\vct v'}P_{\vct{b}}Z_{\mat B'}X_{\mat A}}\\
         &\stackrel{\eqref{conj-XA-h}}{=} H_{\vct a}\cdot Z_{\{i,j\}}^{\h} P_{\vct{d}}Z_{\{i,j\}}^{\h}\cdot Z_{\vct w'}Z_{\mat D'}
           \cdot  \h\cdot \ee^{i\phi'}\blue{X_{\mat A'^{-\T}}}  X_{\vct u'} Z_{\vct v'}\blue{X_{\mat A'^{\T}}}\cdot \blue{X_{\mat A'^{-\T}}}\red{ P_{\vct{b}}Z_{\mat B'}X_{\mat A}},
\end{align*}
where $\mat A'^{-\T}=\prod_{k\in \Lambda_{i}}[kj]\prod_{k\in \Lambda_j}[ki]$.
Using the decomposition of $\mat A'^{-\T}$ in transvections, we iterate Identity \eqref{conj-xz-xij} on the Pauli block 
$X_{\vct u'} Z_{\vct v'}$ and computes thereby two vectors $\vct u''$ and $\vct v''$ such that $ X_{\vct u''} Z_{\vct v''}=X_{\mat A'^{-\T}}X_{\vct u'} Z_{\vct v'}X_{\mat A'^{-\T}}^{-1}$. Then we apply the $\ctopzx$ algorithm with parameters $C=X_{\mat A'^{-\T}}=\prod_{k\in \Lambda_{i}}X_{[kj]}\prod_{k\in \Lambda_j}X_{[ki]}$ and $F_{\text{in}}=P_{\vct{b}}Z_{\mat B'}X_{\mat A} : $
let $ Z_{\vct w''} P_{\vct b''}Z_{\mat B''}X_{\mat A''}=\ctopzx (X_{\mat A'^{-\T}}, P_{\vct{b}}Z_{\mat B'}X_{\mat A})$, we obtain
\begin{align*}
 X_{[ij]}C&= H_{\vct a}\cdot Z_{\{i,j\}}^{\h} P_{\vct{d}}Z_{\{i,j\}}^{\h}\cdot Z_{\vct w'}Z_{\mat D'}\cdot  \h\cdot
\ee^{i\phi'}\blue{ X_{\vct u''} Z_{\vct v''}}\cdot\red{ Z_{\vct w''} P_{\vct b''}Z_{\mat B''}X_{\mat A''}}\\
&\stackrel{\eqref{conj-xz-h}}{=} H_{\vct a}\cdot Z_{\{i,j\}}^{\h} P_{\vct{d}}Z_{\{i,j\}}^{\h}\cdot\red{ Z_{\mat D'}\h
\ee^{i\phi'} X_{\vct u''\oplus \vct w'} Z_{\vct v''\oplus \vct w''} P_{\vct b''}Z_{\mat B''}X_{\mat A''}}.
\end{align*}

\texttt{Case 4.1.1} : $(d_i,d_j)=(0,0)$. In this case, $Z_{\{i,j\}}^{\h} P_{\vct{d}}Z_{\{i,j\}}^{\h}= P_{\vct{d}}$, so
\begin{equation*}
X_{[ij]}C= \red{H_{\vct a} P_{\vct{d}} Z_{\mat D'}  \h \ee^{i\phi'} X_{\vct u''\oplus \vct w'} Z_{\vct v''\oplus \vct w''}  P_{\vct b''}Z_{\mat B''}X_{\mat A''}},
\end{equation*}
and $X_{ij}C$ is in  \intnf.
\medskip

\texttt{Case 4.1.2} : $(d_i,d_j)=(0,1)$. In this case, 

$Z_{\{i,j\}}^{\h} P_{\vct{d}}Z_{\{i,j\}}^{\h}=Z_{\{i,j\}}^{\h} P_jZ_{\{i,j\}}^{\h} P_{\vct d\oplus \vct e_j}
\stackrel{\eqref{conj-pj-zijh}}{=} P_i^{\h}X_{[ij]} P_j P_{\vct d\oplus \vct e_j}= P_i^{\h}X_{[ij]} P_{\vct d}$, hence
\begin{align*}
X_{[ij]}C&= H_{\vct a}\cdot \blue{P_i^{\h}X_{[ij]} P_{\vct d}} \cdot \red{Z_{\mat D'}  \h
           \ee^{i\phi'} X_{\vct u''\oplus \vct w'} Z_{\vct v''\oplus \vct w''} P_{\vct b''}Z_{\mat B''}X_{\mat A''}}\\
         &=H_{\vct a}P_i^{\h}\cdot \blue{X_{[ij]}}\red{P_{\vct d}Z_{\mat D'}  \h \ee^{i\phi'} X_{\vct u''\oplus \vct w'} Z_{\vct v''\oplus \vct w''} P_{\vct b''}Z_{\mat B''}X_{\mat A''}}.
\end{align*}

We can merge $X_{[ij]}$ in the red part using the same computation as in Case 1, starting from Equality \ref{case-31}) where $\vct a=\vct 0$. We obtain
\begin{equation}
  X_{[ij]}C= H_{\vct a}\cdot \blue{P_i^{\h}}\red{F_1},\label{end-case414}
\end{equation}
where $F_1$ is an \intnf\ such that $\vct a = \vct 0$, because no Hadamard gate is created in Case 1.
So, in order to merge $ P_i^{\h}$ with $F_1$, we can use the same computation as in Case 2 of the proof of Lemma \ref{stab-P}, starting from Equality \ref{case-22}.
We obtain
\begin{equation*}
X_{[ij]}C= \blue{H_{\vct a}}\red{F_2},
\end{equation*}
where $F_2$ is an  \intnf. Finally, we merge $H_{\vct a}$ with $F_2$ using Lemma \ref{stab-H},
and obtain thereby a rewriting in \intnf\ for $X_{[ij]}C$. 

\medskip

\texttt{Case 4.1.3} : $(d_i,d_j)=(1,0)$. We proceed as in case 4.1.2, swapping $i$ and $j$.
\medskip

\texttt{Case 4.1.4} : $(d_i,d_j)=(1,1)$. In this case,

$Z_{\{i,j\}}^{\h} P_{\vct{d}}Z_{\{i,j\}}^{\h}=Z_{\{i,j\}}^{\h} P_j P_iZ_{\{i,j\}}^{\h} P_{\vct d\oplus \vct e_i\oplus \vct e_j}
\stackrel{\eqref{conj-pj-zijh}}{=} P_i^{\h}X_{[ij]} P_j  P_j^{\h}X_{[ji]} P_i P_{\vct d\oplus \vct e_i\oplus \vct e_j}$.

Since Identity \eqref{conj-Pj} holds, $ P_j^{\h}X_{[ji]}$ and $ P_i P_{\vct d\oplus \vct e_i\oplus \vct e_j}$ commutes, so

$Z_{\{i,j\}}^{\h} P_{\vct{d}}Z_{\{i,j\}}^{\h}= P_i^{\h}X_{[ij]} P_{\vct{d}} P_j^{\h}X_{[ji]}$. Hence
\begin{align*}
X_{[ij]}C&= H_{\vct a}\cdot  \blue{P_i^{\h}X_{[ij]} P_{\vct{d}} P_j^{\h}X_{[ji]}}\cdot \red{Z_{D'}  \h
           \ee^{i\phi'} X_{u''\oplus w'} Z_{v''\oplus w''} P_{b''}Z_{B''}X_{A''}}\\
         &= H_{\vct a}P_i^{\h}X_{[ij]} P_{\vct{d}} P_j^{\h}\cdot \blue{X_{[ji]}}\red{Z_{D'}  \h
\ee^{i\phi'} X_{u''\oplus w'} Z_{v''\oplus w''} P_{b''}Z_{B''}X_{A''}}
\end{align*}
We can merge $X_{[ij]}$ in the red part using Case 1, starting form Equality \ref{case-31} in the special case where $\vct a=\vct d=\vct 0$. We obtain
\begin{equation*}
  X_{[ij]}C= H_{\vct a}P_i^{\h}X_{[ij]} P_{\vct{d}}\cdot \blue{P_j^{\h}}\red{F_1},
\end{equation*}
where $F_1$ is an \intnf\ such that $\vct a = \vct d=\vct 0$.
So, in order to merge $P_j^{\h}$ with $F_1$, we can use the same computation as in Case 2 and Case 2.1 of the proof of Lemma \ref{stab-P}, starting from Equality \ref{case-22} where $\vct d =\vct 0$. We obtain thereby an \intnf\ $F_2$ such that $\vct a=\vct 0$ because no Hadamard gate is created in these cases :
\begin{equation*}
  X_{[ij]}C= H_{\vct a}P_i^{\h}X_{[ij]}\cdot \blue{P_{\vct{d}}} \red{F_2}.
\end{equation*}
As $\vct a=\vct 0$ in the \intnf\ $F_2$, we can use Case 1 of the proof of Lemma \ref{stab-P} to merge $P_{\vct{d}}$ with $F_2$. We obtain
\begin{equation*}
  X_{[ij]}C= H_{\vct a}P_i^{\h}\cdot \blue{X_{[ij]}} \red{F_3},
\end{equation*}
where $F_3$ is an \intnf\ such that $\vct a= \vct 0$.  So, we can use again Case 1 to merge  $X_{[ij]}$ with $F_3$ and obtain
\begin{equation*}
  X_{[ij]}C= H_{\vct a}\cdot \blue{P_i^{\h}}\red{F_4},
\end{equation*}
where $F_4$ is an \intnf\ such that $\vct a= \vct 0$.
We note that we are in the same situation as in Case 4.1.2, Equality \ref{end-case414}. So  we obtain an \intnf\ for $X_{[ij]}C$ by proceeding in the same way.


\medskip

\texttt{Case 4.2} : $d_{ij}=1$. Let $\mat D'=\mat D\oplus \{\{i,j\}\}$, then $d'_{ij}=0$, and
\begin{align*}
  X_{[ij]}C&= H_{\vct a}\cdot \blue{Z_{\{i,j\}}^{\h}Z_{\{i,j\}}}\cdot \red{P_{\vct{d}}Z_{\mat D'} \h\ee^{i\phi} X_{\vct u} Z_{\vct v} P_{\vct{b}}Z_{\mat{B}}X_{\mat A}}\\
           &\stackrel{\ref{conj-zij-zijh}}{=} H_{\vct a}\cdot  \blue{H_i H_jX_{(ij)}Z_{\{i,j\}}^{\h}}  \cdot\red{ P_{\vct{d}}Z_{\mat D'} \h\ee^{i\phi} X_{\vct u} Z_{\vct v} P_{\vct{b}}Z_{\mat{B}}X_{\mat A}}.
\end{align*}

We use the conjugations rules \eqref{conj-Za-sij}, \eqref{conj-Pb-sij} and \eqref{conj-ZB-sij} by the $\swap$ gate $X_{(ij)}$ and we merge the Hadamard gates to obtain
\begin{equation*}
  X_{[ij]}C= H_{\vct a\oplus \vct e_i\oplus \vct e_j}\blue{Z_{\{i,j\}}^{\h}}  \red{P_{(ij)\vct d}Z_{(ij)\mat D'(ij)}
  \h\ee^{i\phi} X_{(ij)\vct u} Z_{(ij)\vct v} P_{(ij)\vct b}Z_{(ij)\mat B(ij)}X_{(ij)\mat A}}.
  \end{equation*}

Let $\mat D''= (ij)\mat D'(ij)$. Since $d'_{ij}=0$, then $d''_{ij}=0$, so we can proceed as in Case 4.1 and obtain thereby an \intnf\ for $X_{[ij]}C$.
  \end{proof}
    
  \subsection{Computing the generalized PZX form}

  We show that a \gpzx of a stabilizer circuit $C$ can be obtained in polynomial time by applying  an algorithm summarized in Figure \ref{C-to-NF}. This algorithm is called the $\ctogpzx$ algorithm.

  \begin{theo}[\textbf{The \gpzx for stabilizer circuits}]\label{nf-th}
  
    Any  $n$-qubit stabilizer circuit $C$ given as a product of $\ell$ Clifford gates, \textit{i.e.} $C=\prod_{k=1}^{\ell} M_k $, where
    $M_k\in \{ P_i, H_i,X_{[ij]}\mid 0\leq i,j\leq n-1\}$, can be written in polynomial time $O(\ell n^2)$ in the form
     \begin{equation}
      \ee^{\ii\phi}H_{\mathbf{r}} Z_{\mathbf{u}}P_{\mathbf{d}}Z_{\mathbf{D}} H_{\mathbf{s}} Z_{\mathbf{v}}P_{\mathbf b}Z_{\mathbf{B}}X_{\mathbf{A}}\ ,
    \end{equation}
     where     $\vct r, \vct u,\vct d, \vct s, \vct v, \vct b$ are vectors in $\F^n$, $\mat D$ and $\mat B$ are matrices in $\BG$, $\mat A$ is an invertible matrix in $\GL$, and $\phi\in\{k\frac{\pi}{4},k\in\Z\}$.
    \end{theo}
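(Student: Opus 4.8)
The plan is to obtain the theorem directly from the three stability Lemmas \ref{stab-H}, \ref{stab-P} and \ref{stab-X}; all the real work is already contained in those lemmas, and what remains is an easy induction, a short cosmetic rewriting that turns an \intnf into a \gpzx representation, and a running-time count.

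First I would observe that the identity matrix is already in \intnf: in \eqref{pre-nf}, take $\vct a=[1,\dots,1]^{\T}$, all the other vectors equal to $\mathbf 0$, $\mat D=\mat B=\mat 0$, $\mat A=\mat I$ and $\phi=0$; since $H_{\vct a}=\h$ we get $H_{\vct a}\,\h=\h\,\h=I$ and the whole right-hand side collapses to $I$. Writing $C=\prod_{k=1}^{\ell}M_k$ with $M_k\in\{H_i,P_i,X_{[ij]}\}$, I would then run the induction $C^{(\ell+1)}=I$ and $C^{(k)}=M_kC^{(k+1)}$ for $k=\ell,\dots,1$, so that $C^{(1)}=C$. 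At each step exactly one of Lemmas \ref{stab-H}, \ref{stab-P}, \ref{stab-X} applies and shows that $C^{(k)}$ is again in \intnf; because the proofs of those lemmas are constructive, this is precisely the $\ctogpzx$ algorithm of Figure \ref{C-to-NF}, and at the end we have $C=H_{\vct a}P_{\vct d}Z_{\mat D}\,\h\,\ee^{\ii\phi}X_{\vct u}Z_{\vct v}P_{\vct b}Z_{\mat B}X_{\mat A}$ with data of the required types.

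To pass from this \intnf to the \gpzx of \eqref{nf-formula}, I would: (i) pull the scalar $\ee^{\ii\phi}$ out to the front; (ii) push the Pauli-$X$ block through the central $\h$ using \eqref{conj-xz-h} with $\vct v=\mathbf 0$, i.e.\ $\h X_{\vct u}=Z_{\vct u}\,\h$; and (iii) use that $P_{\vct d}$, $Z_{\mat D}$ and $Z_{\vct u}$ are diagonal, hence pairwise commuting, to rewrite $P_{\vct d}Z_{\mat D}Z_{\vct u}=Z_{\vct u}P_{\vct d}Z_{\mat D}$. This yields $C=\ee^{\ii\phi}H_{\vct a}\,Z_{\vct u}P_{\vct d}Z_{\mat D}\,\h\,Z_{\vct v}P_{\vct b}Z_{\mat B}X_{\mat A}$, which is exactly \eqref{nf-formula} with $\vct r=\vct a$ and $\vct s=[1,\dots,1]^{\T}$ (so that $H_{\vct s}=\h$), the phase $\phi$ being unchanged and hence still a multiple of $\pi/4$.

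Finally, for the complexity I would check that each of the $\ell$ induction steps costs $O(n^2)$. The expensive branches in the proofs of Lemmas \ref{stab-P} and \ref{stab-X} call the $\ctopzx$ algorithm and the iterated identities \eqref{conj-xz-xij} and \eqref{czpg-mult} only a bounded number of times, always on subcircuits with $O(n)$ gates; since $\ctopzx$ runs in $O(n\ell')$ on an $\ell'$-gate circuit and a single transvection conjugation or $\czpg$-multiplication costs $O(n)$ (or $O(n^2)$ once, for the $Z_{\mat B}$ part), each step is $O(n^2)$ and the total is $O(\ell n^2)$. The nested invocations that appear in Case 4.1 of Lemma \ref{stab-X} (of cases of Lemmas \ref{stab-P} and \ref{stab-H}) are finite in number and never recurse back into the case analysis of Lemma \ref{stab-X}, so they do not change this estimate. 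I expect this accounting to be the only point genuinely needing care: the heart of the proof, the case analysis behind the three stability lemmas, is already done.
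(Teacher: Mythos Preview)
Your proof is correct and follows essentially the same approach as the paper: the same base case $I=H_{[1,\dots,1]^{\T}}\h$, the same induction via the three stability lemmas, the same rewriting through $\h$ via \eqref{conj-xz-h} and commutativity of diagonals (the paper's Step~B), and the same $O(\ell n^2)$ accounting. The only difference is that the paper appends a cosmetic Step~C that cancels redundant Hadamard pairs (setting $r_i=s_i=0$ on qubits not touched by $Z_{\vct u}P_{\vct d}Z_{\mat D}$ when $a_i=1$), whereas you simply take $\vct r=\vct a$ and $\vct s=[1,\dots,1]^{\T}$; your choice is already a valid \gpzx, so the theorem is proved either way.
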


    \begin{proof} The computation of the \gpzx is divided into three steps A, B, C. We describe each step and  evaluate the number of operations needed to perform the step.\smallskip
      
      \textbf{Step A } : We prove by induction on the length $\ell$ of the input circuit that any stabilizer circuit  $C=\prod_{k=1}^{\ell} M_k $, where
      $M_k\in\{ P_i, H_i,X_{[ij]}\mid 0\leq i,j\leq n-1\}$, can be written in the \intnf\ $H_{\vct a} P_{\vct{d}}Z_{\mat{D}} \h\ee^{\ii\phi} X_{\vct u} Z_{\vct v} P_{\vct{b}}Z_{\mat{B}}X_{\mat{A}}$. The base case of the induction is clear : if $\ell=0$, then $C= I=H_{\vct a}\h$, where $\vct a$ is the vector of $\F^n$ with all entries equal to 1.
      To perform an induction step, we must prove that, for any $M\in\{ P_i, H_i,X_{[ij]}\mid 0\leq i,j\leq n-1\}$ and any Clifford circuit $C$ in  \intnf, the product $MC$ can be written in  \intnf. Clearly, the induction step is achieved by using Lemmas \ref{stab-H}, \ref{stab-P} and \ref{stab-X}.
      \smallskip
      
      Taking in account the different cases in the proofs of these three Lemmas, it appears that the algorithmic cost of merging a Clifford gate into the \intnf\ is bounded by the cost of the $\ctopzx$ algorithm.
      The complexity of the $\ctopzx$ algorithm  is $O(\ell' n)$, where $\ell'$ is the number of gate in the input circuit (see Figure \ref{C-to-PZX}). At each induction step, we apply this algorithm to subcircuits composed of $O(n)$ gates, so the cost of using $\ctopzx$ is $O(n^2)$ operations for each step.
      Starting from a stabilizer  circuit $C$ of length $\ell$, one needs $\ell$ induction steps to write $C$ in  \intnf, so we see that the number of operations needed in Step A is $O(\ell n^2)$.
\smallskip
      
      \textbf{Step B} : We write $C$ in the form $C= \ee^{\ii\phi}H_{\vct a} P_{\vct{d}}Z_{\mat{D}}\cdot X_{\vct u}^{\h}\cdot \h Z_{\vct v} P_{\vct{b}}Z_{\mat{B}}X_{\mat{A}}$
       and we use Identity \eqref{conj-xz-h} as well as the commutativity of the unitaries $P_{\vct{d}}$, $Z_{\mat{D}}$ and
       $Z_{\vct{u}}$ to obtain $C= \ee^{\ii\phi}H_{\vct a}Z_{\vct{u}}P_{\vct{d}}Z_{\mat{D}} \h Z_{\vct v} P_{\vct{b}}Z_{\mat{B}}X_{\mat{A}}$.
       The cost of this step is $O(1)$ (we neglect operations such as  initializing or copying which depend on the implementation of the $\ctogpzx$ algorithm).
       \smallskip

      \textbf{Step C } : The unitaries $H_{\vct r}$ and $H_{\vct s}$ appear after a straightforward simplification of the Hadamard gates. One defines the vectors  $\vct r$ and $\vct s$ as follows. Let $\Gamma$ be the set of qubits involved in the subcircuit $Z_{\vct{u}}P_{\vct{d}}Z_{\mat{D}}$, \textit{i.e} $\Gamma=\{i\mid u_i=1\}\cup\{i\mid d_i=1\}\cup\{i\mid \exists j, d_{ij}=1\}$. If $a_i=1$ and $i\notin \Gamma$ then $r_i=s_i=0$,
      otherwise $r_i=a_i$ and $s_i=1$.
      After this last step, $C=\ee^{\ii\phi}H_{\mathbf{r}} Z_{\mathbf{u}}P_{\mathbf{d}}Z_{\mathbf{D}} H_{\mathbf{s}} Z_{\mathbf{v}}P_{\mathbf b}Z_{\mathbf{B}}X_{\mathbf{A}}$,
      so $C$ is in the desired form. The cost of this simplification is $O(n^2)$.      
\end{proof}

\begin{figure}[h]
  \smallskip
  
  $\mathtt{ALGORITHM\ :}$  Compute a generalized \pzx for a stabilizer circuit.\smallskip

  $\mathtt{INPUT\ :}$  $C$, a stabilizer circuit given as a product of Clifford gates.\smallskip

  $\mathtt{OUTPUT\ :}$ An equivalent circuit to $C$, written in the \gpzx 
  
$\phantom{\mathtt{OUTPUT\ :}}$ $\ee^{\ii\phi}H_{\mathbf{r}} Z_{\mathbf{u}}P_{\mathbf{d}}Z_{\mathbf{D}} H_{\mathbf{s}} Z_{\mathbf{v}}P_{\mathbf b}Z_{\mathbf{B}}X_{\mathbf{A}}\ ,$
\smallskip

$\mathtt{Step\ A\ :}$ Write $C$ in  \intnf.

$\phantom{\mathtt{Step\ A\ :}}$ $C=H_{\vct a} P_{\vct{d}}Z_{\mat{D}} \h\ee^{\ii\phi} X_{\vct u} Z_{\vct v} P_{\vct{b}}Z_{\mat{B}}X_{\mat{A}}$
\smallskip

$\mathtt{Step\ B\ :}$ Move the \PauliX gates to the right.

$\phantom{\mathtt{Step\ B\ :}}$ $C=\ee^{\ii\phi}H_{\vct a}  Z_{\vct u}P_{\vct{d}}Z_{\mat{D}} \h Z_{\vct v} P_{\vct{b}}Z_{\mat{B}}X_{\mat{A}}$
\smallskip

$\mathtt{Step\ C\ :}$ Simplify the Hadamard gates.

$\phantom{\mathtt{Step\ C\ :}}$ $C=\ee^{\ii\phi}H_{\vct r}  Z_{\vct u}P_{\vct{d}}Z_{\mat{D}} H_{\vct s} Z_{\vct v} P_{\vct{b}}Z_{\mat{B}}X_{\mat{A}}$
\smallskip

$\mathtt{Return\ }$ $\ee^{\ii\phi}H_{\vct r}  Z_{\vct u}P_{\vct{d}}Z_{\mat{D}} H_{\vct s} Z_{\vct v} P_{\vct{b}}Z_{\mat{B}}X_{\mat{A}}$
\smallskip

\caption{ Algorithm  $\ctogpzx$ \label{C-to-NF}}
  \end{figure}

\begin{rem}
  The space complexity of the $\ctogpzx$ algorithm is only $O(n^2)$ (the space needed to store the matrices) if we merge each Clifford gate $M_k$ on-the-fly. We proceeded in this way to implement the $\ctogpzx$ algorithm as a Linux command ``$\mathtt{./stabnf}$'' with a text-based user interface.
  The source code of the command is available at

  \href{https://github.com/marcbataille/stabilizer-circuits-normal-forms/tree/graph_states}{$\mathtt{https://github.com/marcbataille/stabilizer\text{-}circuits\text{-}normal\text{-}forms}$}.
  
  The \textit{manual} mode of the command reproduces the induction steps described in the proofs of Lemmas \ref{stab-H}, \ref{stab-P} and \ref{stab-X}. In this mode, the user can write in the \gpzx a stabilizer circuit of arbitrary length and can observe the merging of each new Clifford gate into the normal form. 
  \end{rem}

  \begin{rem}
    The global phase $\phi$ of a quantum circuit is generally considered as irrelevant because it is physically unobservable. However, we decided  not to neglect $\phi$ during the computation process of the normal form, because knowing its exact value is, at least, of mathematical interest (for instance the group
    $\{ \ee^{\ii\phi}I | \phi\in\{k\frac{\pi}{4},k\in\Z\}\}$ has order 8 and this is related to the order of the group generated by the Clifford gates, see formula in the discussion  below). Besides, calculating the exact value of $\phi$ does not require much additional work. 
  \end{rem}

  \begin{rem}
    The $\ctogpzx$ algorithm can also take $\cz$, $\swap$, $\ZZ$, $\XX$, or $\YY$ gates as input since $Z_{\{i,j\}}= H_iX_{[ij]} H_i$,
    $S_{ij}=X_{(ij)}=X_{[ij]}X_{[ji]}X_{[ij]}$, $ Z_i= P_i^2$, $ X_i= H_i P_i^2 H_i$ and $ Y_i= P_i X_i P_i^{-1}= P_i H_i P_i^2 H_i P_i^3$.
    All theses gates are accepted as input of the  command ``$\mathtt{./stabnf}$'' in manual mode.
    \end{rem}

\begin{rem}
  The $\ctogpzx$ algorithm can be applied to an input circuit consisting only of $\PP, \cz$ and $\cnot$ gates. In this case, the output circuit is in the \pzx  $ Z_{\vct{v}} P_{\vct{b}}Z_{\mat{B}}X_{\mat{A}}$. So the $\ctogpzx$ algorithm is an extension of the $\ctopzx$ algorithm to any stabilizer circuit.
\end{rem}


\subsection{Discussion\label{discussion}} 
  
We discuss some questions related to the implementation of the \gpzx as a quantum circuit and we compare the \gpzx  to other recent normal forms for stabilizer circuits.

In order to implement the unitary $X_{\mat{A}}$ of the \gpzx as a $\cnot$ subcircuit,  we need to write the matrix $\mat A$ as a product of transvections. To this end, one can apply an algorithm proposed in 2004 by Patel \textit{et al.} \cite{2004PMH}. This algorithm is superior to the classical Gaussian elimination as it allows a decomposition in $O(n^2/\log n)$ transvections \cite[Theorem 1]{2004PMH} whereas the number of transvections in the decomposition obtained by the Gauss-Jordan algorithm is bounded by $n^2$ \cite[Proposition 10]{2020B}. In the rest of this paper, we refer to the Patel \textit{et al.}'s algorithm with parameter $m$ equal to $\lceil\log_2(n)/2\rceil$ as the $\AtoX$ algorithm  (see \cite{2004PMH} for the definition of $m$). It is important to remark that the $\AtoX$ algorithm does not return, in general, an optimized decomposition in transvections of the  matrix $\mat A$. 
Of course, it is possible to optimize a $\cnot$ circuit by a brute force algorithm but the cost is exponential and the algorithm can be used in practice only for small values of $n$. The method is as follows. First build
the Cayley graph of the group $\GL$ by Breadth-first search, then find in this
graph the matrix $\mat A$ corresponding to that $\cnot$ circuit. We implemented  this algorithm in the C language and the  
source file \texttt{cnot\_opt.c} is available at

\href{https://github.com/marcbataille/cnot-circuits/tree/master/optimization}{$\mathtt{https://github.com/marcbataille/cnot\text{-}circuits/tree/master/optimization}$}. Run on a basic laptop, the program allows to optimize in a few seconds any $\cnot$ circuit up to 5 qubits.\medskip

  Note that the unitary operators $Z_{\vct{u}}P_{\vct{d}}$ and $Z_{\vct{v}}P_{\vct{b}}$ in the \gpzx can be implemented as subcircuits of phase gates since $\ZZ=\PP^2$. So the unitary operator described by the \gpzx can be implemented as a quantum circuit of type
  \begin{equation}
   CX-CZ-P-H-CZ-P-H,\label{nf-circ}
\end{equation}
where  $CX$ (resp. $CZ$) is a subcircuit of $\cnot$ (resp. $\cz$) gates,
$P$ (resp. $H$) is a subcircuit of Phase (resp. Hadamard) gates
  \footnote{The reader who is not used to quantum circuits must pay attention to the following fact: a circuit acts to the right of the ket $\ket{\psi}$
  presented to its left but the associated operator acts to the left of $\ket{\psi}$, so the order of the gates in the circuit \ref{nf-circ} is inverted comparing to the \gpzx \ref{nf-formula}.}. 
The \gpzx has some similarities with the normal forms proposed in 2020 by Duncan \textit{et al.} \cite[Section 6]{2020DKPV} (H-P-CZ-CX-H-CZ-P-H) or
by Bravyi and Maslov \cite[lemma 8]{2020BM} (X-Z-P-CX-CZ-H-CZ-H-P). These two forms have, like the form \ref{nf-circ}, exactly two $\cz$ layers and one $\cnot$ layer but the number of single qubit layers is different. The form  H-P-CZ-CX-H-CZ-P-H proposed in \cite{2020DKPV} contains three layers of Hadamard gates whereas the form \ref{nf-circ} contains only two layers of Hadamard gates. If we merge the Z layer with the P layer at the beginning of the form X-Z-P-CX-CZ-H-CZ-H-P \cite{2020BM} (as we did for the form \ref{nf-circ}), the resulting form contains five single qubit layers, whereas the form \ref{nf-circ} contains four single qubit layers. So the \nf proposed in this paper can be considered as a slight simplification of the two forms mentioned above since it contains one single qubit layer less.
\medskip

The Clifford Group (defined as the normalizer of the Pauli Group in the unitary group $\UG$) is infinite. However the group generated by the gate set $\{H_i,P_i,X_{[ij]}\mid 0\leq i,j\leq n-1\}$ is a finite subgroup of the Clifford Group and its order is
$8\times 2^{2n}|\mathrm{Sp}_{2n}(\F)|=2^{n^2+2n+3}\prod_{j=1}^n(4^j-1)$
, where $\mathrm{Sp}_{2n}(\F)$ is the symplectic group over $\F$ in dimension $2n$ (see \textit{e.g.} \cite{1998CRSS}).
In \cite{2018MR}, the authors consider that the  number of Boolean degrees of
freedom in that group is $2n^2+O(n)$, since its order is $2^{2n^2+O(n)}$.
They deduce thereby that a \nf for stabilizer circuits
must have at least $2n^2+O(n)$ degrees of freedom. As a $\cnot$ layer adds $n^2$ degrees of freedom, a $\cz$ layer adds $n(n-1)/2$ degree of freedom
and the single qubit layers add a linear amount of degree of freedom (this is a direct consequence of the group orders of $\czg$ and $\cnotg$, see \cite[Section I]{2018MR} and \cite[Section 6]{2020DKPV}), it is easy to see that all three normal forms mentioned in this discussion have $2n^2+O(n)$ degrees of freedom and are therefore asymptotically optimal in the sense defined in \cite{2018MR}.\smallskip

\section{Implementing stabilizer states and graph states\label{gs}}

\subsection{The connection between stabilizer states and graph states}

A stabilizer state $\ket{S}$ for a $n$-qubit register can be written in the form
\begin{equation}
  \ket{S}=C\ket{0}^{\otimes n}, \label{ketS}
\end{equation}
where $C$ is a product of Clifford gates \cite[Theorem 1]{2004AG}. A graph state $\ket{G}$ is a special case of a stabilizer state that can be written in the form
\begin{equation}
  \ket{G}=Z_{\mat{B}}\ket{+}^{\otimes n}=Z_{\mat{B}} \h\ket{0}^{\otimes n}, \label{ketG}
  \end{equation}
  where $\ket{+}= \HH\ket{0}=\frac{1}{\sqrt{2}}(\ket{0}+\ket{1})$ is the eigenvector corresponding to the eigenvalue 1 of the \PauliX gate, $Z_{\mat B}$ is a product of $\cz$ gates defined by a matrix $\mat B$ in $\BG$ and $\h=\HH^{\otimes n}$\cite{2006HD}. The graph $G$ associated to the graph state $\ket{G}$ is the graph of order $n$ whose vertices are labeled by the $n$ qubits and whose set of edges is $\{\{i,j\}\mid b_{ij}=1\}$.\medskip
  
Let $\ket{S}=C\ket{0}^{\otimes n}$ be a stabilizer state. Applying the $\ctogpzx$ algorithm up to Step B to the stabilizer circuit $C$ yields
$C=\ee^{\ii\phi} H_{\vct a} Z_{\vct u} P_{\vct{d}}Z_{\mat{D}} \h Z_{\vct v} P_{\vct{b}}Z_{\mat{B}}X_{\mat{A}}$. Since the unitary $Z_{\vct v}P_{\vct{b}}Z_{\mat{B}}X_{\mat{A}}$  has no effect on the ket $\ket{0}^{\otimes n}$, one has, neglecting the global phase $\phi$ :  $\ket{S}= H_{\vct a} Z_{\vct u} P_{\vct{d}}Z_{\mat{D}} \h\ket{0}^{\otimes n}$. Hence
  $\ket{S}= H_{\vct a} Z_{\vct u} P_{\vct{d}}\ket{G}$,
where $\ket{G}$ is the graph state $Z_{\mat{D}} \h\ket{0}^{\otimes n}$. So, using the $\ctogpzx$ algorithm, we obtain  a new proof of a theorem from Van den Nest \textit{et al.} \cite[theorem 1]{2004VDN} that asserts the equivalence under local Clifford operations of any stabilizer state $\ket{S}$ to a graph state $\ket{G}$ : there exists a stabilizer circuit $C'$ consisting only of local Clifford gates (\textit{i.e.} phase and Hadamard gates) and a graph state $\ket{G}$ such that  $\ket{S}=C'\ket{G}$. Moreover, the $\ctogpzx$ algorithm provides a possible construction of the circuit $C'$ and the graph $G$.

\begin{theo}[\textbf{Normal form of a stabilizer state}] \label{stab-graph}
  
  For any stabilizer state $\ket{S}$, there exists a graph state $\ket{G}$ and 3 vectors $\vct a, \vct u, \vct d$ in $\F^n$ such that
\begin{equation}
  \ket{S}= H_{\vct a} Z_{\vct{u}} P_{\vct d}\ket{G}.\label{stab-graph-values}
\end{equation}
\end{theo}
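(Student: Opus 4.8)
The plan is to obtain this as an immediate corollary of Theorem \ref{nf-th}. First I would invoke \eqref{ketS}: by the theorem of Aaronson and Gottesman, the stabilizer state $\ket{S}$ can be written as $\ket{S}=C\ket{0}^{\otimes n}$ for some circuit $C$ of Clifford gates. Then I would apply the $\ctogpzx$ algorithm to $C$, but only up to Step B, since the proof of Theorem \ref{nf-th} shows that after Step B one already has $C=\ee^{\ii\phi}H_{\vct a}Z_{\vct u}P_{\vct d}Z_{\mat D}\,\h\,Z_{\vct v}P_{\vct b}Z_{\mat B}X_{\mat A}$, with $\vct a,\vct u,\vct d,\vct v,\vct b\in\F^n$, $\mat D,\mat B\in\BG$ and $\mat A\in\GL$; the Hadamard simplification of Step C is not needed here.

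The only substantive point is that the rightmost block $Z_{\vct v}P_{\vct b}Z_{\mat B}X_{\mat A}$ fixes $\ket{0}^{\otimes n}$, which I would check factor by factor on the label $\mathbf 0\in\F^n$: $X_{\mat A}\ket{0}^{\otimes n}=\ket{\mat A\mathbf 0}=\ket{0}^{\otimes n}$ by the defining relation of Theorem \ref{X-GL}; $Z_{\mat B}\ket{0}^{\otimes n}=(-1)^{q_{\mat B}(\mathbf 0)}\ket{0}^{\otimes n}=\ket{0}^{\otimes n}$; $P_{\vct b}\ket{0}^{\otimes n}=\ii^{\vct b\cdot\mathbf 0}\ket{0}^{\otimes n}=\ket{0}^{\otimes n}$ by \eqref{pu}; and $Z_{\vct v}\ket{0}^{\otimes n}=(-1)^{\vct v\cdot\mathbf 0}\ket{0}^{\otimes n}=\ket{0}^{\otimes n}$ by \eqref{zu}. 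Discarding the physically irrelevant global phase $\ee^{\ii\phi}$, this leaves $\ket{S}=H_{\vct a}Z_{\vct u}P_{\vct d}Z_{\mat D}\,\h\,\ket{0}^{\otimes n}$.

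Finally I would set $\ket{G}:=Z_{\mat D}\,\h\,\ket{0}^{\otimes n}$. Since $\mat D\in\BG$, this is by \eqref{ketG} a graph state (with graph having edge set $\{\{i,j\}\mid d_{ij}=1\}$), and the identity above reads $\ket{S}=H_{\vct a}Z_{\vct u}P_{\vct d}\ket{G}$, which is exactly \eqref{stab-graph-values}. I do not expect any genuine obstacle: the entire argument rests on Theorem \ref{nf-th} together with the trivial fact that the diagonal operators $Z_{\vct v},P_{\vct b},Z_{\mat B}$ and the permutation operator $X_{\mat A}$ all stabilise the all-zero computational basis vector. Conceptually, the $\h$ sitting in the middle of the \gpzx is precisely what converts $\ket{0}^{\otimes n}$ into $\ket{+}^{\otimes n}$, so that the surviving layer $Z_{\mat D}$ produces a graph state and the leftmost local layers $H_{\vct a}Z_{\vct u}P_{\vct d}$ realise the claimed local-Clifford equivalence.
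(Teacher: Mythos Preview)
Your proposal is correct and follows essentially the same approach as the paper: apply the $\ctogpzx$ algorithm up to Step B, observe that the rightmost block $Z_{\vct v}P_{\vct b}Z_{\mat B}X_{\mat A}$ fixes $\ket{0}^{\otimes n}$, discard the global phase, and identify $Z_{\mat D}\h\ket{0}^{\otimes n}$ as the graph state. The only difference is that you spell out the factor-by-factor verification of the fixing property, which the paper states in one sentence.
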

\medskip

Because of Theorem \ref{stab-graph}, implementing a stabilizer state is equivalent to implementing a graph state, up to a circuit of local Clifford gates.
So, in the rest of this section, we focus on the implementation of a graph state as a circuit in a quantum machine.

\subsection{Reducing the two-qubit gate count of a graph state}

We address the following question : what kind of pretreatment can be done in the classical circuit of $\cz$ and Hadamard gates that  implements a graph state $\ket{G}=Z_{\mat{B}}\ket{+}^{\otimes n}$ in order to reduce the two-qubit gate count ? We propose an implementation based on the gate set $\{\HH,\ZZ,\cz,\cnot\}$ obtained thanks to the use of Identity \eqref{conj-ZB-XA} ($X_{\mat{A}}Z_{\mat{B}}X_{\mat{A}}^{-1}= Z_{q_{\mat B}(\mat A^{-1})}Z_{\mat A^{-\T}\mat B \mat A^{-1}})$ together with the $\AtoX$ algorithm. The main idea is as follows.  The two-qubit gate count in a $\cz$ circuit is at most $n(n-1)/2$ gates, while the $\AtoX$ algorithm  allows an implementation  of $X_{\mat{A}}$ in $O(n^2/\log n)$ $\cnot$ gates.
Hence, if we find an equivalent circuit to the $\cz$ circuit corresponding to the unitary $Z_{\mat{B}}$, in which the gate count is dominated by the $\cnot$ gates, we can expect a possible reduction of the initial circuit.

\begin{defi}
We say that a matrix $\mat B\in\BG$ is \emph{reduced} when each column and each line of $\mat B$ contains at most one non-zero entry, \textit{i.e.} $Z_{\mat B}$ corresponds to a $\cz$ circuit of depth 1.
  \end{defi}

\begin{lem}\label{B-red}
  For any $\mat B$ $\in\BG$, there exists an upper triangular matrix $\mat A\in\GL$ and a reduced matrix $\mat B_{\text{red}}\in\BG$ such that
  $\mat B_{\text{red}}=\mat A^{\T}\mat B \mat A$.
\end{lem}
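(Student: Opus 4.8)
The plan is to realize the congruence $\mat B\mapsto\mat A^{\T}\mat B\mat A$ as a product of elementary congruences, each by a single transvection $[pq]$ with $p<q$; the accumulated matrix $\mat A=\prod_k[p_kq_k]$ is then unitriangular, hence upper triangular and invertible over $\F$. By Proposition \ref{tij-mult}, and since $[pq]^{\T}=[qp]$, the congruence $\mat B\mapsto[qp]\mat B[pq]$ is exactly the symmetric operation ``add row $p$ to row $q$ and add column $p$ to column $q$'', and one checks at once that it keeps the matrix symmetric with zero diagonal, i.e.\ in $\BG$. I would then show by induction on $n$ that finitely many such operations, all performed with $p<q$, bring $\mat B$ to a reduced matrix (the resulting procedure is the algorithm $\BtoB$). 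For $n\leq 1$ there is nothing to do, since then $\BG=\{\mathbf 0\}$.

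For the induction step: if the first row of $\mat B$ vanishes, then $\mat B$ is block-diagonal with a zero row/column $0$ and a block $\mat B'\in\mathcal{B}_{n-1}$ on the indices $1,\dots,n-1$; applying the induction hypothesis to $\mat B'$ and prepending a leading $1$ to the conjugating matrix settles this case. Otherwise, put $m=\min\{j>0\mid b_{0j}=1\}$. Step 1: for each $j>m$ with $b_{0j}=1$, perform the congruence by $[mj]$. Each such step alters only row $j$ and column $j$ (and preserves the zero diagonal), hence flips $b_{0j}$ to $0$ while leaving every other entry of rows and columns $0$ and $m$ unchanged; after processing all these $j$, row $0$ and column $0$ have become $\vct e_m^{\T}$ and $\vct e_m$, and $b_{m0}=1$ still holds. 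Step 2: for each $l>0$ with $b_{ml}=1$ in the current matrix, perform the congruence by $[0l]$. Because column $0$ now equals $\vct e_m$ and $b_{m0}=1$, this clears entry $(m,l)$ while fixing row and column $0$; after processing all such $l$, row $m$ and column $m$ have become $\vct e_0^{\T}$ and $\vct e_0$. At this point the coordinates $0$ and $m$ carry exactly the edge $\{0,m\}$ and are completely decoupled from the remaining $n-2$ coordinates.

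Finally I apply the induction hypothesis to the submatrix $\mat B''$ of the current matrix on the index set $\{0,\dots,n-1\}\setminus\{0,m\}$, obtaining an upper triangular $\mat A''$ over those indices, and I extend $\mat A''$ to an $n\times n$ matrix acting as the identity on the coordinates $0$ and $m$. This extension remains upper triangular (an identity column contributes no entry below the diagonal) and invertible; and since the current $\mat B$ has no nonzero entry joining $\{0,m\}$ to its complement, conjugating by the extended matrix keeps the $\{0,m\}$-block equal to $\left(\begin{smallmatrix}0&1\\1&0\end{smallmatrix}\right)$ and carries out the $\mat A''$-congruence on the complement, so the result is reduced. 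The one delicate point is the ordering bookkeeping: one must verify at each stage that every transvection used has the form $[pq]$ with $p<q$, and that Step 2 does not recreate the entries removed in Step 1 (and conversely). Both verifications follow immediately from Proposition \ref{tij-mult} by tracking which single row and column each step changes, so that is where the (light) care is concentrated.
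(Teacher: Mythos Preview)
Your proposal is correct and takes essentially the same approach as the paper: both reduce $\mat B$ by symmetric Gaussian elimination via congruences by upper-triangular transvections $[pq]$ with $p<q$. Your inductive presentation unrolls precisely into the iterative $\BtoB$ algorithm the paper gives (your $m$ is the paper's pivot $p$ for $j=0$, and your Steps~1 and~2 match the algorithm's Steps~a and~b), so the two arguments are the same up to packaging.
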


\begin{proof}
  The matrix $\mat B$ is the matrix of an alternating bilinear form with respect to the canonical basis $(\mat e_i)_{i=0\dots n-1}$ of $\F^n$.  The equality
  $\mat B_{\text{red}}=\mat A^{\T}\mat B \mat A$ is just the classical change of basis formula, where $\mat A$ is the matrix of the new basis. A possible construction of $\mat A$ and $\mat B_{\text{red}}$ is given by the algorithm $\BtoB$ in Figure \ref{B-red-algo}. We use basically Gaussian elimination (\textit{i.e.}
  multiplication by transvection matrices, \textit{cf.} Proposition \ref{tij-mult}) on columns and rows of matrix $\mat B$ to construct step by step the matrices $\mat A$ and $\mat B_{\text{red}}$ (see a complete example in Section \ref{complete-example}).
\end{proof}

\begin{figure}[h]
  $\mathtt{ALGORITHM\ :}$   Reduction of a matrix in $\BG$.
  
  $\mathtt{INPUT\ :}$  $\mat B$, a matrix in $\BG$.

  $\mathtt{OUTPUT\ :}$ $(\mat B',\mat A)$, where
  

    $\quad \mat B'\in\BG$ is a reduced matrix congruent to $\mat B$,

    $\quad \mat A\in\GL$ satisfies the congruence relation $\mat B'=\mat A^{\T}\mat B \mat A$.


    $\mathtt{1}\quad\ \mat B'\leftarrow \mat B;\ \mat A\leftarrow \mat I;$


    $\mathtt{2}\quad\ $/* $\mathrm{pivot}[j]=\mathtt{true}$, \emph{if $j$ has already been chosen as a pivot} */

    $\mathtt{3}\quad\ \mathtt{for}\ j= 0 \ \mathtt{to}\ n-1\ \mathtt{do}$ 

    $\mathtt{4}\quad\quad\ \mathrm{pivot}[j]\leftarrow \mathtt{false}$;

    $\mathtt{5}\quad\ \mathtt{for}\ j= 0 \ \mathtt{to}\ n-2\ \mathtt{do}$ 
  
    $\mathtt{6}\quad\quad\ \mathtt{if}\ \mathrm{pivot}[j]\ \mathtt{or}\ \mathrm{card}\{i\mid b'_{ij}=1\}=0\ \mathtt{then}$
    
    $\mathtt{7}\quad\quad\quad\ \mathtt{continue};$

    $\mathtt{8}\quad\quad\ $/*\emph{ choosing pivot} */
    
    $\mathtt{9}\quad\quad\ p\leftarrow \mathrm{min}\{ i \mid b'_{ij}=1\};$

    $\mathtt{10}\quad\quad \mathrm{pivot}[p]\leftarrow \mathtt{true};$

    $\mathtt{11}\quad\quad $/* \emph{Step a : eliminating the remaining 1's on column $j$ and line $j$} */

    $\mathtt{12}\quad\quad \mathtt{for}\ r=p + 1\ \mathtt{to}\ n-1\ \mathtt{do} $
    
    $\mathtt{13}\quad\quad\quad  \mathtt{if}\ b'_{rj}=1\ \mathtt{then} $
    
    $\mathtt{14}\quad\quad\quad\quad \mat B'\leftarrow [rp]\mat B'[pr];$

    $\mathtt{15}\quad\quad\quad\quad \mat A\leftarrow \mat A[pr];$



    $\mathtt{16}\quad\quad$/* \emph{Step b : eliminating the remaining 1's on line $p$ and column $p$} */
    
    $\mathtt{17}\quad\quad\mathtt{for}\ c=j + 1\ \mathtt{to}\ n-1\ \mathtt{do} $
    
    $\mathtt{18}\quad\quad\quad  \mathtt{if}\ b'_{pc}=1\ \mathtt{then} $

    $\mathtt{19}\quad\quad\quad\quad \mat B'\leftarrow [cj]\mat B'[jc];$

    $\mathtt{20}\quad\quad\quad\quad \mat A\leftarrow \mat A[jc];$



    $\mathtt{21}\quad \mathtt{return} (\mat B',\mat A);$

    
    \caption{ Algorithm $\BtoB$\label{B-red-algo}}
  \end{figure}

\begin{theo}\label{graph-state}
  Any graph state $\ket{G}$ can be written in the form
\begin{equation}
  \ket{G}= Z_{\vct{v}}X_{\mat{A}}Z_{\mat B_{\text{red}}}\ket{+}^{\otimes n},\label{GSred}
\end{equation}
where $\mat v\in\F^n$, $\mat A\in\GL$ is an upper triangular matrix and $\mat B_{\text{red}}$ is a reduced matrix in $\BG$.
\end{theo}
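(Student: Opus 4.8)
The plan is to start from the definition of a graph state, $\ket{G}=Z_{\mat B}\ket{+}^{\otimes n}$, and to use the congruence decomposition $\mat B_{\text{red}}=\mat A^{\T}\mat B\mat A$ provided by Lemma \ref{B-red}, where $\mat A\in\GL$ is upper triangular and $\mat B_{\text{red}}\in\BG$ is reduced. The key observation is that $\mat B=\mat A^{-\T}\mat B_{\text{red}}\mat A^{-1}$, which should be fed into the conjugation identity \eqref{conj-ZB-XA} of Proposition \ref{ZB-XA-prop}. More precisely, I would apply \eqref{conj-ZB-XA} with the matrix $\mat A$ of that proposition replaced by $\mat A^{-1}$: since $\left(\mat A^{-1}\right)^{-\T}=\mat A^{\T}$ and $\left(\mat A^{-1}\right)^{-1}=\mat A$, the identity reads
\begin{equation*}
  X_{\mat A^{-1}}Z_{\mat B_{\text{red}}}X_{\mat A^{-1}}^{-1}=Z_{q_{\mat B_{\text{red}}}(\mat A)}Z_{\mat A^{\T}\mat B_{\text{red}}\mat A}.
\end{equation*}
Hence, setting $\vct v=q_{\mat B_{\text{red}}}(\mat A)$, we obtain $Z_{\mat B}=Z_{\vct v}X_{\mat A^{-1}}Z_{\mat B_{\text{red}}}X_{\mat A^{-1}}^{-1}$ — but this is not quite the form \eqref{GSred} yet, since the matrix appearing is $\mat A^{-1}$ rather than $\mat A$, and there is a trailing $X_{\mat A^{-1}}^{-1}=X_{\mat A}$.

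To fix this, I would instead apply \eqref{conj-ZB-XA} directly with the matrix $\mat A$ itself, writing $\mat B=\left(\mat A^{-1}\right)^{\T}\mat B_{\text{red}}\mat A^{-1}$, so that $X_{\mat A}Z_{\mat B_{\text{red}}}X_{\mat A}^{-1}=Z_{q_{\mat B_{\text{red}}}(\mat A^{-1})}Z_{\mat A^{-\T}\mat B_{\text{red}}\mat A^{-1}}$; but that would require $\mat B$ to have the shape $\mat A^{-\T}\mat B_{\text{red}}\mat A^{-1}$, which indeed it does from $\mat B_{\text{red}}=\mat A^{\T}\mat B\mat A$. The cleanest route is therefore: from $\mat B_{\text{red}}=\mat A^{\T}\mat B\mat A$ deduce $\mat B=\mat A^{-\T}\mat B_{\text{red}}\mat A^{-1}$; then relabel $\mat A$ as the matrix whose inverse-transpose appears, i.e. note that \eqref{conj-ZB-XA} applied with the parameter matrix being $\mat A$ gives precisely $X_{\mat A}Z_{\mat B_{\text{red}}}X_{\mat A}^{-1}=Z_{\vct v}Z_{\mat B}$ with $\vct v=q_{\mat B_{\text{red}}}(\mat A^{-1})$. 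Consequently
\begin{equation*}
  Z_{\mat B}=Z_{\vct v}X_{\mat A}Z_{\mat B_{\text{red}}}X_{\mat A}^{-1},
\end{equation*}
where I must double-check that $\mat A$ upper triangular is compatible; since transposition and inversion of an upper triangular matrix over $\F$ stay triangular (lower after transpose), one may need to rename so the $\mat A$ in \eqref{GSred} is the upper triangular factor produced by $\BtoB$ — a bookkeeping point I would settle by tracking the exact output of Lemma \ref{B-red}.

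Finally I would apply this to the graph state: $\ket{G}=Z_{\mat B}\ket{+}^{\otimes n}=Z_{\vct v}X_{\mat A}Z_{\mat B_{\text{red}}}X_{\mat A}^{-1}\ket{+}^{\otimes n}$. The last step is to observe that $X_{\mat A}^{-1}$ acts trivially on $\ket{+}^{\otimes n}$: indeed $\ket{+}^{\otimes n}=\frac{1}{\sqrt{2^n}}\sum_{\vct x\in\F^n}\ket{\vct x}$ is the uniform superposition, and any $X_{\mat A}$ (being a permutation of basis states by \eqref{xijtij}, here $X_{\mat A}^{-1}\ket{\vct x}=\ket{\mat A^{-1}\vct x}$, and $\mat A^{-1}$ is a bijection of $\F^n$) merely permutes the summands. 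Hence $X_{\mat A}^{-1}\ket{+}^{\otimes n}=\ket{+}^{\otimes n}$, giving $\ket{G}=Z_{\vct v}X_{\mat A}Z_{\mat B_{\text{red}}}\ket{+}^{\otimes n}$, which is exactly \eqref{GSred}. The only genuine subtlety — and the main thing to get right rather than a real obstacle — is the transpose/inverse bookkeeping so that the $\mat A$ in the final formula is literally the upper triangular matrix returned by Lemma \ref{B-red}, and making sure the quadratic-form vector $\vct v$ is named consistently with \eqref{conj-ZB-XA}; I would state that correspondence explicitly and invoke Lemma \ref{B-red} and Proposition \ref{ZB-XA-prop} verbatim.
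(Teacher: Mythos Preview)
Your proposal is correct and follows essentially the same route as the paper: invoke Lemma~\ref{B-red} to write $\mat B=\mat A^{-\T}\mat B_{\text{red}}\mat A^{-1}$, apply Proposition~\ref{ZB-XA-prop} with parameter $\mat A$ to get $Z_{\mat B}=Z_{\vct v}X_{\mat A}Z_{\mat B_{\text{red}}}X_{\mat A}^{-1}$ with $\vct v=q_{\mat B_{\text{red}}}(\mat A^{-1})$, and then kill the trailing $X_{\mat A}^{-1}$. The only (minor) difference is in this last step: the paper commutes $X_{\mat A^{-1}}$ through $\h$ via \eqref{conj-XA-h} and uses that $\cnot$ circuits fix $\ket{0}^{\otimes n}$, whereas you observe directly that any $X_{\mat A'}$ permutes basis kets and hence fixes the uniform superposition $\ket{+}^{\otimes n}$ --- your argument is arguably the more direct of the two. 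Your bookkeeping worry is unfounded: the $\mat A$ appearing in $X_{\mat A}$ is literally the upper triangular matrix returned by Lemma~\ref{B-red}, so no relabelling is needed.
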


\begin{proof}
  Let $\ket{G}=Z_{\mat{B}}\h\ket{0}^{\otimes n}$ be a graph state. Using lemma \ref{B-red} we construct $\mat B_{\text{red}}$ and $\mat A$ such that
  $\mat B_{\text{red}}=\mat A^{\T}\mat B \mat A$. Using Identity \eqref{conj-ZB-XA}, one obtains $X_{\mat{A}}Z_{\mat B_{\text{red}}}X_{\mat{A}}^{-1}=Z_{q_{\mat B_{\text{red}}}(\mat A^{-1})}Z_{\mat A^{-\T}\mat B_{\text{red}}\mat A^{-1}}$.
  Hence $Z_{\mat{B}}=Z_{\vct v}X_{\mat{A}}Z_{\mat B_{\text{red}}}X_{\mat{A}}^{-1}$, where $\vct v=q_{\mat B_{\text{red}}}(\mat A^{-1})$ . So
  $\ket{G}= Z_{\vct{v}}X_{\mat{A}}Z_{\mat B_{\text{red}}}X_{\mat A^{-1}} \h\ket{0}^{\otimes n}$.

  Since Identity \eqref{conj-XA-h} holds, we obtain
  $\ket{G}= Z_{\vct{v}}X_{\mat{A}}Z_{\mat B_{\text{red}}} \h X_{\mat A^{\T}}\ket{0}^{\otimes n}$. As a $\cnot$ circuit has no effect on the ket $\ket{0}^{\otimes n}$,
  one has $\ket{G}= Z_{\vct{v}}X_{\mat{A}}Z_{\mat B_{\text{red}}} \h\ket{0}^{\otimes n}$.
  \end{proof}

  In Section \ref{complete-example}, we provide a detailed example of the use of Theorem \ref{graph-state}. Note that the $\cz$ subcircuit in the form \eqref{GSred} has depth 1 and consequently all the $\cz$ gates can be applied at the same time. This observation has a practical utility because the decoherence time remains currently  an important technical concern in the experimental quantum computers.

  The form \eqref{GSred} allows to implement a graph state $\ket{G}=Z_{\mat{B}}\h\ket{0}^{\otimes n}$ in $O\left(\frac{n^2}{\log(n)}\right)$ two-qubit gates by using the $\AtoX$ algorithm on the matrix $\mat A$. In this implementation, the two-qubit gate count is asymptotically better than the
  bound $n(n-1)/2$  resulting from a basic implementation of the $Z_{\mat{B}}$ operator. Whether or not this pretreatment  brings a real practical advantage depends, however, on the value of the constant $\mathrm{c}$ such that the two-qubit gate count remains lower than $\mathrm{c}\frac{n^2}{\log(n)}$. In Table \ref{stat}, we propose a few statistics in order to evaluate the usefulness of the form \eqref{GSred} in terms of reduction of the two-qubit gate count.
  This table was filled by using the command ``$\mathtt{./stabnf}$'' in statistics mode. The source code of the command is available at
  \href{https://github.com/marcbataille/stabilizer-circuits-normal-forms/tree/graph_states}{$\mathtt{https://github.com/marcbataille/stabilizer\text{-}circuits\text{-}normal\text{-}forms}$}. We tested different samples of 200 random graph states, up to 300 qubits. The results show a clear superiority of the form $\ket{G}= Z_{\vct{v}}X_{\mat{A}}Z_{\mat B_{\text{red}}}\ket{+}^{\otimes n}$ over the classical form $\ket{G}=Z_{\mat B}\ket{+}^{\otimes n}$ in most samples. More precisely, let us define the \textit{density} $d$ of a graph state as being the quotient $\frac{\ell}{n(n-1)/2}$, where $\ell$ is the number of edges and $n$ the number of qubits. We observe that our method is efficient in all cases if the density of the graph state is greater than $0.6$. For a graph state of small density ($d\leq 0.2$), other methods have to be developed. 

  \begin{table}[h]
    \begin{center}
    \begin{tabular}{|c|c|c|c|c|c|}\hline
      \diagbox{$n$}{$\ell$}&$0.2\times max$&$0.4\times max$&$0.6\times max$&$0.8\times max$&$max=n(n-1)/2$\\\hline
      $5$&0\%&0\%&1\%&21\%&20\%\\\hline
      $10$&0\%&0\%&20\%&41\%&33\%\\\hline
      $20$&0\%&0\%&31\%&49\%&54\%\\\hline
      $50$&0\%&12\%&41\%&56\%&62\%\\\hline
      $100$&0\%&23\%&48\%&61\%&72\%\\\hline
      $200$&0\%&31\%&54\%&66\%&74\%\\\hline
      $300$&0\%&37\%&58\%&68\%&79\%\\\hline
    \end{tabular}
    \end{center}
\caption{Gain obtained on the implementation of a graph state $\ket{G}$ by using the form \eqref{GSred} ($\ket{G}= Z_{\vct{v}}X_{\mat{A}}Z_{\mat B_{\text{red}}}\ket{+}^{\otimes n}$) instead of the form $\ket{G}=Z_{\mat B}\ket{+}^{\otimes n}$. In each cell $(n,\ell)$ of the table, we computed the average two-qubit gate gain on a sample of 200 random graph states, where each graph has $n$ vertices and $\ell$ edges. Let $\ell'$ be the number of two-qubit gates in the circuit implementing the form \eqref{GSred}. The gain is defined as the difference $\ell-\ell'$ if $\ell>\ell'$ and 0 otherwise. It is expressed in percentage of $\ell$. \label{stat}}
    \end{table}

  \subsection{A complete example\label{complete-example}}
  We detail a complete  example illustrating Theorem \ref{graph-state} and Algorithm $\BtoB$.
  
Let $n=7$ and $\ket{G}=Z_{03}Z_{05}Z_{12}Z_{13}Z_{16}Z_{24}Z_{25}Z_{34}Z_{56}\ket{+}^{\otimes 7}$.\smallskip

We have $\ket{G}=Z_{\mat{B}}\ket{+}^{\otimes 7}$, where $\mat B=\begin{bmatrix}
  0&0&0&1&0&1&0\\
  0&0&1&1&0&0&1\\
  0&1&0&0&1&1&0\\
  1&1&0&0&1&0&0\\
  0&0&1&1&0&0&0\\
  1&0&1&0&0&0&1\\
  0&1&0&0&0&1&0\\
\end{bmatrix}$.
\medskip

\textbf{Stage 1 :} computing matrices $\mat A$ and $\mat B_{\text{red}}$.\smallskip

We apply Algorithm $\BtoB$ to the matrix $\mat B$.

After initializing each entry of the table \emph{pivot} to false, we describe step by step the execution of the main loop (lines $\mathtt{5}$ to $\mathtt{20}$ in Figure \ref{B-red-algo}).\medskip

  \framebox{$j=0$}\quad
  Choosing pivot : $p\leftarrow 3$; $\mathrm{pivot}[3]\leftarrow \mathtt{true};$\smallskip

  Step a :
$[53]\mat B[35]=
\begin{bmatrix}
  \bf{0}&\bf0&\bf0&\bf1&\bf0&\bf0&\bf0\\
  \bf0&0&1&1&0&1&1\\
  \bf0&1&0&0&1&1&0\\
  \bf1&1&0&0&1&0&0\\
  \bf0&0&1&1&0&1&0\\
  \bf0&1&1&0&1&0&1\\
  \bf0&1&0&0&0&1&0\\
\end{bmatrix}$

Step b : 
$[40][10][53]\mat B[35][01][04]=
\begin{bmatrix}
  \bf0&\bf0&\bf0&\bf1&\bf0&\bf0&\bf0\\
  \bf0&0&1&\bf0&0&1&1\\
  \bf0&1&0&\bf0&1&1&0\\
  \bf1&\bf0&\bf0&\bf0&\bf0&\bf0&\bf0\\
  \bf0&0&1&\bf0&0&1&0\\
  \bf0&1&1&\bf0&1&0&1\\
  \bf0&1&0&\bf0&0&1&0\\
\end{bmatrix}$
\medskip

\framebox{$j=1$}\quad 
Choosing pivot : $p\leftarrow 2$; $\mathrm{pivot}[2]\leftarrow \mathtt{true}$;\smallskip

Step a :
$[62][52][40][10][53]\mat B[35][01][04][25][26]=
\begin{bmatrix}
 \bf0&\bf0&\bf0&\bf1&\bf0&\bf0&\bf0\\
  \bf0&\bf0&\bf1&\bf0&\bf0&\bf0&\bf0\\
  \bf0&\bf1&0&\bf0&1&1&0\\
  \bf1&\bf0&\bf0&\bf0&\bf0&\bf0&\bf0\\
  \bf0&\bf0&1&\bf0&0&0&1\\
  \bf0&\bf0&1&\bf0&0&0&0\\
  \bf0&\bf0&0&\bf0&1&0&0\\
\end{bmatrix}$

Step b : 
$[51][41][62][52][40][10][53]\mat B[35][01][04][25][26][14][15]=
\begin{bmatrix}
 \bf0&\bf0&\bf0&\bf1&\bf0&\bf0&\bf0\\
  \bf0&\bf0&\bf1&\bf0&\bf0&\bf0&\bf0\\
  \bf0&\bf1&\bf0&\bf0&\bf0&\bf0&\bf0\\
  \bf1&\bf0&\bf0&\bf0&\bf0&\bf0&\bf0\\
  \bf0&\bf0&\bf0&\bf0&0&0&1\\
  \bf0&\bf0&\bf0&\bf0&0&0&0\\
  \bf0&\bf0&\bf0&\bf0&1&0&0\\
\end{bmatrix}$
\medskip

\framebox{$j=2$}\quad $\mathrm{pivot}[2]=\mathtt{true}$, so $\mathtt{continue}$ 
\medskip

\framebox{$j=3$}\quad  $\mathrm{pivot}[3]=\mathtt{true}$, so  $\mathtt{continue}$ 
\medskip

\framebox{$j=4$}\quad Choosing pivot :  $p\leftarrow 6$; $\mathrm{pivot}[6]\leftarrow \mathtt{true}$;\smallskip

Step a : $\mat B'$ remains unchanged

Step b : $\mat B'$ remains unchanged
\medskip

\framebox{$j=5$}\quad  null column, so $\mathtt{continue}$
\medskip

$\mathtt{return} (\mat B',\mat A)$, where

$\mat B'=\mat B_{\text{red}}=
\begin{bmatrix}
 \bf0&\bf0&\bf0&\bf1&\bf0&\bf0&\bf0\\
  \bf0&\bf0&\bf1&\bf0&\bf0&\bf0&\bf0\\
  \bf0&\bf1&\bf0&\bf0&\bf0&\bf0&\bf0\\
  \bf1&\bf0&\bf0&\bf0&\bf0&\bf0&\bf0\\
  \bf0&\bf0&\bf0&\bf0&\bf0&\bf0&\bf1\\
  \bf0&\bf0&\bf0&\bf0&\bf0&\bf0&\bf0\\
  \bf0&\bf0&\bf0&\bf0&\bf1&\bf0&\bf0\\
\end{bmatrix}$,

and $\mat A=[35][01][04][25][26][14][15]=
\begin{bmatrix}
  1&1&0&0&0&1&0\\
  0&1&0&0&1&1&0\\
  0&0&1&0&0&1&1\\
  0&0&0&1&0&1&0\\
  0&0&0&0&1&0&0\\
  0&0&0&0&0&1&0\\
  0&0&0&0&0&0&1\\
\end{bmatrix}$.
\medskip

\textbf{Stage 2} : computing the Pauli part $Z_{\vct{v}}$, where $\vct v=q_{\mat B_{\text{red}}}(\mat A^{-1})$.

The quadratic form $q_{\mat B_{\text{red}}}$ is defined by :

$q_{\mat B_{\text{red}}}([x_0,x_1,x_2,x_3,x_4,x_5,x_6]^t)=x_0x_3\oplus x_1x_2\oplus x_4x_6$,

and $\mat A^{-1}=[15][14][26][25][04][01][35]=\begin{bmatrix}
  1&1&0&0&1&0&0\\
  0&1&0&0&1&1&0\\
  0&0&1&0&0&1&1\\
  0&0&0&1&0&1&0\\
  0&0&0&0&1&0&0\\
  0&0&0&0&0&1&0\\
  0&0&0&0&0&0&1\\
\end{bmatrix}$.

Hence $q_{\mat B_{\text{red}}}(\mat A^{-1})=[0,0,0,0,0,1,0]^t$, so $Z_{\vct{v}}=Z_5$.
\medskip

\textbf{Stage 3}: applying the $\AtoX$ algorithm to matrix $\mat A$.

This yields $\mat A=[35][25][26][14][01][15]$.
\medskip

\textbf{Stage 4} : conclusion.

We deduced that $\ket{G}=Z_5X_{35}X_{25}X_{26}X_{14}X_{01}X_{15}Z_{03}Z_{12}Z_{46}\ket{+}^{\otimes 7}$

\subsection{Implementation of  graph states in the IBM quantum computers}

We deal with the case of a concrete implementation of graph states in a real-life quantum machine. Our intention is to show the practical usefulness that can have a pretreatment of the circuit based on Theorem \ref{graph-state}, in terms of reduction of the \emph{native} gate count in the \emph{compiled} circuit.
We implemented in the publicly available 5-qubit ibmq\_belem device
(\url{https://quantum-computing.ibm.com/}) the complete graph state
\begin{equation}
  \ket{K_5}=Z_{01}Z_{02}Z_{03}Z_{04}Z_{12}Z_{13}Z_{14}Z_{23}Z_{24}Z_{34}\ket{+}^{\otimes 5}.\label{gfull}
\end{equation}
This graph-state is of particular interest because it is LC (Local Clifford) equivalent, and thus SLOCC equivalent, to the entangled state
$\ket{\mathtt{GHZ}}_5=\frac{1}{\sqrt 2}(\ket{00000}+\ket{11111})$ (see \cite{1990GHZ} for the first introduction of the $\ket{\mathtt{GHZ}}$ state and \cite[Section 4.1]{2006HD} for a proof of the equivalence). To write $\ket{K_5}$ in the form \eqref{GSred}, we simply use our command \texttt{./stabnf} in manual mode and obtain :
 \begin{equation}
   \ket{K_5}= Z_2 Z_3X_{34}X_{23}X_{12}X_{02}X_{24}X_{23}Z_{01}Z_{23}\ket{+}^{\otimes 5}.\label{gfull-red}
 \end{equation}
 Observe that the form \ref{gfull-red} contains only 8 two-qubit gates comparing to the 10 $\cz$ gates of the form \ref{gfull}, which is a substantial reduction of 20\%.
 But what about the reduction if we consider the circuit, consisting exclusively of native gates, that is actually implemented in the quantum computer ? Is it still significant ? In the IBM quantum devices, the $\cz$ gate
 is not native  and is simulated thanks to Identity \eqref{zijxij}. The Hadamard gate is implemented from the $R_z(\pi/2)$ and $\sqrt{\XX}$ gates, since
 $ \HH=\ee^{\ii\frac{\pi}{4}}R_z(\pi/2)\sqrt{\XX}R_z(\pi/2)$, where $\sqrt{\XX}=\dfrac12\begin{bmatrix}1+\ii&1-\ii\\1-\ii&1+\ii\end{bmatrix}$ and
   $R_z(\theta)=\begin{bmatrix}\ee^{-\ii\frac{\theta}{2}}&0\\
     0&\ee^{\ii\frac{\theta}{2}}
   \end{bmatrix}$. Moreover full connectivity is not achieved and the direct connections allowed between two qubits are given by a graph. The graph of the 5-qubit ibmq\_belem device is $\{\{1,0\},\{1,2\},\{1,3\},\{3,4\}\}$.
   So, to implement a $\cnot$ gate between qubits without direct connection (\textit{e.g.} qubits 2 and 3), it is necessary to simulate it from the native $\cnot$ gates using methods we described in a previous work \cite[Section 3]{2020B}. Due to its similarities to the compilation process in classical computing, the rewriting process that transforms an input circuit with measurements into a native gate circuit giving statistically the same measurement results, is called \textit{transpilation} on the IBM quantum computing website. \medskip

   The quantum circuits below were produced using the publicly available IBM Quantum Composer \href{https://quantum-computing.ibm.com/}{https://quantum-computing.ibm.com/}. First, we present the circuits (before and after transpilation) in the case of an implementation of $\ket{K_5}$
   corresponding to the form \ref{gfull}.  
  
   \bigskip
   
   
   $\mathtt{INPUT}:\ket{K_5}=Z_{\mat{B}}\ket{+}^{\otimes 5}=Z_{01}Z_{02}Z_{03}Z_{04}Z_{12}Z_{13}Z_{14}Z_{23}Z_{24}Z_{34}\ket{+}^{\otimes 5}$
   
   \includegraphics[scale=0.4, viewport=0cm 0cm 30.5cm 11cm, clip=true]{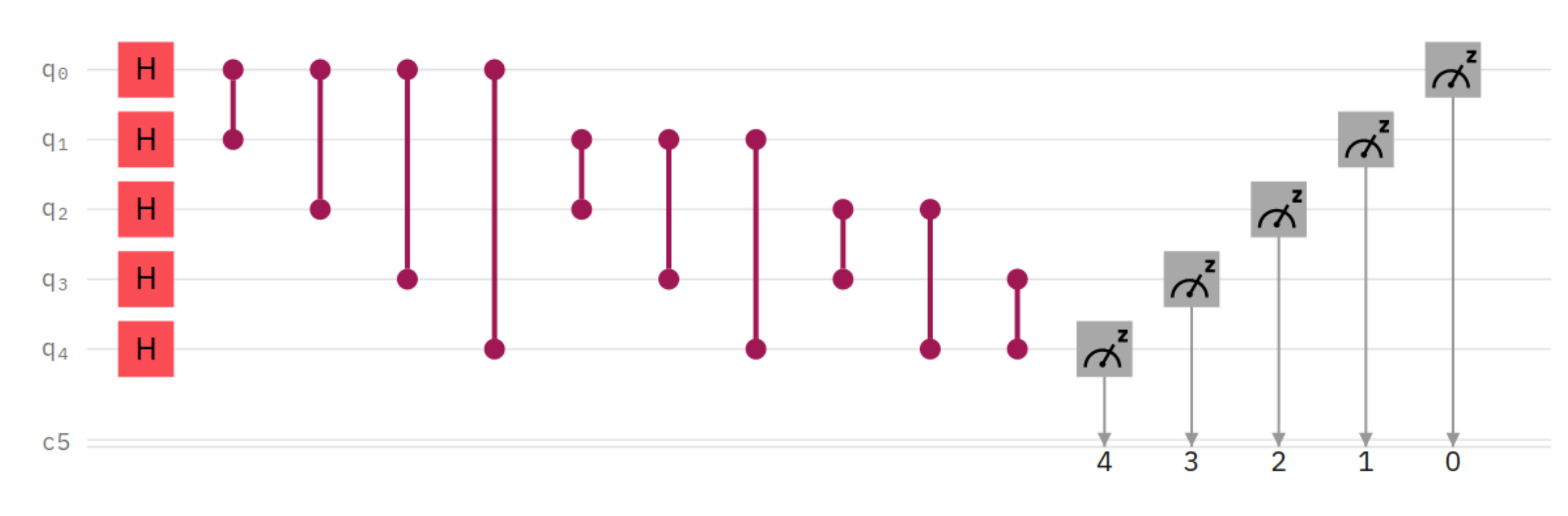}\raisebox{2.1cm}{$\qquad\xrightarrow{transpilation}$}
      
   $\mathtt{OUTPUT} : $
   
   \includegraphics[scale=0.4,viewport=0cm 0cm 33.9cm 11cm, clip=true]{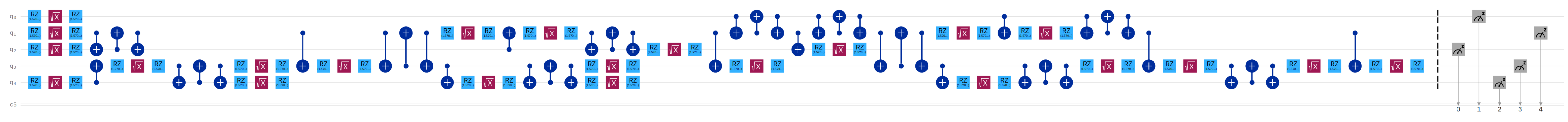}\raisebox{2.45cm}{$\quad\dots$}

   \raisebox{2.45cm}{$\dots\quad$}\includegraphics[scale=0.4,viewport=33.9cm 0cm 67.3cm 11cm, clip=true]{circuit_ghz5_full_trans}\raisebox{2.45cm}{$\quad\dots$}
   
   \raisebox{2.45cm}{$\dots\quad$}\includegraphics[scale=0.4,viewport=67.3cm 0cm 100.9cm 11cm, clip=true]{circuit_ghz5_full_trans}\raisebox{2.45cm}{$\quad\dots$}

   \raisebox{2.45cm}{$\dots\quad$}\includegraphics[scale=0.4,viewport=100.9cm 0cm 133cm 11cm, clip=true]{circuit_ghz5_full_trans}

   \medskip
   
   We remark that the transpiled circuit based on the form \ref{gfull} contains 43 $\cnot$ gates and 69 single qubit gates.
   
   Then, we show the circuits (before and after transpilation) implementing the same graph state $\ket{K_5}$ written in the form \ref{gfull-red}.\medskip

   $\mathtt{INPUT} : \ket{K_5}= Z_{\vct{v}}X_{\mat{A}}Z_{B_{\text{red}}}\ket{+}^{\otimes 5}= Z_2 Z_3X_{34}X_{23}X_{12}X_{02}X_{24}X_{23}Z_{01}Z_{23}\ket{+}^{\otimes 5}$

   \includegraphics[scale=0.4, viewport=0cm 0cm 29cm 11cm, clip=true]{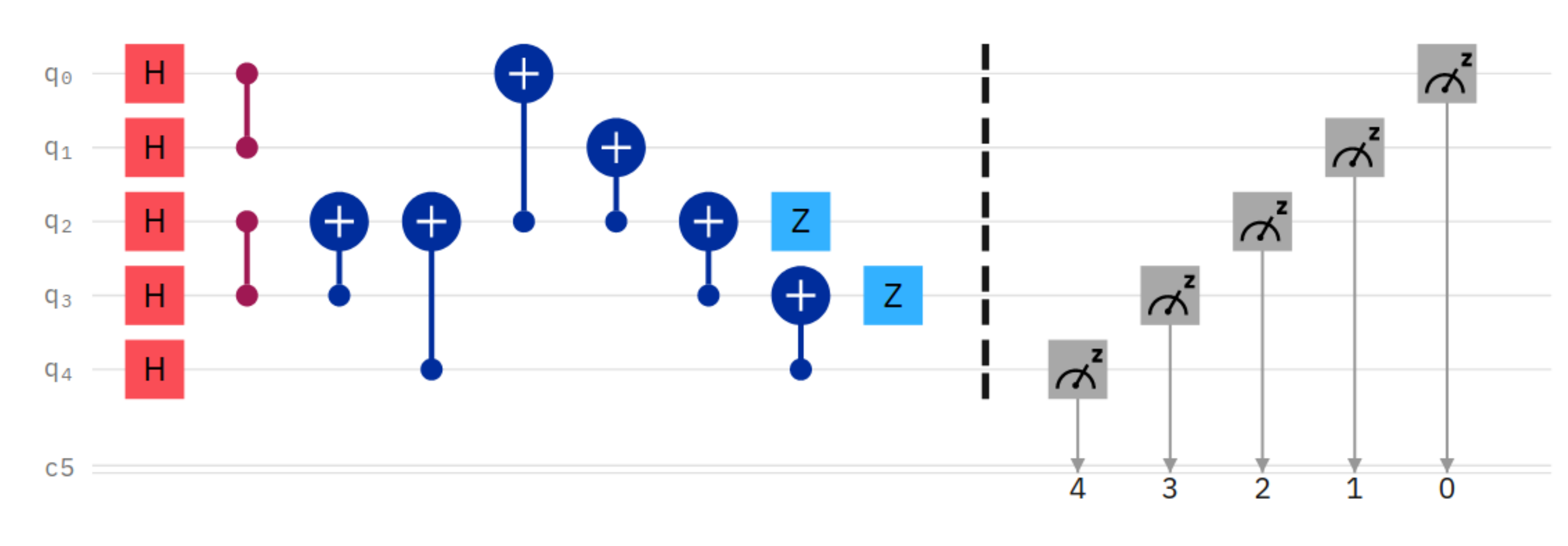}\raisebox{2.1 cm}{$\qquad\xrightarrow{transpilation}$}

   $\mathtt{OUTPUT} : $
   
   \includegraphics[scale=0.4, viewport=0cm 0cm 34cm 11cm, clip=true]{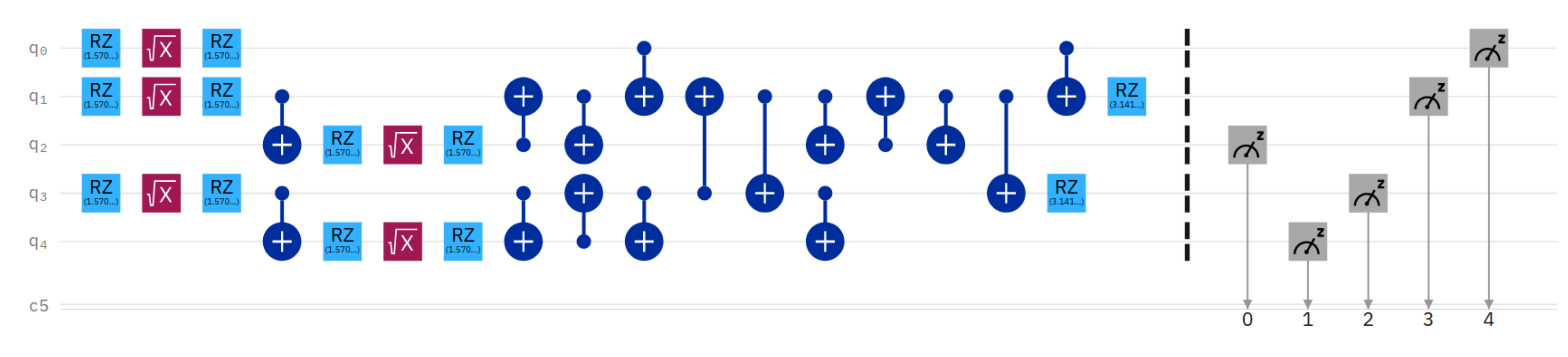}\raisebox{2.45cm}{$\quad\dots$}

   \includegraphics[scale=0.4, viewport=34cm 0cm 50cm 11cm, clip=true]{circuit_ghz5_full_czred_trans}
   \medskip
   
   In this case, the transpiled circuit contains only 16 $\cnot$ gates and  17 single gates. So, using our method, we obtain a reduction of 63\% of the two-qubit gate count, comparing to the naive implementation based on the form \ref{gfull-red}.

   \begin{table}[h]
     \begin{center}
       \footnotesize{
         \begin{tabular}{|c|c|c|c||c|c|}\hline
           \multicolumn{6}{|c|}{Implementation in the 5-qubit ibmq\_belem device}\\\hline
           \multirow{2}*{Ref.}&\multicolumn{3}{c||}{INPUT}&\multicolumn{2}{c|}{OUTPUT}\\ \cline{2-6}
                              &Circuit&Count&Gain&{Count}&Gain\\ \hline\hline
           1(a)&$Z_{01}Z_{02}Z_{03}Z_{04}Z_{12}Z_{13}Z_{14}Z_{23}Z_{24}Z_{34}\ket{+}^{\otimes 5}$&10&\multirow{2}*{20\%}&43&\multirow{2}*{63\%}\\ \cline{1-3} \cline{5-5}    
           1(b)&$Z_2 Z_3X_{34}X_{23}X_{12}X_{02}X_{24}X_{23}Z_{01}Z_{23}\ket{+}^{\otimes 5}$&8&&16&\\  \hline  \hline
           
           2(a)&$Z_{02}Z_{03}Z_{04}Z_{13}Z_{14}Z_{23}Z_{24}Z_{34}\ket{+}^{\otimes 5}$&8&\multirow{2}*{25\%}&26&\multirow{2}*{19\%}\\ \cline{1-3} \cline{5-5}    
           2(b)&$Z_3X_{34}X_{23}X_{14}X_{03}Z_{02}Z_{13}\ket{+}^{\otimes 5}$&6&&21&\\  \hline  \hline
           
           3(a)&$Z_{01}Z_{02}Z_{03}Z_{04}Z_{12}Z_{13}Z_{23}Z_{24}\ket{+}^{\otimes 5}$&8&\multirow{2}*{0\%}&35&\multirow{2}*{40\%}\\ \cline{1-3} \cline{5-5}    
           3(b)&$Z_2Z_3 X_{23} X_{24} X_{12} X_{02} X_{04} X_{23} Z_{01} Z_{23}\ket{+}^{\otimes 5}$&8&&21&\\  \hline  \hline
           
           4(a)&$Z_{01}Z_{02}Z_{04}Z_{12}Z_{13}Z_{23}Z_{34}\ket{+}^{\otimes 5}$&7&\multirow{2}*{14\%}&28&\multirow{2}*{32\%}\\ \cline{1-3} \cline{5-5}    
           4(b)&$Z_2 X_{12} X_{02} X_{14} X_{03} Z_{01} Z_{24} \ket{+}^{\otimes 5}$&6&&19&\\  \hline  \hline
           
           \multicolumn{6}{|c|}{Implementation in the 15-qubit ibmq\_melbourne device}\\\hline\hline
           5(a)&$Z_{02}Z_{03}Z_{04}Z_{13}Z_{14}Z_{15}Z_{23}Z_{25}Z_{26}Z_{34}Z_{35}Z_{45}Z_{46}Z_{56}\ket{+}^{\otimes 7}$&14&\multirow{2}*{21\%}&41&\multirow{2}*{22\%}\\ \cline{1-3} \cline{5-5}    
           5(b)&$Z_3 X_{34} X_{23} X_{35} X_{16} X_{06} X_{04} X_{03} Z_{02} Z_{13} Z_{46} \ket{+}^{\otimes 7}$&11&&32&\\  \hline  \hline
           
           6(a)&$Z_{03}Z_{05}Z_{12}Z_{13}Z_{16}Z_{24}Z_{25}Z_{34}Z_{56}\ket{+}^{\otimes 7}$&9&\multirow{2}*{0\%}&33&\multirow{2}*{18\%}\\ \cline{1-3} \cline{5-5}    
           6(b)&$ Z_5X_{35}X_{25}X_{26}X_{14}X_{01}X_{15}Z_{03}Z_{12}Z_{46}\ket{+}^{\otimes 7}$&9&&27&\\  \hline 
         \end{tabular}
       }
       \small{\caption{Implementation of graph states in two publicly available quantum computers. Gains obtained by form
         (b) : $Z_{\vct{v}}X_{\mat{A}}Z_{\mat B_{\text{red}}}\ket{+}^{\otimes n}$ over form (a) : $Z_{\mat{B}}\ket{+}^{\otimes n}$ (two-qubit gate count) before transpilation (INPUT) and after transpilation (OUTPUT).\label{IBMtable}}}
   \end{center}
      \end{table}

        In Table \ref{IBMtable}, we present the gains obtained, before and after transpilation, by using the form $Z_{\vct{v}}X_{\mat{A}}Z_{\mat B_{\text{red}}}\ket{+}^{\otimes n}$ of a few 5-qubit graph states (in the 5-qubit ibmq\_belem device) and 7-qubit graph states (in the 15-qubit ibmq\_melbourne device). Again, we observe a significant gain on the transpiled circuit. Moreover the gain after transpilation is often higher than the gain before transpilation.  Roughly, this is due to hardware reasons (graph of the qubit network, native gates) and to software reasons (how the compiler works) but this observation deserves certainly further analysis. Actually, a complete analysis should take in account the detailed technical specifications of the device as well as the source code of the compiler, which is beyond the scope of this paper. 
   Although our experiment is based on a few circuits implemented in some particular quantum computers, the results indicate that Theorem \ref{graph-state} can have practical useful applications, which was our initial purpose.

     \section{Conclusion and  future work}
Gottesman proved in his PhD thesis that any unitary matrix in the Clifford group is uniquely defined, up to a global phase, by its action by conjugation on the Pauli gates $ X_i$ and $ Z_i$ \cite[pp.41,42]{1997G}. This central statement of Gottesman stabilizer formalism can be used to compute normal forms for $n$-qubit stabilizer circuits via the symplectic group over $\F$ in dimension $2n$ (\textit{e.g.} \cite{2004AG,2018MR}).
In this paper we showed that it is possible to compute normal forms in polynomial time without using this formalism. We proposed a new method based on induction and on simple conjugation rules in the Clifford group.
The reader who is used to work with the symplectic group will notice that our induction process can also be applied inside this group, giving rise to a decomposition of type $\mat M_{\sigma}\begin{bmatrix}\mat I&\mat D\\ \mat 0&\mat I\end{bmatrix}  \begin{bmatrix}\mat I&\mat 0\\\mat B&\mat I\end{bmatrix}\begin{bmatrix}\mat A&\mat 0\\\mat 0&\mat A^{-\T}\end{bmatrix}$ for the symplectic matrix associated to the GenPZX form,
where $\mat B$ (resp. $\mat D$) is a symmetric matrix corresponding to $ P_{\vct{b}}Z_{\mat{B}}$ (resp. $ P_{\vct{d}}Z_{\mat{D}}$), $\mat A\in\GL$ is the invertible matrix corresponding to the $\cnot$ sub-circuit $X_{\mat{A}}$, and $\mat M_{\sigma}$ is a permutation matrix in dimension $2n$ corresponding to a circuit of Hadamard gates.
\medskip

In the NISQ era (Noisy Intermediate-Scale Quantum), noise in quantum gates strongly limits the reliability of quantum circuits and is currently a major technical concern. Developing optimization algorithms and heuristics to reduce the gate count in circuits is one of the solutions to improve reliability. In this article,  we proposed algorithms to reduce circuits implementing an important class of quantum states, namely the graph states, which are local Clifford equivalent to stabilizer states.
We realised a few experimental tests on quantum computers that highlight the utility of a pretreatment based on our algorithms to reduce the gate count in the compiled circuit implementing a graph state. We believe that the field of quantum circuits compilation is yet in its infancy but it will play an increasing significant role due to the current quick development of experimental quantum computers. We will continue to investigate reduction techniques related to the compilation of quantum circuits in future works.


\section{Acknowledgements}
The author acknowledges the use of the IBM Quantum Experience \href{https://quantum-computing.ibm.com/}{https://quantum-computing.ibm.com/}. The views
expressed are those of the author and do not reflect the official policy or position of IBM or
the IBM Quantum Experience team.

\bibliographystyle{plain}
\bibliography{biblio_MB}

\end{document}